\documentclass{article}
\usepackage[a-2b]{pdfx}
\usepackage{microtype}
\usepackage{graphicx}
\usepackage{subcaption}
\usepackage{booktabs} 
\usepackage{enumitem}
\usepackage{verbatim}
\usepackage{hyperref}
\usepackage{amsmath,amssymb,amsfonts,mathtools,authblk}
\usepackage{algorithmic}
\usepackage{graphicx}
\usepackage{textcomp}
\usepackage{xcolor}
\usepackage{enumitem}
\usepackage{makecell}
\usepackage{booktabs}

\usepackage{microtype}
\usepackage{graphicx}
\usepackage{subcaption}
\usepackage{booktabs} 
\usepackage{comment}
\usepackage{wrapfig}
\usepackage[most]{tcolorbox}
\usepackage{xcolor}
\usepackage{enumitem}

\def\BibTeX{{\rm B\kern-.05em{\sc i\kern-.025em b}\kern-.08em
    T\kern-.1667em\lower.7ex\hbox{E}\kern-.125emX}}

\newenvironment{Proof}[1]{\medskip\par\noindent{\bf Proof:\,}\,#1}{{\mbox{\,$\blacksquare$}\par}}

\usepackage{array}
\newcolumntype{L}{>{\centering\arraybackslash}m{2.5cm}}
\newcolumntype{M}{>{\centering\arraybackslash}m{3.3cm}}

\newcommand{\mstd}[2]{
  $#1\!\pm\!{\scriptstyle #2}$%
}

\usepackage[preprint]{icml2026}

\usepackage{amsmath}
\usepackage{amssymb}
\usepackage{mathtools}
\usepackage{amsthm}

\usepackage[capitalize,noabbrev]{cleveref}

\theoremstyle{plain}

\newtheorem{theorem}{Theorem}[section]

\newtheorem{lemma}[theorem]{Lemma}
\newtheorem{corollary}[theorem]{Corollary}
\theoremstyle{definition}
\newtheorem{definition}[theorem]{Definition}

\newtheorem{remark}[theorem]{Remark}
\newtheorem{claim}[theorem]{\textbf{Claim}}

\newcommand{\eps}{\varepsilon}
\newcommand{\Ppub}{\bar{P}_{pub}}
\newcommand{\Ppriv}{\bar{P}_{prv}}
\newcommand{\noise}{P_{nse}}
\newcommand{\noisei}{P_{nse}^{[i]}}
\newcommand{\Pprivi}{\bar{P}_{prv}^{[i]}}
\newcommand{\Ppubi}{\bar{P}_{pub}^{[i]}}

\renewcommand{\paragraph}[1]{\noindent\textbf{#1}}

\renewcommand{\algorithmicrequire}{\textbf{Input:}}
\renewcommand{\algorithmicensure}{\textbf{Output:}}

\makeatletter
\renewcommand{\subsubsection}{\@startsection{subsubsection}{3}{\z@}%
    {1.0ex plus 0.5ex minus 0.2ex}%
    {0.5ex plus 0.1ex}%
    {\normalfont\normalsize\itshape}}
\makeatother

\usepackage[textsize=tiny]{todonotes}

\icmltitlerunning{Optimal Domain-Aware Privacy Mechanisms for Synthetic Data Generation}

\begin{document}

\twocolumn[
  \icmltitle{Optimal Domain-Aware Privacy Mechanisms for Synthetic Data Generation}



  \icmlsetsymbol{equal}{*}

  \begin{icmlauthorlist}
    \icmlauthor{Sajani Vithana}{yyy}
    \icmlauthor{Sangwon Jung}{comp}
    \icmlauthor{Haoyang Hu}{sch}
    \icmlauthor{Viveck R. Cadambe}{equal,sch}
    \icmlauthor{Flavio P. Calmon}{equal,yyy}
    \icmlauthor{Haewon Jeong}{equal,ucsb,ins}
  \end{icmlauthorlist}

  \icmlaffiliation{yyy}{School of Engineering and Applied Sciences, Harvard University, Allston, MA, USA.}
  \icmlaffiliation{comp}{Trillion Labs, Seoul, South Korea.}
  \icmlaffiliation{sch}{School of Electrical and Computer Engineering, Georgia Institute of Technology, Atlanta, GA, USA.}
  \icmlaffiliation{ucsb}{Department of Electrical and Computer Engineering, University of California, Santa Barbara, USA.}
  \icmlaffiliation{ins}{Flatiron Institute, New York, USA}

  \icmlcorrespondingauthor{Sajani Vithana}{sajani@seas.harvard.edu}

  \icmlkeywords{Machine Learning, ICML}

  \vskip 0.3in
]



\printAffiliationsAndNotice{*Senior authors in alphabetical order.}



\begin{abstract}

Differential privacy (DP) imposes fundamental trade-offs between privacy and statistical fidelity in synthetic data generation.
While access to public data has been shown to improve these trade-offs empirically, existing approaches use public data only indirectly, through pre-processing (e.g., using pre-trained generative models) or post-processing steps (e.g., matching target statistics estimated from public datasets), while relying on domain-agnostic DP mechanisms.
In this work, we lay the theoretical framework to study the principled incorporation of public data into DP mechanisms themselves.
We consider normalized histograms as distribution estimators and characterize the asymptotically optimal domain-aware privacy mechanism within a specific class of DP mechanisms. We introduce \textsc{PubMix}, a public-data-aware DP mechanism that can be used in histogram-based data synthesis pipelines.
Our experiments demonstrate that \textsc{PubMix} significantly improves synthetic data generation quality compared to domain-agnostic privacy mechanisms.
  
\end{abstract}

\section{Introduction}

Modern data-driven systems require access to large, diverse datasets in order to train, evaluate, and validate models. In many of these settings, the underlying data consists of sensitive records such as patient histories or financial transactions that cannot be directly shared or reused due to privacy regulations. Differentially private (DP) synthetic data generation offers a solution to this challenge by constructing a synthetic dataset that preserves the statistical properties of the original private data with formal DP guarantees \cite{main_survey}. The resulting synthetic dataset can be freely shared and reused in downstream tasks with no privacy leakage.

To improve utility of DP synthetic data generation, several works have explored leveraging auxiliary public data from the same domain, typically incorporating it through pre-processing or post-processing steps. Examples include pre-training generative models on public data \cite{pretrain}, using public data to inform which statistics to measure \cite{joint_selection,pub1}, and post-processing synthetic data to align with public data constraints \cite{post_process,hao_paper}. In all these approaches, however, the underlying DP mechanism (how randomness is injected to ensure privacy) remains domain-agnostic, typically based on Gaussian or Laplace noise \cite{dp1}. 

We propose a fundamentally different paradigm for incorporating public data: using it directly to design how randomness is injected to achieve privacy in data synthesis. By constructing domain-aware noise distributions informed by public data, we obtain DP mechanisms that achieve improved privacy–utility trade-offs compared to domain-agnostic privacy mechanisms.

DP synthetic data generation typically involves estimating the distribution of a private dataset, perturbing the estimate to ensure privacy, and sampling synthetic data from the perturbed distribution \cite{Nist,GAN1,vae1,mckenna2022aim,graph2,simple_practical_recipe,privbayes}. Standard domain-agnostic DP mechanisms, such as Laplace or Gaussian noise, perturb the estimated private distribution in arbitrary directions on the probability simplex. In contrast, we show that public data can be leveraged to identify optimal directions along which the private distribution should be perturbed. 

In this paper, we develop a theoretical framework for analyzing a linear mixing-based class of domain-aware DP mechanisms for private data synthesis, considering normalized histograms as private distribution estimators. In this canonical setting, we characterize the asymptotically optimal noise distribution, demonstrating that it can be approximated by a \emph{floor-raised} \cite{gallager} version of the public distribution. From a theoretical stand point, these insights serve as a foundational step toward developing domain-aware DP mechanisms for complex distribution estimators used in synthetic data generation (see Sec.~\ref{beyond}). From a practical perspective, our analysis directly informs the design of \textsc{PubMix}, a domain-aware DP mechanism for histogram sampling that can replace standard Gaussian or Laplace mechanisms in existing histogram-based synthesis pipelines \cite{api_image,api_text,pefl,pce} for improved utility (see Sec.~\ref{sec:realworld}). Our key contributions are:
\begin{itemize}[leftmargin=7pt]
    \item We introduce a principled framework for leveraging public data to construct domain-aware noise distributions that improve the privacy-utility trade-off in DP data synthesis compared to domain-agnostic privacy mechanisms.
    \item We provide an information-theoretic analysis of the privacy–utility trade-off in DP histogram sampling within a class of mechanisms and characterize the asymptotically optimal domain-aware privacy mechanism.
    \item We propose \textsc{PubMix}, a DP histogram sampling mechanism that incorporates domain-aware noise and serves as a drop-in privacy module for histogram-sampling-based synthetic data generation.
\end{itemize}

\subsection{Related Work} 

\textbf{DP synthetic data generation} has been studied across modalities including tabular data, text, and images. Existing approaches include (i) statistical methods that privately release noisy summary statistics to fit distributional models \cite{mckenna2022aim,simple_practical_recipe,privmrf,privbayes,Nist}, and (ii) training deep generative models with DP-SGD \cite{GAN1,diffusion1,llm1}. Complementary paradigms include PATE \cite{pate1,pate2}, which trains teachers on disjoint private partitions and aggregates their noisy predictions to label public data for synthesis; Private Evolution \cite{api_image,api_text,tabpe}, which iteratively samples from (hierarchical) histograms to produce synthetic datasets; and variants of private prediction \cite{private_prediction, icl}, which synthesize text by privately aggregating next-token predictions from LLMs prompted with sensitive examples. See \cite{main_survey} for a comprehensive survey. While some methods leverage public data to reduce privacy cost \cite{joint_selection,pub1}, the underlying DP mechanisms remain largely domain-agnostic (Gaussian, Laplace, or exponential mechanisms). In contrast, we introduce the concept of domain-aware privacy mechanisms.

\textbf{DP multi-sampling} provides sample-complexity bounds for drawing independent samples from a given private distribution under DP, covering broader distribution families \cite{DP-multi-sampling}. We extend the DP multi-sampling setting by incorporating public data into the problem formulation, while restricting to a more structured setting (linear mixing-based DP mechanisms) to make public-data integration tractable. 

\textbf{Existing linear mixing methods} include variants of private prediction \cite{pp1,pp2} that serve as alternatives to existing DP training methods. Instead of privatizing the model itself, these methods enforce privacy at inference time by perturbing the model's output. In LLMs, this typically involves mixing or projecting the private model's next-token distribution toward a public distribution to limit leakage from memorized private data. While both \textsc{PubMix} and private prediction involve linear mixing and public distributions, they differ fundamentally in goal and technical design (More details in App.~\ref{related-work}).

\textbf{DP histograms} are a core primitive underlying many synthesis pipelines, with work spanning basic privatization \cite{dp_main,dp1}, hierarchical constructions \cite{histograms}, and adaptive/improved binning \cite{hist_new}. Recent private evolution methods and their federated learning extensions \cite{pefl,pefl2} further highlight the practical role of histogram sampling, demonstrating strong empirical performance across modalities when combined with foundation models. \textsc{PubMix} can be incorporated as a drop-in privacy mechanism within the histogram-based pipelines.

\section{Problem Formulation}

\begin{figure*}[t]
    \centering
    \begin{subfigure}{0.22\textwidth}
    \centering
    \includegraphics[scale=0.3]{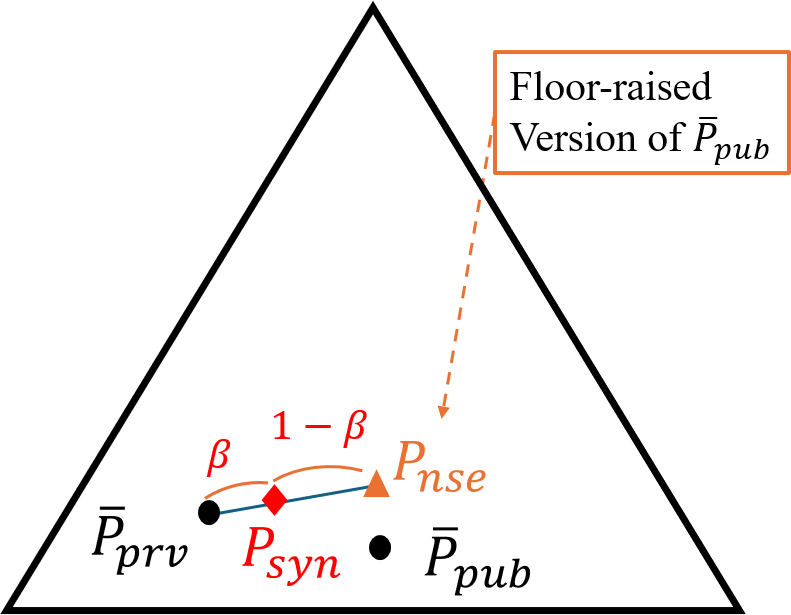}
    \caption{\textsc{PubMix}}
    \label{fig1pubmix}
    \end{subfigure}
    \begin{subfigure}{0.22\textwidth}
    \centering
    \includegraphics[scale=0.3]{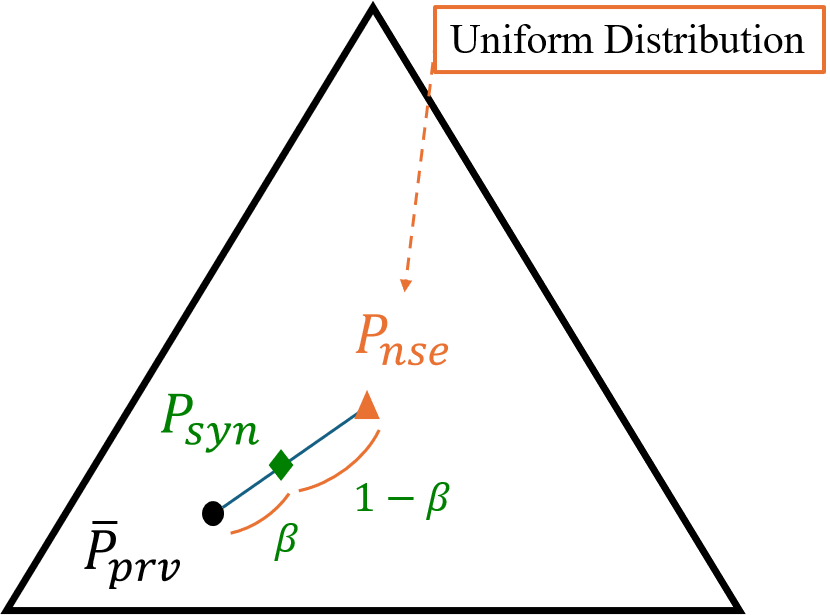}
    \caption{\textsc{LinMix}}
     \label{fig1linmix}
    \end{subfigure}
    \begin{subfigure}{0.22\textwidth}
    \centering
    \includegraphics[scale=0.3]{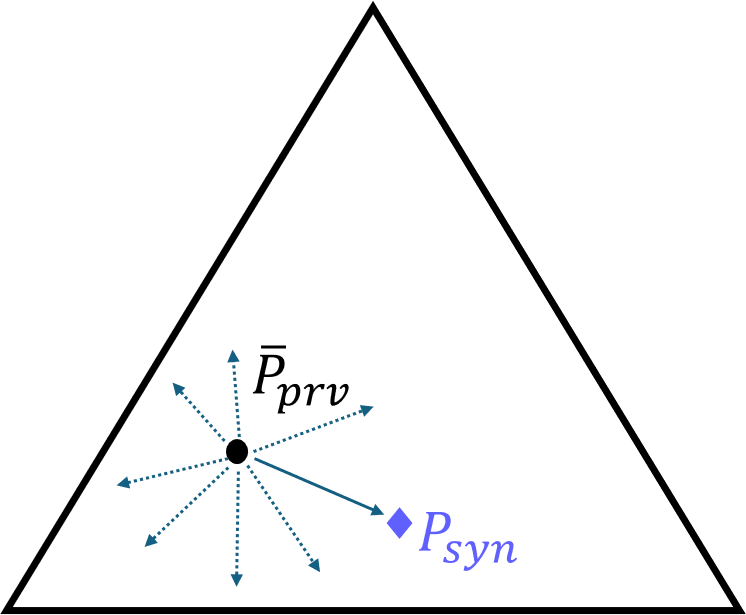}
    \caption{Gaussian Mechanism}
     \label{fig1gauss}
    \end{subfigure}
    \begin{subfigure}{0.3\textwidth}
    \centering
    \includegraphics[scale=0.245]{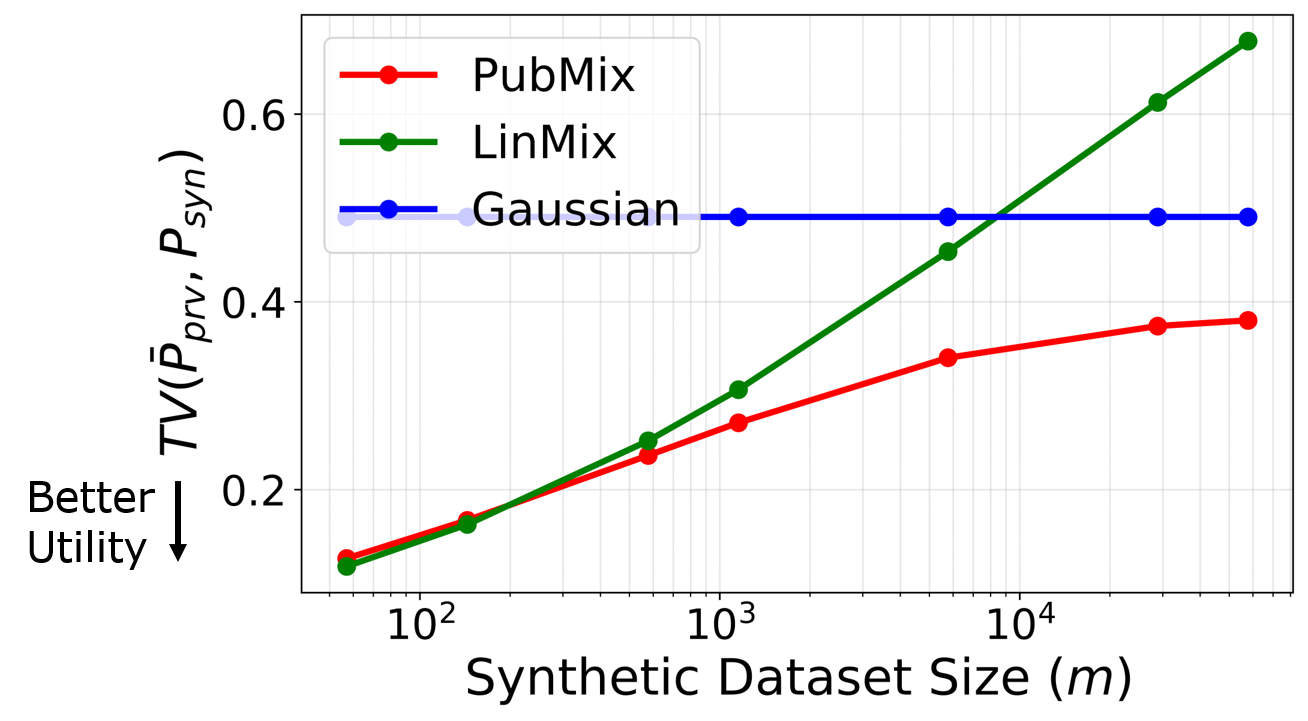}
    \caption{Comparison of $\mathrm{TV}(\Ppriv, P_{{syn}})$}
     \label{fig1tv}
    \end{subfigure}
        \caption{
    Comparison of \textsc{PubMix}, \textsc{LinMix}, and the Gaussian Mechanism for differentially private histograms. 
    (a)--(c) provide a geometric illustration of the perturbations in the probability simplex for (a) \textsc{PubMix}, (b) \textsc{LinMix}, and (c) the Gaussian mechanism. 
    (d) shows the total variation distance between $(\Ppriv, P_{{syn}})$ as a function of the number of synthetic samples $m$, for \textsc{PubMix}, \textsc{LinMix}, and the Gaussian mechanism for $n=57717$, $\eps=5$, $\delta=1/n$ based on the 2023 ACS PUMS dataset. 
}
        \label{gauss_comp}
\end{figure*}

Let $D_{prv}=\{x_1,\dotsc,x_n\}\subseteq\mathcal{X}^n$ denote a private dataset where $\mathcal{X}$ is a discrete sample space with $|\mathcal{X}|=d$. Each $x_i\in D_{prv}$ is sampled i.i.d. from an underlying distribution $P_{prv}\in\Delta_d$, where $\Delta_d$ is the probability simplex over $\mathcal{X}$. Let $D_{pub}=\{v_i\}_{i=1}^\ell\subseteq\mathcal{X}^\ell$ be a public dataset from the same domain as $D_{prv}$, sampled i.i.d from an underlying distribution $P_{pub}\in\Delta_d$. We estimate $P_{prv}$ and $P_{pub}$ using the normalized histograms of $D_{prv}$ and $D_{pub}$ over $\mathcal{X}$, denoted by $\Ppriv$ and $\Ppub$, respectively.

Our goal is to generate a synthetic dataset $D_{syn}=\{y_i\}_{i=1}^m$ consisting of $m$ samples that: (i) satisfies DP with respect to $D_{prv}$, (ii) is statistically \emph{close} to $D_{prv}$. For this, we design a synthetic data distribution $P_{syn}\in\Delta_d$, from which we draw $m$ samples i.i.d. to obtain the synthetic dataset $D_{syn}$. Specifically, we consider synthetic distributions of the form:
\begin{align}\label{structure1}
    P_{syn}=(1-\beta)\Ppriv+\beta\noise
\end{align}
where $\beta\in[0,1]$ is a mixing coefficient and $\noise\in\Delta_d$ is a noise distribution. $\beta$ and $\noise$ are globally known fixed parameters that are independent of the private data $D_{prv}$. 

Our goal is to find the optimum parameters $\beta$ and $\noise$ in \eqref{structure1} that achieves the maximum \emph{utility} while satisfying DP. When the mixing parameter $\beta=0$, we have $P_{syn}=\Ppriv$, which is ideal for utility but provides no privacy guarantee. The other extreme, $\beta=1$, yields $P_{syn}=\noise$, and the generated synthetic data are independent of the private dataset, resulting in a substantial loss of utility while ensuring perfect privacy. Accordingly, for a fixed noise distribution $\noise$ we first identify the smallest $\beta$ that satisfies DP, and then optimize $\noise$ to obtain the best achievable privacy–utility trade-off. The resulting $\noise$ determines the optimal direction along which the private distribution $\Ppriv$ should be perturbed. Here, $\noise$ will depend on the public data, as we will see shortly. First, we formally define the privacy constraint and the utility metric.
\vspace{0.1cm}

\paragraph{Privacy constraint:} We consider $(\eps,\delta)$-DP of the $m$ generated synthetic samples $D_{syn}$. Let $D_{prv}$ and $D_{prv}'$ be any two neighboring datasets that differ in only one data entry. Let $P_{syn}$ and $P_{syn}'$ be the corresponding synthetic distributions from \eqref{structure1} for any fixed $\beta\in[0,1]$ and $\noise\in\Delta_\mathcal{X}$. Then, the $(\eps,\delta)$-DP constraint is given by, 
\begin{align}\label{dp}
    \mathbb{P}(D_{syn}\in S)\leq e^\eps &\mathbb{P}(D'_{syn}\in S)+\delta,\nonumber\\
    &\forall S\subseteq\mathcal{X}^m,\ \forall D_{prv},D'_{prv}
\end{align}
where $D_{syn}$ and $D'_{syn}$ correspond to the $m$ i.i.d. samples drawn from $P_{syn}$ and $P_{syn}'$. 
\vspace{0.3cm}

\paragraph{Utility:} The utility of $D_{syn}$ is defined as the total variation (TV) distance between $\Ppriv$ and $P_{syn}$, i.e.,  $\mathrm{TV}(\Ppriv,P_{syn})$.
\vspace{0.2cm}

\paragraph{The admissible set:} To select a noise distribution $\noise$ in \eqref{structure1}, we define an \emph{admissible set} of private distributions \emph{close} to the public distribution $\Ppub$. This set is  fixed and independent of the given private data $D_{prv}$. We define the admissible set as all distributions within a $\gamma$-TV ball around $\bar{P}_{pub}$: $$\mathcal{B}_{\gamma,\bar{P}_{pub}}=\{P\in\Delta_d:\mathrm{TV}(P,\bar{P}_{pub})\le\gamma\}$$ for some proximity parameter $\gamma\in(0,1]$, independent of $D_{prv}$. The parameter $\gamma$ is chosen based on public knowledge (see Sec.~\ref{experiments}). The key motivation for this definition is that datasets drawn from the same domain induce distributions that are close to each other on the probability simplex with high probability. We therefore seek a $\noise$ that are optimized for  distributions in $\mathcal{B}_{\gamma,\Ppub}$. Concretely, as we describe next, we optimize $\noise$ considering the worst-case private distribution within $\mathcal{B}_{\gamma,\Ppub}$.\footnote{If the private distribution of $D_{prv}$ lies outside $\mathcal{B}_{\gamma,\Ppub}$ (which cannot be verified a priori), \textsc{PubMix} still satisfies DP. However, a weaker utility guarantee will be achieved compared to \eqref{opt1}.}
\vspace{0.3cm}

\paragraph{An optimization formulation for mechanism design: } Given a target $(\eps,\delta),$ a public distribution $\Ppub$, and a proximity parameter $\gamma$, our goal is to find the optimal noise distribution $\noise$ and mixing coefficient $\beta$ in \eqref{structure1} that satisfies the privacy constraint in \eqref{dp} while maximizing utility. The optimization problem we solve to obtain these parameters is given by: 
\begin{align}\label{opt1}
    &\min_{\substack{P_{nse}\in\Delta_d}}\quad\min_{\beta\in[0,1]}\quad\max_{\bar{P}_{prv}\in \mathcal{B}_{\gamma,\bar{P}_{pub}}}\mathrm{TV}\Big(\bar{P}_{prv},P_{syn}\Big)\nonumber\\ &\text{s.t.}\quad \mathbb{P}(D_{syn}\in S)\leq e^\eps \mathbb{P}(D'_{syn}\in S)+\delta,\nonumber\\
    &\qquad\qquad\qquad\forall S\subseteq\mathcal{X}^m, \forall D_{prv}, D_{prv}' 
\end{align}
We denote the parameters that solve \eqref{opt1} as $\left(\beta^*, \noise^*\right)$. \footnote{The parameters $\beta^*$ and $\noise^*$ in \textsc{PubMix} are not derived from the given private dataset. They are pre-computed parameters based on the worst case private distribution within a $\gamma$-TV ball around $\Ppub$ (see \eqref{opt1}), and are independent of the given private dataset.}

\paragraph{\textbf{\textsc{PubMix}:}} Once the parameters $\beta^*$ and $\noise^*$ are obtained, for any given $D_{prv}$, we compute $P_{syn}=(1-\beta^*)\bar{P}_{prv}+\beta^*P_{nse}^*$ using the corresponding $\bar{P}_{prv}$, and draw $m$ samples from $P_{syn}$ as private synthetic data $D_{syn}$. We call this mechanism \textsc{PubMix}, and it is outlined in Alg.~\ref{algo}. 
\vspace{0.3cm}

\paragraph{\textbf{Why use \textsc{PubMix} instead of just adding noise to histograms?}} Standard DP histogram mechanisms \cite{dp_main} add i.i.d. Gaussian or Laplace noise to each histogram bin to achieve DP. Since the noise is zero-mean and independent across bins, each bin can increase or decrease arbitrarily. When normalized to obtain a probability distribution, the perturbed histogram can move in arbitrary directions on the probability simplex relative to the non-private distribution. This domain-agnostic approach does not consider which perturbation directions optimize the privacy-utility trade-off for the given data domain (Fig.\ref{fig1gauss}).
In contrast, \textsc{PubMix} leverages public data from the same domain to identify a principled direction in which to perturb $\Ppriv$, yielding a synthetic distribution $P_{syn}$ that better preserves utility while still satisfying DP (Fig.\ref{fig1pubmix}). 

A key feature of \textsc{PubMix} is \textit{sample dependency.} Most existing DP synthesis methods incur a high privacy cost by releasing the entire privacy-protected distribution $P_{syn}$ \cite{api_text,api_image}. For instance, in existing DP histogram-based synthesis methods that add Gaussian or Laplace noise to each of the $d$ bins independently, $\mathrm{TV}(\Ppriv,P_{syn})$ increases linearly with $d$ \cite{histograms}. However, many applications require only a finite number of synthetic samples, and not the full distribution. 

We show that, when the goal is to release only $m$ synthetic samples from the histogram rather than the histogram itself, it is possible to construct a synthetic sampling distribution $P_{syn}$ that is significantly closer to $\bar{P}_{prv}$, resulting in more accurate synthetic data (Fig.~\ref{gauss_comp}-d). The key insight is that the normalized histogram of a private dataset $D_{{prv}}$ differs from that of any neighboring dataset $D'_{{prv}}$ in at most two coordinates. Specifically, as $n$ denotes the number of samples in $D_{{prv}}$, replacing a single sample causes one bin to increase by $1/n$ while another bin decreases by $1/n$. When releasing an entire DP histogram, this sparsity of the neighboring difference cannot be leveraged, and noise must be injected across all $d$ coordinates. However, when releasing only samples, and working under $(\eps,\delta)$-DP, this sparse structure can be leveraged, yielding tighter privacy accounting.

        \label{gauss}


\section{Main Results: Overview and Intuition}\label{overview}

Before presenting the full mathematical treatment of how to solve \eqref{opt1} in Section~\ref{technical_details}, we first provide a high-level intuition for our solution and  how \textsc{PubMix} leverages public data.

\subsection{Technical overview}

Recall that \textsc{PubMix} constructs the synthetic dataset $D_{syn}$ by drawing $m$ samples from a carefully designed distribution $P_{syn}=(1-\beta^*)\bar{P}_{prv}+\beta^* P^*_{nse}$. Here, $\beta^*$ and $\noise^*$ are the optimal parameters. To solve the optimization in \eqref{opt1}, we first fix a noise distribution $\noise$ and find the minimum mixing weight $\beta_{min}$ (for this specific $\noise$) for which the DP constraint in \eqref{dp} is satisfied. 
\begin{tcolorbox}[
  colback=blue!4,      
  colframe=blue!35!black, 
  boxrule=0.6pt,
  arc=2pt,
  left=6pt,right=6pt,top=6pt,bottom=6pt
]
\paragraph{Minimum mixing weight (details in Theorem~\ref{lem1}):} Given a noise distribution $\noise$, privacy parameters $\eps,\delta$, number of synthetic samples $m$, and private dataset size $n$, there exists a threshold $\beta_{min}$ such that any $\beta\geq \beta_{min}$ satisfies $(\eps, \delta)$-DP. Moreover,
\begin{itemize}[leftmargin=12pt]
\item[1)] $\beta_{min}$ is a function of $\eps,\delta,m,n$ and $\noise$.
    \item[2)] $\beta_{min}$ is increasing in $m$.
    \item[3)] $\beta_{min}$ depends on $\noise$ only through its two smallest probability masses.
\end{itemize}
\end{tcolorbox}


\paragraph{Intuition:} For given parameters $\varepsilon,\delta,m,n,\noise$, the privacy constraint in \eqref{dp} can be equivalently written as:
\begin{align}\label{hockey_main}
    E_{e^\eps}(P_{syn}^{\otimes m}\|P_{syn}'^{\otimes m})\leq\delta,\quad\forall D_{prv},D'_{prv}
\end{align}
where $E_{e^\eps}(P\|Q)$ denotes the hockey-stick divergence between distributions $P$ and $Q$ (or the $E_\gamma$ divergence with $\gamma=e^\eps$), and $P^{\otimes m}$ denotes the $m$-fold product distribution corresponding to $m$ i.i.d. draws of $P$. Solving \eqref{hockey_main} with equality yields the minimum mixing weight $\beta_{min}$ for the chosen $\noise$. Importantly, since neighboring private datasets differ in exactly one sample, the ratio $\frac{P_{syn}}{P'_{syn}}$ is exactly 1 in all but (at most) two coordinates. Leveraging this 2-point sparse structure tightens the privacy accounting compared to existing DP histogram release methods.

For a given $\noise$, the value of $\beta_{min}$  specifies the minimum separation that $P_{{syn}}$ must maintain from $\Ppriv$ to satisfy DP. Since the information leakage increases with the number of synthetic samples released, $\beta_{min}$ increases with $m$ (see Fig.~\ref{betas}).
The next property that $\beta_{min}$ depends on $\noise$ only through its two smallest probability masses is also intuitive: Since the same mixing weight $\beta$ is applied to every coordinate in \eqref{structure1}, the coordinates with the smallest noise masses in $\noise$ receive the least added randomness. The worst case therefore occurs when the two coordinates that differ in neighboring private histograms coincide with the indices of the two smallest entries of $\noise$. Consequently, letting $N_{min_1} = \min_i \noise(i)$ and $N_{min_2} = \min\nolimits_{i \neq i_1} \noise(i)$
where $i_1 = \arg\min_i \noise(i)$, the pair $(N_{\min_1},N_{\min_2})$ determines the smallest $\beta$ for which DP is satisfied. Since $\beta_{min}$ depends on $\noise$ only through its two smallest probability masses $N_{min_1}$ and $N_{min_2}$, we write $\beta_{min}=\beta_{min}(N_{min_1}, N_{min_2})$ to make it explicit.

Consider $N_{min_1} = N_{min_2} = N_{min}$. The behavior of $\beta_{min}(N_{min}, N_{min})$ for $0\leq N_{min}\leq\frac{1}{d}$ is shown in Fig.~\ref{betas}, where $d=|\mathcal{X}|$. $\beta_{min}(N_{min}, N_{min})$ is decreasing in $N_{min}$ as a larger value of $N_{min}$ corresponds to a larger baseline noise being injected across all coordinates of $\Ppriv$. Since more noise is already guaranteed at each coordinate, a smaller mixing weight $\beta$ suffices to satisfy DP. 

Having characterized the minimum mixing weight for a fixed noise distribution $\noise$, we next optimize over $\noise$.

\begin{tcolorbox}[
  colback=blue!4,      
  colframe=blue!35!black, 
  boxrule=0.6pt,
  arc=2pt,
  left=6pt,right=6pt,top=6pt,bottom=6pt
]
\noindent\textbf{Asymptotically optimum noise distribution (details in Theorem~\ref{thm1}):} For a given public distribution $\Ppub$ and $\gamma>0$, the asymptotically optimum noise distribution $\noise^*$, characterized in Theorem~\ref{thm1}, is a \emph{floor-raised} version of $\Ppub$ with an optimal threshold $t$. Floor-raising a distribution $\Ppub$ with threshold $t$ raises all coordinates below $t$ at least up to $t$, and compensates by reducing the mass from coordinates above $t$ in a way that preserves normalization (see examples in Fig.~\ref{flooraise}). Moreover, the optimum threshold $t$ is a function of $\gamma,\Ppub,\eps,\delta,m$ and $n$.
\end{tcolorbox}

\paragraph{Intuition:} To determine the optimum noise distribution, we first observe that $\mathrm{TV}(\bar{P}_{prv},P_{syn})=\beta\mathrm{TV}(\bar{P}_{prv},P_{nse})$. Then, the optimization in \eqref{opt1} can be upper bounded by:
\begin{align}\label{explain}
\min_{\substack{P_{nse}\in\Delta_d}}\!\!\beta_{min}(N_{min_1}, N_{min_1})\!\!\left(\!\max_{\bar{P}_{prv}\in \mathcal{B}_{\gamma,\bar{P}_{pub}} }\!\!\!\!\!\mathrm{TV}\!(\bar{P}_{prv},P_{nse})\!\right)
\end{align}
since the mixing weight $\beta$ is chosen independently of $\Ppriv$ and $\beta_{min}(N_{min_1},N_{min_2})$ is decreasing in $N_{min_2}$. The optimal $\noise^*$ in \eqref{explain} is an asymptotic optimizer for \eqref{opt1} since \eqref{explain} converges to \eqref{opt1} as $d\to\infty$. Details in Sec.~\ref{technical_details}.

Consider the two terms in \eqref{explain}. $\beta_{min}(N_{min_1}, N_{min_1})$ is minimized when $N_{min_1}$ is maximized (see Fig.~\ref{betas}). Thus, the $\noise$ minimizing the first term is \textit{the uniform distribution}, where $N_{min_1}=\frac{1}{d}$. Now consider the second term (inner max). For any given $\noise$, the inner max is achieved by some $\Ppriv$ on the boundary of $\mathcal{B}_{\gamma,\Ppub} $. Therefore, the minmax TV distance is achieved by the $\noise$ at the \emph{$L_1$-Chebyshev center}\footnote{$L_1$-Chebyshev center of a set $Q$: $\arg\min_c\max_{x\in Q}\|x-c\|_1$.} of $\mathcal{B}_{\gamma,\Ppub} $ (close to $\Ppub$). The first term pushes $\noise$ toward the center of the simplex $\Delta_d$ (the uniform distribution), while the second term pushes it toward the Chebyshev center of $\mathcal{B}_{\gamma,\Ppub}$, which lies close to $\Ppub$. The optimal $\noise^*$ balances these two effects: it stays close to $\Ppub$ to minimize the worst-case TV term, while simultaneously increasing $N_{min_1}$ to reduce the $\beta_{min}(N_{min_1}, N_{min_1})$ term. In Theorem~\ref{thm1}, we show that the resulting optimum $\noise^*$ is a \emph{floor-raised} version of $\Ppub$ characterized by a linear program. Moreover, we show that $\noise^*$ can be approximated by one of two easily computable floor-raising mechanisms explained in Fig.~\ref{flooraise}. These floor-raising operations slightly move $\Ppub$ towards the uniform distribution  (details in App.~\ref{pfoverview}), thereby increasing $N_{min_1}$ and lowering $\beta_{min}(N_{min_1}, N_{min_1})$. At the same time, it preserves proximity to the minimizer of the inner maximization term. 
\begin{figure}[t!]
        \centering
        \includegraphics[width=0.35\textwidth]{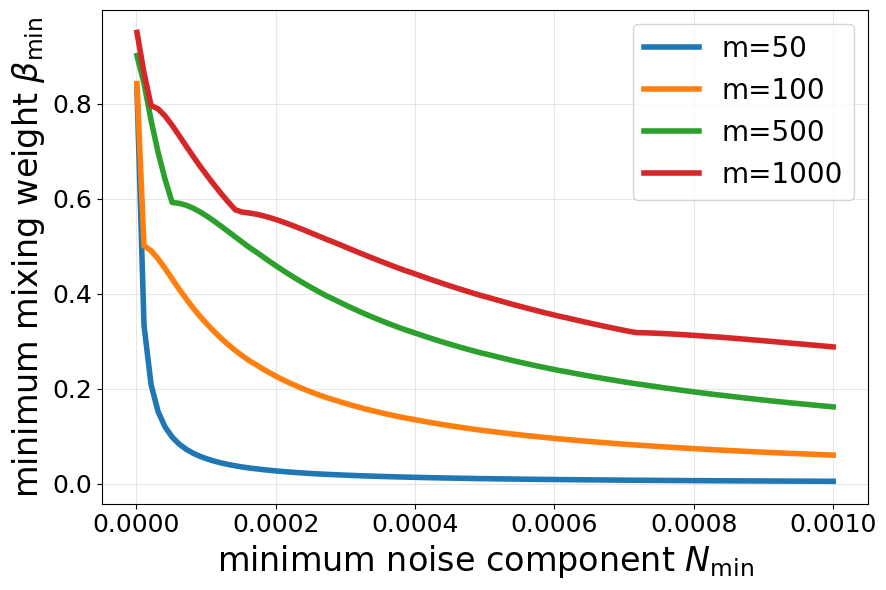}
        \caption{Comparison of minimum mixing weight $\beta_{min}$ for different numbers of synthetic samples $m$ and minimum noise components $N_{min}$ with $n=1200, d=1000,\eps=5$ and $\delta=1/n$.}
        \label{betas}
        \vspace{-0.1cm}
\end{figure}
\begin{figure}[t!]
    \centering
    \begin{subfigure}[b]{0.14\textwidth}
        \centering
        \includegraphics[width=\textwidth]{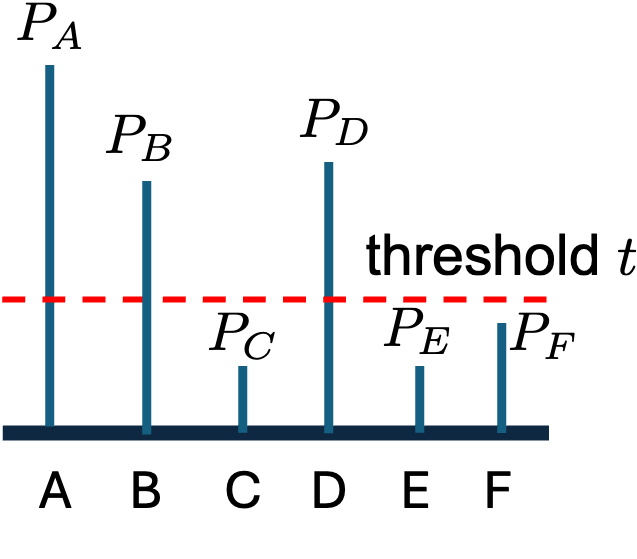}
        \caption{\small Original: $P$}
        \label{original}
    \end{subfigure}
    \begin{subfigure}[b]{0.155\textwidth}
        \centering
        \includegraphics[width=\textwidth]{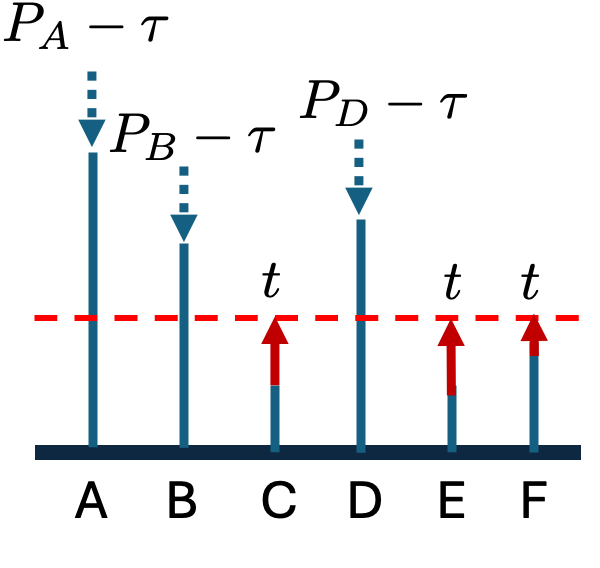} 
        \caption{\small Water-filled($P,t$)}
        \label{wfig}
    \end{subfigure}
        \begin{subfigure}[b]{0.14\textwidth}
        \centering
        \includegraphics[width=\textwidth]{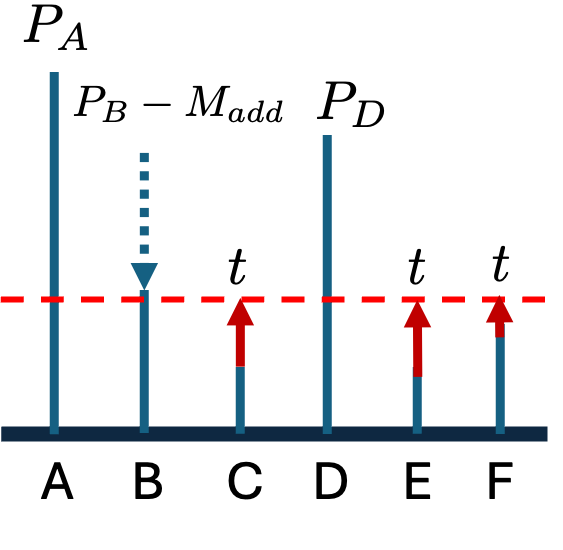} 
        \caption{\small MIFR($P,t$)}
        \label{mifrfig}
    \end{subfigure}
    \caption{\small Two instances of floor-raising: (a) original PMF, (b) Water-filling: raises smaller masses to $t$ and removes the added mass uniformly from larger masses (Def.~\ref{def1}), (c) Maximum Invariant Floor Raising (MIFR): raises smaller masses to $t$ and removes the added mass $M_{add}$ from larger ones to maximize the unchanged probability mass with respect to the original (Def.~\ref{def2}).}
    \label{flooraise}
    \vspace{-0.2cm}
\end{figure}

\vspace{0.1cm}

\paragraph{Significance of public data:} Consider the asymptotic solution to \eqref{opt1} when no public data is available, i.e., $\mathcal{B}_{\gamma,\Ppub}=\Delta_d$. For this, we again analyze the two terms in \eqref{explain} separately. The inner maximization in \eqref{explain} is attained by a $\Ppriv$ at an extreme point of the simplex (e.g., a corner of the triangle in Fig.~\ref{gauss_comp}). Thus, the noise distribution minimizing the inner maximum term in \eqref{explain}, i.e., the minmax optimal distribution, is the uniform distribution, which is at the center of the simplex. Now, consider the $\beta_{min}(N_{min_1}, N_{min_1})$ term in \eqref{explain}. Since $\beta_{min}(N_{min_1}, N_{min_1})$ is decreasing in $N_{min_1}$, it is minimized when $N_{min_1}=\frac{1}{d}$. Both of these facts imply that the objective in \eqref{explain} is minimized by $P_{nse}^*=\frac{1}{d}\mathbf{1}_d$, i.e., the uniform distribution (see Corollary~\ref{cor1}). We call this mechanism \textsc{LinMix}. 

However, this solution is driven by the worst-case private distribution, which is a point mass at one of the corners of the simplex. A point mass is a highly atypical and practically irrelevant distribution. Optimizing $\noise^*$ based on a point mass forces $\noise^*$ to be far from most realistic private distributions $\Ppriv$, which results in a large TV distance between the private and synthetic distributions, as shown in Fig.~\ref{gauss_comp}. To mitigate this, we leverage public data from the same domain and solve \eqref{opt1} with $\mathcal{B}_{\gamma,\Ppub}\subset\Delta_d$. This results in \textsc{PubMix}, which achieves significantly improved performance (red curves). 

\vspace{-0.1cm}
\subsection{Beyond Histograms}\label{beyond}

\textsc{PubMix} extends naturally to general discrete distribution estimators that satisfy a bounded sensitivity condition, namely, $\max_{i}|\Pprivi-\bar{P}'^{[i]}_{prv}|\leq s$ for some $s>0$, where $\Pprivi$ denotes the $i$-th coordinate of $\Ppriv$. In this setting, the domain-aware noise distribution continues to take the form of a floor-raised version of $\Ppub$, while the minimum mixing weight $\beta$ is increased as we lose the two-point sparsity structure of neighboring histograms (see Appendix~\ref{general}). The same principles suggest potential extensions of \textsc{PubMix} to parametric distribution families, such as Gaussian mixtures.

\begin{algorithm}[t]
\caption{\textsc{PubMix}}
\label{algo}
\begin{algorithmic}[1]
\REQUIRE: Private distribution $\Ppriv$, public distribution $\Ppub$, private dataset size $n$, required sample size $m$, proximity parameter $\gamma$, privacy parameters $(\varepsilon,\delta)$
\ENSURE: Synthetic dataset $D_{{syn}}$

\STATE Compute near-optimal noise distribution $\noise^{\ast}$ and mixing weight $\beta^{\ast}$ using $\Ppub,\gamma,\eps,\delta,m,n$  (Remark~\ref{remark})
\STATE Form the synthetic distribution:
\vspace{-0.2cm}
\[
P_{{syn}} \leftarrow (1-\beta^{\ast})\,\bar{P}_{{prv}} + \beta^{\ast}\,\noise^{\ast}
\]
\STATE Draw $m$ i.i.d. samples from $P_{{syn}}$ to obtain $D_{\mathrm{syn}}$
\STATE \textbf{return} $D_{{syn}}$
\end{algorithmic}
\end{algorithm}

\vspace{0.2cm}
\section{Main Results: Technical Details}\label{technical_details}

In this section, we formalize the arguments made in Sec.~\ref{overview}. We begin by deriving a condition on the mixing weight $\beta$ that guarantees DP for a given noise distribution $\noise$. 

\begin{theorem}[Feasible mixing weights $\beta$]\label{lem1}
Fix $\eps>0$, $\delta\in(0,1)$, private dataset size $n$, and the required number of synthetic samples $m\geq1$. Let $P_{{nse}}\in \Delta_d$ be any noise distribution and denote the two smallest probability masses of $\noise$ by $N_{min_1}$ and $N_{min_2}$, with $N_{min_1}\leq N_{min_2}$.
Then, for every $\beta\geq\beta_{\min}(N_{min_1},N_{min_2})$, the $m$ synthetic samples drawn from $P_{syn}$ in \eqref{structure1} satisfy the DP constraint in~\eqref{dp}, where $\beta_{min}(N_{min_1},N_{min_2})$ is the solution to $\beta$ in:
\begin{align}\label{beta-min-thm}
\max\{h_1&(\beta,\eps,n,m,N_{min_1},N_{min_2}),\nonumber\\
&\qquad h_2(\beta,\eps,n,m,N_{min_1},N_{min_2})\}=\delta
\end{align}
where
\begin{align}
&h_1(\beta,\eps,n,m,N_{min_1},N_{min_2})\nonumber\\
&=\!\!\!\!\!\sum_{\substack{\lambda,\mu\geq0\\\lambda+\mu\leq m}}\!\!\frac{m!}{\lambda!\mu!(m\!-\!\!\lambda\!\!-\!\mu)!}\!\!\left(\!\!1\!\!-\!\beta (\!N_{min_1}\!\!+\!N_{min_2}\!)\!\!-\!\frac{1\!\!-\!\beta}{n}\!\right)^{\!\!m-\!\lambda\!-\!\mu}\nonumber\\
&\!\left[\!\!\left(\!\!\beta N_{min_1}\!\!\!+\!\!\frac{1\!\!-\!\beta}{n}\!\right)^{\!\lambda}\!\!\!\!\!(\!\beta N_{min_2}\!)^{\mu}\!\!-\!e^\eps\! \!\left(\!\!\beta N_{min_2}\!\!+\!\!\frac{1\!\!-\!\!\beta}{n}\!\right)^{\!\mu}\!\!\!\!\!(\!\beta N_{min_1}\!)^{\lambda}\!\right]_+
\end{align}
with $[x]_+=\max\{x,0\}$, and $h_2(\beta,\eps,n,m,N_{min_1},N_{min_2})$ is the same as $h_1(\cdot)$ with $N_{min_1}$ and $N_{min_2}$ swapped.
\end{theorem}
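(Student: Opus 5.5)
The plan is to rewrite the DP constraint \eqref{dp} as the hockey-stick condition \eqref{hockey_main} and compute $E_{e^\eps}(P_{syn}^{\otimes m}\|P_{syn}'^{\otimes m})$ exactly for neighboring datasets. Then we maximize it over all neighboring pairs. We then argue that the maximizer places the two differing coordinates on the two smallest masses of $\noise$. Finally, we show that $h_1,h_2$ are nonincreasing in $\beta$, so every $\beta\ge\beta_{min}$ is feasible.

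\textbf{Step 1 (reduction to a multinomial).} Suppose $D_{prv}$ and $D'_{prv}$ differ in one entry. Then there are coordinates $i\neq j$ and counts $c_i\ge 1$, $c_j$ such that
\[
\Ppriv^{[i]}=c_i/n,\quad \Ppriv'^{[i]}=(c_i-1)/n,\quad \Ppriv^{[j]}=c_j/n,\quad \Ppriv'^{[j]}=(c_j+1)/n,
\]
and all other coordinates agree. Write
\[
a=(1-\beta)\tfrac{c_i-1}{n}+\beta\noise^{[i]},\qquad b=(1-\beta)\tfrac{c_j}{n}+\beta\noise^{[j]},\qquad \Delta=\tfrac{1-\beta}{n}.
\]
Then $P_{syn}$ puts mass $a+\Delta$ on $i$ and $b$ on $j$, while $P'_{syn}$ puts mass $a$ on $i$ and $b+\Delta$ on $j$. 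The remaining coordinates carry total mass $r=1-a-b-\Delta$ in both distributions. The likelihood ratio of a sample sequence depends only on $(\lambda,\mu)$, the numbers of draws landing in $i$ and $j$; by sufficiency, the hockey-stick divergence of the product measures equals that of the induced trinomial laws. This gives
\[
E_{e^\eps}=\sum_{\lambda+\mu\le m}\binom{m}{\lambda,\mu,m-\lambda-\mu}\, r^{m-\lambda-\mu}\Big[(a+\Delta)^\lambda b^\mu-e^\eps a^\lambda (b+\Delta)^\mu\Big]_+ .
\]

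\textbf{Step 2 (worst case over neighbors).} We need $E_{e^\eps}\le\delta$ for all $i,j,c_i,c_j$. The key monotonicity claim is that the expression above is nonincreasing in $a$ and in $b$ separately, holding $\Delta$ fixed and letting $r$ adjust. The intuition is that adding common mass to both distributions on a coordinate is a mixing with a shared component, which can only decrease the divergence. Concretely, increasing $a$ by $\eta$ amounts to writing both $P_{syn}$ and $P'_{syn}$ as mixtures that share an extra $\eta$ mass on $i$ taken from the common part. I would make this rigorous by a coupling/post-processing argument: split the common block and coordinate $i$ into sub-atoms so that one pair is obtained from the other by a channel merging atoms. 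Failing that, I would fall back on direct differentiation of the sum. This step is the main obstacle, since the $[\cdot]_+$ prevents a clean termwise calculus.

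Given the monotonicity, the worst case takes $c_i=1$ and $c_j=0$, so $a=\beta\noise^{[i]}$ and $b=\beta\noise^{[j]}$. It then takes $\noise^{[i]}$ and $\noise^{[j]}$ as small as possible, namely $\{N_{min_1},N_{min_2}\}$ in one of the two orders. These two orders give $h_1$ and $h_2$ respectively, which explains the $\max$ in \eqref{beta-min-thm}.

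\textbf{Step 3 (monotonicity in $\beta$).} Increasing $\beta$ shrinks $\Delta$ and raises $a$ and $b$. Equivalently, $P_{syn}$ at a larger $\beta'$ is $(1-t)P_{syn}(\beta)+t\noise$, and likewise for $P'_{syn}$, for a suitable $t\in[0,1]$. Joint convexity of the hockey-stick divergence in this mixing, or simply the observation that this is a common post-processing channel applied per sample, shows that $h_1$ and $h_2$ are nonincreasing in $\beta$. Hence $\max\{h_1,h_2\}\le\delta$ for all $\beta\ge\beta_{min}$, where $\beta_{min}$ is the smallest solution of \eqref{beta-min-thm}. Existence of that solution follows from continuity: at $\beta=1$ we have $\Delta=0$, so the divergence is $0<\delta$.
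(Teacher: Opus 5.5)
Your proposal is correct and is essentially the paper's proof: reduce to the three-outcome multinomial hockey-stick divergence, prove monotonicity in $a$ and $b$ by data processing, take the worst case $c_i=1,\ c_j=0$ on the two smallest masses of $P_{nse}$ (giving $h_1$ and $h_2$), and get monotonicity in $\beta$ from the common per-sample channel $P_{syn,\beta'}=\alpha P_{syn,\beta}+(1-\alpha)P_{nse}$. For the step you flag as the obstacle, the paper's channel is the one you are circling, and it needs no differentiation: keep outcomes $i$ and $j$ fixed and send each draw from the common block to $i$ with probability $\eta/r$. Because both distributions put the same mass $r$ on that block, this maps the pair at $a$ exactly to the pair at $a+\eta$, and sending to $j$ instead handles $b$.
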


The proof of Theorem~\ref{lem1} is given in App.~\ref{pflem1}. Since we need the mixing weight $\beta$ to be as small as possible for optimal utility (see Fig.~\ref{fig1pubmix}), for any given $\noise$, the optimum mixing weight is the corresponding $\beta_{min}(N_{min_1},N_{min_2})$ from Theorem~\ref{lem1}.

Next, we analyze the optimal noise distribution $\noise^*$. To this end, we provide an asymptotically optimal solution to \eqref{opt1} in Theorem~\ref{thm1}.
The original problem in \eqref{opt1} can be equivalently written as,
\begin{align}
    \min_{\noise\in\Delta_{d}}\min_{\substack{\beta\in[0,1]\\\beta\geq\beta_{min}( N_{min_1},N_{min_2})}}\max_{\Ppriv\in\mathcal{B}_{\gamma,\Ppub}} \mathrm{TV}(\Ppriv,P_{syn})
\end{align}
which can be bounded as:
\begin{align}\label{ubmain}
    &\min_{\noise\in\Delta_{d}}\min_{\substack{\beta\in[0,1]\\\beta\geq\beta_{min}( N_{min_1},N_{min_2})}}\max_{\Ppriv\in\mathcal{B}_{\gamma,\Ppub}} \mathrm{TV}(\Ppriv,P_{syn})\nonumber\\
    &\leq\min_{\noise\in\Delta_{d}}\!\!\min_{\substack{\beta\in[0,1]\\\beta\geq\beta_{min}( N_{min_1},N_{min_1})}}\!\!\!\max_{\Ppriv\in\mathcal{B}_{\gamma,\Ppub}} \!\!\!\mathrm{TV}(\Ppriv,P_{syn})
\end{align}
as $N_{min_1}\leq N_{min_2}$ and $\beta(N_{min_1},N_{min_2})$ is 
non-increasing in both $N_{min_1}$ and $N_{min_2}$ (see Lemma~\ref{lem:beta-decreasing}). In Theorem~\ref{thm1}, we provide the solution to the upper bound in \eqref{ubmain}. Notice that $\sum_i\noisei=1$ requires:
\begin{align}
    0\leq N_{min_1}\leq N_{min_2}\leq\frac{1-N_{min_1}}{d-1}
\end{align}
which makes $N_{min_2}\to N_{min_1}$ for larger $d$, resulting in $\beta_{min}(N_{min_1},N_{min_2})\to \beta_{min}(N_{min_1},N_{min_1})$. Thus, the solution to the upper bound in \eqref{ubmain} is asymptotically optimal for \eqref{opt1} (see App.~\ref{asymptotics} for rigorous proof).
\vspace{0.3cm}

\begin{theorem}[Asymptotically optimal parameters:]\label{thm1}
    Let $\gamma\in (0,1]$ be a constant and let $\Ppub\in\Delta_{d}$ be a public distribution. Fix the privacy parameters $\eps>0$, $\delta\in(0,1)$ and the required number of synthetic samples $m\geq1$. Then, 
    \begin{align}\label{thm1_tv}
&\min_{\noise\in\Delta_{d}}\min_{\substack{\beta\in[0,1]\\\beta\geq\beta_{min}( N_{min_1},N_{min_1})}}\max_{\Ppriv\in\mathcal{B}_{\gamma,\Ppub}} \mathrm{TV}(\Ppriv,P_{syn})\nonumber\\
    &=\min_{0\leq t\leq\frac{1}{d}}\beta_{min}(t,t)f(t)
\end{align}
where $f(t)$ is the solution to the following linear program:
\begin{align}\label{LP}
    f(t)=&\min_{\noise\in\Delta_d, \ r\in\mathbb{R}} r\nonumber\\
    &\qquad\text{s.t.}\quad r+\noise(S)\geq\min\{\Ppub(S)+\gamma,1\},\nonumber\\
    &\qquad\qquad\qquad \forall S\subset[d],\quad 1\leq|S|\leq d-1\nonumber\\
    &\qquad\qquad \noise^{[i]}\geq t, \forall i,\quad\mathbf{1}^T_d\noise=1
\end{align}
with the notation $P(S)=\sum_{i\in S}P^{[i]}$. The corresponding optimum $\noise^*$ is the minimizing $\noise$ of \eqref{LP} with $t=t^*$ where $t^*=\arg\min_{0\leq t\leq\frac{1}{d}}\beta_{min}(t,t)f(t)$. The optimum  $\beta^*$ is given by $\beta^*=\beta_{min}(t^*,t^*)$.
\end{theorem}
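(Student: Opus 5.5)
We first simplify the objective for a fixed $\noise$ and $\beta$. Since $P_{syn}-\Ppriv=\beta(\noise-\Ppriv)$, we have $\mathrm{TV}(\Ppriv,P_{syn})=\beta\,\mathrm{TV}(\Ppriv,\noise)$. The set $\mathcal{B}_{\gamma,\Ppub}$ does not depend on $\beta$, so the inner max factors as $\beta\cdot g(\noise)$, where $g(\noise)=\max_{\Ppriv\in\mathcal{B}_{\gamma,\Ppub}}\mathrm{TV}(\Ppriv,\noise)\ge 0$. The map $\beta\mapsto\beta g(\noise)$ is nondecreasing, so the middle minimum is attained at $\beta=\beta_{min}(N_{min_1},N_{min_1})$; this value lies in $[0,1]$ because $\beta=1$ gives zero hockey-stick divergence in Theorem~\ref{lem1}. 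The left side of \eqref{thm1_tv} therefore equals $\min_{\noise}\beta_{min}(N_{min_1},N_{min_1})\,g(\noise)$.

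Next we reparametrize by a floor $t$. We would show
\[
\min_{\noise}\beta_{min}(N_{min_1},N_{min_1})\,g(\noise)=\min_{0\le t\le 1/d}\ \beta_{min}(t,t)\min_{\noise:\ \noise^{[i]}\ge t\ \forall i} g(\noise).
\]
For "$\ge$", take any $\noise$ and set $t=N_{min_1}\le 1/d$. Then $\noise$ is feasible for the constraint $\noise^{[i]}\ge t$, so the right side is at most the left. For "$\le$", take any $t$ and any $\noise$ with all entries $\ge t$. Then $N_{min_1}\ge t$, and by Lemma~\ref{lem:beta-decreasing} (monotonicity of $\beta_{min}$) we get $\beta_{min}(N_{min_1},N_{min_1})\le\beta_{min}(t,t)$, so the left side is at most $\beta_{min}(t,t)g(\noise)$. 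The restriction $t\le 1/d$ is exactly what keeps the constraint set nonempty.

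The main step is to identify $\min_{\noise\ge t}g(\noise)$ with the LP value $f(t)$. We would write $\mathrm{TV}(P,Q)=\max_{S}\,(P(S)-Q(S))$, which gives
\[
g(\noise)=\max_{S\subseteq[d]}\Big(\max_{P\in\mathcal{B}_{\gamma,\Ppub}}P(S)-\noise(S)\Big).
\]
We then claim $\max_{P\in\mathcal{B}_{\gamma,\Ppub}}P(S)=\min\{\Ppub(S)+\gamma,1\}$. The upper bound holds because $P(S)-\Ppub(S)\le\mathrm{TV}(P,\Ppub)\le\gamma$ and $P(S)\le 1$. It is achieved, for $S\ne\emptyset$, by moving mass $\min\{\gamma,1-\Ppub(S)\}$ from $S^c$ onto a coordinate in $S$; that transfer has TV exactly that amount.

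For $S=\emptyset$ the term is $0$, and for $S=[d]$ it is $1-1=0$. Both are dominated, since $g\ge 0$ and the LP value is nonnegative: taking $S$ a singleton already forces $r\ge\min\{\Ppub(S)+\gamma,1\}-\noise(S)$, and this can be checked to be $\ge0$ at some $S$. We can therefore restrict to $1\le|S|\le d-1$. Introducing the epigraph variable $r\ge \min\{\Ppub(S)+\gamma,1\}-\noise(S)$ for all such $S$ turns $\min_{\noise\ge t}g(\noise)$ into exactly the LP \eqref{LP}. Combining the three steps yields \eqref{thm1_tv}.

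For attainment, $f(t)$ is continuous on $[0,1/d]$: it is the value of a parametric LP with a compact feasible set whose right-hand side varies continuously. We also need continuity of $\beta_{min}(t,t)$, which should follow from continuity of $h_1,h_2$ in \eqref{beta-min-thm}. Together these ensure $t^*$ exists. Tracing the equalities back shows that $\beta^*=\beta_{min}(t^*,t^*)$ and $\noise^*$, the LP minimizer at $t^*$, attain the minimum.

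I expect the main care point to be the $\le$ direction of the reparametrization. It relies on Lemma~\ref{lem:beta-decreasing} giving monotonicity of $\beta_{min}(s,s)$ along the diagonal, and on verifying that $\beta_{min}$ is well defined, i.e.\ that \eqref{beta-min-thm} has a solution with the feasible set of $\beta$ forming an up-set.
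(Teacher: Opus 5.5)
Your proposal is correct and follows the paper's proof essentially step for step. You factor $\mathrm{TV}(\Ppriv,P_{syn})=\beta\,\mathrm{TV}(\Ppriv,\noise)$ and take $\beta=\beta_{min}$, replace the condition $\min_i\noise^{[i]}=t$ by the floor constraint $\noise^{[i]}\ge t$ using Lemma~\ref{lem:beta-decreasing}, compute the worst case over $\mathcal{B}_{\gamma,\Ppub}$ through the subset identity of Lemma~\ref{Slemma} (an upper bound plus a mass-transfer construction), and close with the epigraph LP. Your attainment paragraph goes beyond the paper, which takes the outer $\min$ and $\arg\min$ for granted; note that continuity of $h_1,h_2$ alone gives only lower semicontinuity of $t\mapsto\beta_{min}(t,t)$ (full continuity needs a strict-crossing argument such as Claim~\ref{etas}), but lower semicontinuity of both nonnegative factors is already enough for $t^*$ to exist.
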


The proof of Theorem~\ref{thm1} is given in App.~\ref{pfthm1}. We next develop a (sub-optimal) approximation for $\noise^*$ in \eqref{LP}, on which \textsc{PubMix} is built. For this, we first define \emph{Water-filling} and \emph{Maximum-invariance floor raising}.
\vspace{0.3cm}
\begin{definition}\label{def1}[Water-Filling (WF)] Let $A$ be a probability mass function (PMF) on $\Omega=\{\omega_1,\dots,\omega_d\}$ with $A_i := A(\omega_i)$. Assume without loss of generality that the indices are arranged such that $A_1\leq A_2\leq\dotsc\leq A_d$. Let $ t\in[0,1/d]$ be any constant satisfying $A_1\leq\dotsc\leq A_k\leq t < A_{k+1}\leq\dotsc\leq A_d$ for some $k<d$. Then, the $t$-water-filled version of $A$, denoted by $B=\text{WF}(A,t)$ is given by,
\begin{align}
    B_i=\max\{t,A_i-\tau\},\quad \forall i\in\{1,\dotsc,d\}
\end{align}
where $\tau=\frac{t(k+\ell)-\sum_{i=1}^{k+\ell}A_i}{d-k-\ell}$ is a unique constant satisfying $\sum_{i=1}^dB_i=1$ and $A_{k+\ell}-t\leq\tau<A_{k+\ell+1}-t$ for a unique $\ell\in\{0,1,\dotsc,d-k-1\}$.
\end{definition}
\vspace{0.3cm}
\begin{definition}\label{def2}[Maximum-Invariance Floor Raising (MIFR)]
Let $A$ be a probability mass function on $\Omega=\{\omega_1,\dots,\omega_d\}$ with $A_i := A(\omega_i)$. Fix $t\in\left[0,1/d\right]$. The \emph{maximum-invariance floor raised} version of $A$ at level $t$ is given by $B=\mathrm{MIFR}(A,t)$ with coordinates
\[
B_i=\begin{cases}
t, & i\in L,\\
b_i, & i\in U^\star,\\
A_i, & i\notin L\cup U^\star,
\end{cases}
\]
where $L := \{i: A_i \le t\}$ and $L^c := \{i: A_i > t\}$. The donor set $U^\star \subseteq L^c$ is defined as
\[
U^\star = \arg\min_{U\subseteq L^c}\ \sum_{i\in U} A_i
\quad \text{s.t.}\quad
\sum_{i\in U}(A_i-t)\ \ge\ \sum_{i\in L}(t-A_i).
\]
Finally, the values $\{b_i\}_{i\in U^\star}$ satisfy $b_i\ge t$, $\forall i\in U^\star$ and
\[
\sum_{i\in U^\star}(A_i-b_i)\ =\ \sum_{i\in L}(t-A_i),
\]
so that the total mass removed from indices in $U^\star$ exactly matches the total mass added on $L$.
\end{definition}
Both WF and MIFR (Def.~\ref{def1} and \ref{def2}) raise every probability mass below the threshold $t$ up to $t$. The extra mass added in this step is then removed from entries above $t$, while ensuring no entry is reduced below $t$. In this way, both procedures produce a modified PMF $B$ that satisfies $B_i\geq t$ and attains the minimum possible TV distance to the original PMF $A$. The two methods differ in how they choose which entries above $t$ to decrease. WF spreads the required mass removal across all entries above $t$ in a uniform “leveling” manner until some reach $t$. In contrast, MIFR concentrates the mass removal on a carefully chosen subset of entries above 
$t$ so as to keep as many coordinates unchanged as possible, i.e., it maximizes the invariant mass $\sum_{i:B_i=A_i}A_i$. An illustration of Def.~\ref{def1} and Def.~\ref{def2} is given in Fig.~\ref{flooraise}. 

\begin{remark}\label{remark}
For given parameters $\eps,\delta,m,n,\Ppub,\gamma$, an approximation to the $\noise^*$ in Theorem~\ref{thm1} is given by, 
\begin{align}
    \hat{P}_{nse}^*=\begin{cases}
        \text{MIFR}(\Ppub,t_1),& g\left(\text{MIFR}(\Ppub,t_1)\right)\nonumber\\
        &\quad\leq\!g\left(\text{WF}(\Ppub,t_2)\right)\\
        \text{WF}(\Ppub,t_2),&\text{o.w.}
    \end{cases}
\end{align}
where $t_1,t_2$ are optimized thresholds, and
\begin{align*}
   g(\noise)\!=\!\beta_{min}(\!N_{min_1},N_{min_1}\!)\!\max_{\Ppriv\in\mathcal{B}_{\gamma,\Ppub}}\!\!\!\!\mathrm{TV}(\Ppriv,P_{nse}) 
\end{align*}
for any $\noise$ with $N_{min_1}=\min_i\noisei$. The corresponding minimum mixing weight is given by,
\begin{align*}
    \hat{\beta}^*=\beta_{min}(N_{min_1}^*,N_{min_1}^*)
\end{align*}
where $N_{min_1}^*=\min_i\hat{P}_{nse}^{*[i]}$. In most cases, $\hat{P}_{nse}^*$ coincides with $\noise^*$ from Theorem~\ref{thm1}, but it is generally sub-optimal (see App.~\ref{rem-details} for more details). 
\end{remark}

The \textbf{Key insight from Remark~\ref{remark}} is that $\hat{P}_{nse}^*$ is always a floor-raised version of $\Ppub$, where both $\mathrm{MIFR}(\Ppub,t_1)$ and $\mathrm{WF}(\Ppub,t_2)$ raise small elements of $\Ppub$ to their respective thresholds and redistribute probability mass from larger elements. This design balances two objectives: since 
$\Ppriv$ and $\Ppub$ come from the same domain with typically small $\mathrm{TV}(\Ppub,\Ppriv)$, choosing $\noise$ close to $\Ppub$ is beneficial. However, many elements in $\Ppub$ can be small (even zero), which would add minimal noise to corresponding $\Pprivi$ elements and require a larger mixing weight $\beta$ to maintain privacy, pushing $P_{syn}$ too close to $\Ppub$ (not to $\Ppriv$). $\hat{P}_{nse}^*$ resolves this tradeoff by remaining close to $\Ppub$ while floor-raising smaller probabilities to guarantee a sufficient $ N_{min_1}=\min_i\noisei$, which allows a smaller $\beta$ and yields a $P_{syn}$ that is closer to $\Ppriv$.


To see the significance of using domain information via public data, consider the solution to \eqref{opt1} when no public dataset $D_{{pub}}$ is available, i.e., when $\mathcal{B}_{\gamma,\Ppub} = \Delta_d$.
\vspace{0.3cm}
\begin{corollary}[Asymptotic optimality with no $D_{pub}$]\label{cor1}
    For any $D_{prv}$, $m\geq1$, $\eps>0$ and $\delta\in(0,1)$, the asymptotic solution to \eqref{opt1} with $\mathcal{B}_{\gamma,\Ppub}=\Delta_d$ is given by,
\begin{align}
    \min_{\substack{\noise\in\Delta_d}} &\min_{\substack{\beta\in[0,1]\\\beta\geq\beta_{min}(N_{min_1},N_{min_1})}} \max_{\Ppriv\in\Delta_d} \mathrm{TV}(\Ppriv,P_{syn})\nonumber\\
    &=\left(1-1/d\right)\beta_{min}\left(1/d,1/d\right)\label{minTV}
\end{align}
and the corresponding optimum $\noise^*$ and $\beta^*$ are given by,
\begin{align}
    \noise^*=\frac{1}{d}\mathbf{1}_{d}\quad\text{and}\quad\beta^*&=\beta_{min}\left(1/d,1/d\right)
\end{align}
where $\mathbf{1}_d$ is the all ones vector of size $d$.
\end{corollary}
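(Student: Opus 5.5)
Corollary~\ref{cor1} follows either from Theorem~\ref{thm1} with $\gamma=1$, or directly. Going directly is cleaner, so I would show matching upper and lower bounds on the left-hand side of \eqref{minTV}.

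\textbf{Reduction.} For fixed $\noise$ and $\beta$, $P_{syn}-\Ppriv=\beta(\noise-\Ppriv)$. Hence $\mathrm{TV}(\Ppriv,P_{syn})=\beta\,\mathrm{TV}(\Ppriv,\noise)$. This quantity is nondecreasing in $\beta$, so the inner minimum is attained at $\beta=\beta_{min}(N_{min_1},N_{min_1})$, where $N_{min_1}=\min_i\noisei$. (If $\beta_{min}>1$ is possible, I would cap it at $1$; this changes nothing below.) The problem therefore becomes
\[
\min_{\noise\in\Delta_d}\ \beta_{min}(N_{min_1},N_{min_1})\cdot \max_{\Ppriv\in\Delta_d}\mathrm{TV}(\Ppriv,\noise).
\]

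\textbf{Inner maximum.} The map $\Ppriv\mapsto\mathrm{TV}(\Ppriv,\noise)$ is convex, so its maximum over the simplex is attained at a vertex $e_i$. There $\mathrm{TV}(e_i,\noise)=1-\noisei$. The maximum therefore equals $1-N_{min_1}$, and the objective reduces to the one-dimensional function $\phi(N_{min_1})=\beta_{min}(N_{min_1},N_{min_1})(1-N_{min_1})$, which depends on $\noise$ only through its smallest mass.

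\textbf{Optimization.} Feasibility forces $N_{min_1}\le 1/d$, since $d\,N_{min_1}\le\sum_i\noisei=1$. By Lemma~\ref{lem:beta-decreasing}, $\beta_{min}(t,t)$ is non-increasing in $t$. The factor $1-t$ is strictly decreasing and nonnegative. So $\phi$ is non-increasing on $[0,1/d]$ and attains its minimum at $t=1/d$. The value is $(1-1/d)\beta_{min}(1/d,1/d)$, and the only distribution with $N_{min_1}=1/d$ is the uniform one, which gives $\noise^*=\frac1d\mathbf 1_d$ and $\beta^*=\beta_{min}(1/d,1/d)$.

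\textbf{Main obstacle.} The delicate point is the monotonicity of $\beta_{min}(t,t)$ in $t$. I take it from Lemma~\ref{lem:beta-decreasing}, which the paper invokes before \eqref{ubmain}. I also need to handle ties and the case $\beta_{min}=0$ or $\beta_{min}\ge 1$. In those cases the product remains non-increasing, so the uniform distribution is still optimal, though uniqueness may fail; the corollary asserts only the optimal value and an optimizer.

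Asymptotic optimality for the original problem \eqref{opt1} then follows from the argument already given after \eqref{ubmain}: $N_{min_2}\to N_{min_1}$ as $d\to\infty$. Alternatively, the corollary is the special case $\gamma=1$ of Theorem~\ref{thm1}. There the LP constraints become $r\ge 1-\noise(S)$, which gives $f(t)=1-t$, and one recovers the same minimization.
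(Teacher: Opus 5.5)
Your direct argument is correct and is essentially the paper's proof. You set $\beta=\beta_{min}(N_{min_1},N_{min_1})$, note that the worst-case $\Ppriv$ is a point mass on the smallest coordinate of $\noise$ so the inner maximum is $1-N_{min_1}$, and conclude by monotonicity that the uniform distribution is optimal. One small slip in your aside on Theorem~\ref{thm1} with $\gamma=1$: the LP imposes $\noise^{[i]}\geq t$, so the uniform distribution is feasible for every $t\le 1/d$ and $f(t)=1-1/d$, not $1-t$. The minimization over $t$ still yields the same value.
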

Corollary ~\ref{cor1} (proof in App.~\ref{pfcor1}) shows that without domain information (i.e., no public data from the same domain as $D_{prv}$), the optimal noise distribution is uniform and $P_{syn}$ is formed by mixing uniform noise with $\Ppriv$. We call this mechanism \textsc{LinMix}. This effectively pulls $P_{syn}$ toward the uniform distribution and away from $\Ppriv$. In contrast, when relevant public data is available, Theorem~\ref{thm1} and Remark~\ref{remark} establish that the optimal noise distribution is close to $\Ppub$. Since $\Ppub$ and $\Ppriv$ share similar structural properties, linearly combining them preserves much of this structure, allowing $P_{syn}$ to remain closer to $\Ppriv$ while still satisfying the privacy constraint.

\section{Experiments}\label{experiments}

In the first part of this section, we illustrate the core ideas of \textsc{PubMix} with a simplified setting. In the second part, we integrate \textsc{PubMix} into state-of-the-art DP synthetic data generation pipelines (histogram-based) and show improved performance.\footnote{Code available at: https://github.com/sajani-vithana/PUBMIX} First, we explain how the proximity parameter $\gamma$ is chosen in both experimental settings. $\gamma$ captures the variation of the domain data: if distributions estimated from different public datasets in the same domain vary substantially, $\gamma$ should be larger. If they are similar, a smaller $\gamma$ is appropriate. In our experiments, we either collect $k$ public datasets from the same domain, or randomly partition the available public dataset into $k$ folds, compute pairwise TV distances between the fold distributions, and use the distribution of these distances to estimate $\gamma$.

\subsection{Illustrating the Core Concepts}

In this section, we use a simplified setting in which the normalized histogram of the private data serves as the distribution estimate, and \textsc{PubMix} is used to directly sample synthetic data. To make this setting tractable, we use the 2023 ACS Public Use Microdata Sample (PUMS) dataset \cite{acs_pums_dict} by restricting each record with 7 features: race, sex, class of work, marital status, education level, disability status, and English proficiency. We analyze the quality of the \textsc{PubMix}-generated samples in terms of TV distance and downstream-task performance by comparing with the standard Gaussian mechanism and the setting with no public data (\textsc{LinMix}). We use Texas data as $D_{{prv}}$ and California data as $D_{{pub}}$. 
\footnote{We set $\gamma$ to the average pairwise TV distance among state-level distributions from California, Colorado, Florida, and Utah.}

\begin{figure}[t]
        \centering
        \includegraphics[width=0.37\textwidth]{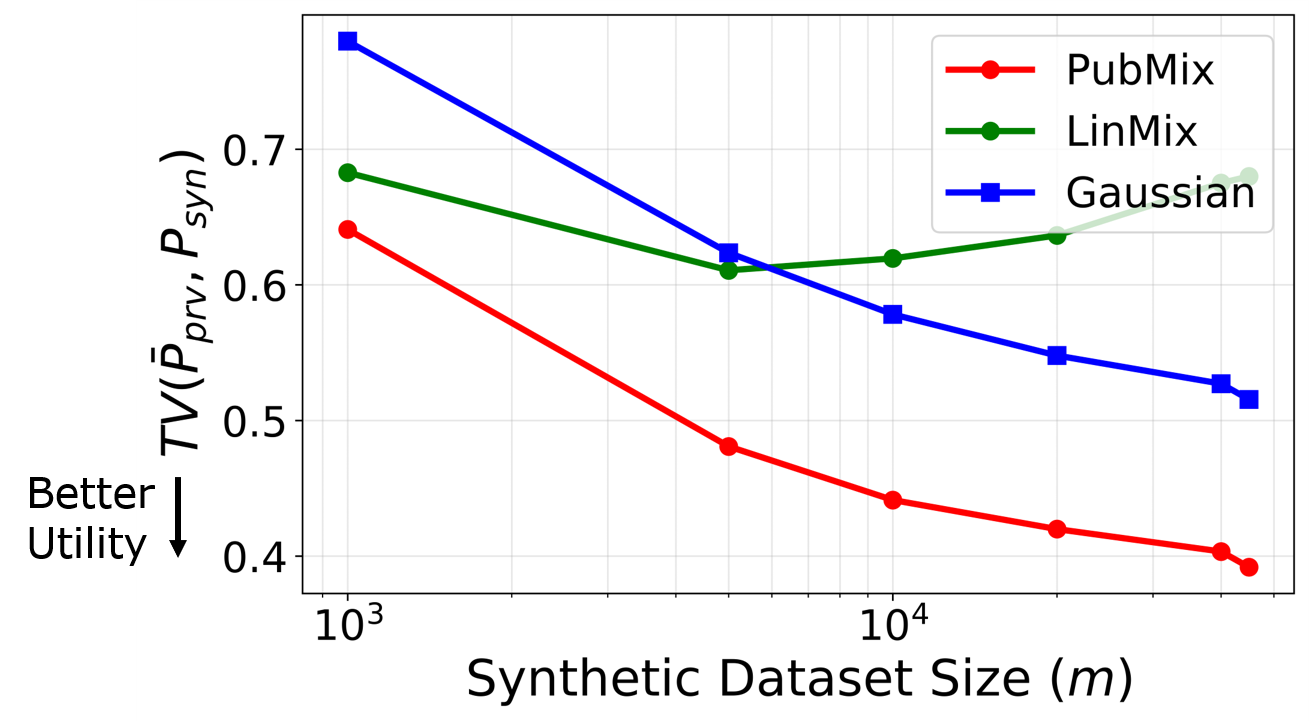}
        \caption{Comparison of the total variation (TV) distance between $\Ppriv$ and the empirical distribution of the synthetic dataset $D_{\mathrm{syn}}$ for $n=57{,}717$, $\varepsilon=5$, and $\delta=1/n$ on the PUMS dataset. 
        }
        \label{data2tvemp}
\end{figure}

\begin{figure}[t]
        \centering
        \includegraphics[width=0.35\textwidth]{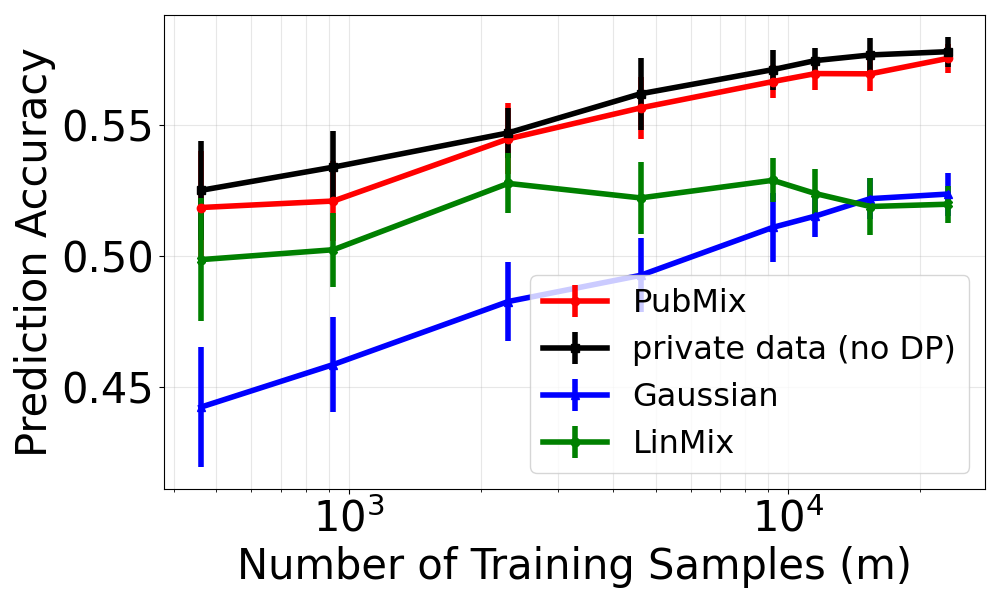}
        \caption{
        Accuracy of a regression model trained on synthetic data by \textsc{PubMix}, \textsc{LinMix}, and the Gaussian mechanism, measured relative to a model trained on real data from the PUMS dataset.}
        \label{down1}
        \vspace{-0.3cm}
\end{figure}

Fig.~\ref{data2tvemp} compares the TV distance between private distribution and the empirical distribution of the generated synthetic data. \textsc{LinMix} performs well for small $m$ but degrades as $m$ increases, while \textsc{PubMix} avoids this degradation by leveraging public data. 

We assess downstream utility by training a random forest regressor to predict \emph{English proficiency} (values from 1-5) using the remaining features, with the model trained on the generated synthetic samples. Here, we use $80\%$ of Texas data as $D_{prv}$ to generate the synthetic samples. We then use the remaining $20\%$ of Texas data as the test set to evaluate the performance of the model trained on the synthetic data. Figure~\ref{down1} compares \textsc{PubMix} against a non-private model trained directly on $D_{prv}$, the Gaussian baseline, and \textsc{LinMix} (no $D_{pub}$).


\subsection{Evaluation on Existing Pipelines}
\label{sec:realworld}

We next evaluate \textsc{PubMix} on two modalities, tabular and text, to demonstrate that it can be used as a \emph{drop-in} replacement for the DP histogram sampling subroutine in existing state-of-the-art synthetic data generation pipelines. Concretely, we adopt Private Evolution (PE) based methods \cite{tabpe, api_text}, which are recently proposed frameworks generating synthetic data through an iterative process: they repeatedly produce candidate variations of the data and then randomly samples from Gaussian-noised histograms obtained from the variations. In our experiments, we keep the PE pipeline unchanged and replace only the internal DP sampling-from-histograms component with \textsc{PubMix}. 

\begin{table}[t]
\centering
\caption{Downstream classification accuracy (\%) on Person Activity under varying privacy budgets. 
We report mean $\pm$ std over multiple runs.}
\vspace{-.1cm}
\label{tab:person_activity_acc}
\resizebox{0.96\linewidth}{!}{
\begin{tabular}{lccc}
\toprule
\textbf{Method} & $\boldsymbol{\varepsilon=1.0}$ & $\boldsymbol{\varepsilon=2.0}$ & $\boldsymbol{\varepsilon=4.0}$ \\
\midrule
\textsc{GEM} \scriptsize{\cite{gem}}          & \mstd{28.24}{0.43} & \mstd{28.51}{1.84} & \mstd{28.46}{0.40} \\
PrivSyn \scriptsize{\cite{zhang2021privsyn}}     & \mstd{27.40}{2.70} & \mstd{28.41}{2.03} & \mstd{28.06}{1.82} \\
\textsc{GSD} \scriptsize{\cite{gsd}}         & \mstd{49.48}{0.27} & \mstd{50.00}{0.25} & \mstd{50.46}{0.54} \\
\textsc{AIM} \scriptsize{\cite{mckenna2022aim}}         & \mstd{56.77}{0.23} & \mstd{58.42}{0.11} & \mstd{59.12}{0.30} \\
\textsc{JAM-PGM} \scriptsize{\cite{joint_selection}}         & \mstd{55.96}{0.24} & \mstd{56.51}{0.28} & \mstd{57.37}{0.23} \\
\textsc{$\text{PMW}_{pub}$} 
\scriptsize{\cite{pub1}} 
& \mstd{49.21}{0.80} & \mstd{48.96}{1.05} & \mstd{49.37}{0.91} \\
\textsc{TabPE} \scriptsize{\cite{tabpe}}          & \mstd{60.80}{0.54} & \mstd{63.22}{0.48} & \mstd{64.24}{0.45} \\
\midrule
\textsc{PubMix} & \mstd{\textbf{66.40}}{\textbf{0.49}} & \mstd{\textbf{66.47}}{\textbf{0.42}} & \textbf{\mstd{\textbf{66.85}}{\textbf{0.08}}} \\
\bottomrule
\vspace{-0.7cm}
\end{tabular}
}
\end{table}

\subsubsection{Tabular: Person Activity Records}
\label{sec:realworld_tabular}
\paragraph{Experimental setup.}
We use the Person Activity dataset \cite{vidulin2010localization}, which contains sensor-based activity records collected from five users.
To mimic a realistic deployment where related public data is available, we construct the private dataset $D_{prv}$ using records from three users, and the public dataset $D_{pub}$ using records from the remaining two users by splitting the original dataset. This split preserves the domain and task semantics while introducing a natural distribution shift across users, matching the intended use of public-data-guided sample generation.

As our primary baseline, we adopt TabPE \cite{tabpe}, a state-of-the-art DP synthetic data generator for tabular data. Importantly, \textsc{TabPE} relies on a DP histogram-based sampling subroutine as a core primitive in its update procedure. Building on this modular structure, our main comparison is \textsc{TabPE}+\textsc{PubMix}, where we \emph{only} replace this DP histogram sampling component in \textsc{TabPE} with \textsc{PubMix} while keeping all other components (e.g., model architecture, iterations, and optimization settings) unchanged. This experimental design isolates the benefit of \textsc{PubMix} as a drop-in sampling primitive. We also compare \textsc{PubMix} against additional tabular DP baselines, including \textsc{GSD} \cite{gsd}, \textsc{AIM} \cite{mckenna2022aim}, \textsc{GEM} \cite{gem}, \textsc{PrivSyn} \cite{zhang2021privsyn}, as well as \textsc{JAM-PGM} \cite{joint_selection} and \textsc{$\text{PWM}_{pub}$} \cite{pub1} that utilize public data. Following prior work \cite{tabpe}, we evaluate synthetic data utility by training a downstream classifier on synthetic samples and reporting test accuracy of classification performance for human activity.

\paragraph{Results.}
Table~\ref{tab:person_activity_acc} reports downstream accuracy across privacy budgets.
Across all $\varepsilon \in \{1,2,4\}$, \textsc{TabPE} + \textsc{PubMix} achieves the best performance, consistently outperforming the \textsc{PE} baseline as well as all other DP tabular methods.
Notably, the gains over \textsc{PE} are largest in the stricter privacy regime: \textsc{PE}+\textsc{PubMix} improves \textsc{PE} by $+5.80$ points at $\varepsilon=1$ (66.60 vs.\ 60.80), and remains advantageous at $\varepsilon=2$ and $\varepsilon=4$ (+2.93 and +1.88 points, respectively).
This pattern shows that \textsc{PubMix} leverages related public data to improve utility under stronger privacy constraints.

\subsubsection{Text: Yelp Reviews}
\label{sec:realworld_text}

\paragraph{Experimental setup.}
We consider a real-world text synthesis setting where the private dataset is Yelp reviews \cite{yelp_dataset_2023}, and the public dataset is Google Local Reviews from 10 states \cite{googlereview1, googlereview2}. This public dataset provides a realistic public signal from a related but non-identical distribution, reflecting practical scenarios where public web-scale reviews exist but do not exactly match the private domain. Yelp provides two downstream tasks that capture complementary aspects of utility.
(i) Category classification: a trained classifier on the synthetic reviews predicts the business category.
(ii) Rating prediction: a trained regressor predicts the star rating in \(\{1,\dots,5\}\). We report classification accuracy and the root mean squared error (RMSE) for the rating, on the test set.

We adopt a \textsc{PE} variant method, tailored for DP text synthesis as our main baseline~\cite{api_text}.
We apply \textsc{PubMix} as a plug-in replacement in the DP sampling component of the \textsc{PE} pipeline, while keeping all remaining settings identical. 
Additional details (length constraints, model configuration, and training schedule) are provided in App.~\ref{sec:exp_details}.

\paragraph{Results.}
Table~\ref{tab:yelp_text} summarizes utility on Yelp across privacy budgets.
Overall, \textsc{PubMix} improves the privacy-utility tradeoff over the PE baseline on both tasks.
In particular, \textsc{PubMix} consistently reduces rating prediction error (RMSE) for all $\varepsilon$, indicating better preservation of the continuous rating signal under DP.
These results suggest that public-guided DP sampling can enhance \textsc{PE}-based text synthesis without changing the rest of the pipeline.

\begin{table}[t]
\centering
\caption{Utility on Yelp Reviews under varying privacy budgets.
We report mean $\pm$ std over multiple runs.
Category classification accuracy (\%, higher is better) and Rating RMSE (lower is better). \textbf{PE ($\eps = \infty$)} : Category $72.13\%$, RMSE $0.873$.}
\label{tab:yelp_text}
\resizebox{0.98\linewidth}{!}{
\begin{tabular}{llccc}
\toprule
\textbf{Metric} & \textbf{Method} & $\boldsymbol{\varepsilon=1}$ & $\boldsymbol{\varepsilon=2}$ & $\boldsymbol{\varepsilon=4}$ \\
\midrule
\textbf{Category} ($\uparrow$) & PE \scriptsize{\cite{api_text}} & \mstd{\textbf{71.67}}{\textbf{0.21}} & \mstd{70.69}{0.41} & \mstd{69.93}{0.38} \\
                              & \textsc{PubMix} & \mstd{71.01}{0.17} & \mstd{\textbf{71.57}}{\textbf{0.34}} & \mstd{\textbf{72.35}}{\textbf{0.20}} \\
\midrule
\textbf{Rating} ($\downarrow$) & PE \scriptsize{\cite{api_text}}& \mstd{0.90}{0.025} & \mstd{0.91}{0.009} & \mstd{0.89}{0.021} \\
                                   & \textsc{PubMix} & \mstd{\textbf{0.81}}{\textbf{0.016}} & \mstd{\textbf{0.81}}{\textbf{0.022}} & \mstd{\textbf{0.80}}{\textbf{0.039}} \\
\bottomrule
\vspace{-0.8cm}
\end{tabular}
}
\end{table}

\section{Conclusion}

In this paper, we introduced the concept of domain-aware DP mechanisms for synthetic data generation and provided a corresponding theoretical framework. For normalized histograms, we characterized the optimal domain-aware DP mechanism within a class of distribution mixing functions. To this end, we introduced \textsc{PubMix}, a public-data-aware privacy mechanism that can be used as a drop-in replacement for Laplace or Gaussian mechanisms in histogram-based data synthesis pipelines. Empirically, we show that incorporating \textsc{PubMix} in existing data synthesis pipelines consistently improves the privacy–utility trade-off. 

\textbf{Limitations and future work:} Our analysis is restricted to linear mixing-based DP mechanisms, requires public data from the same domain, and considers only histogram-based distribution estimators. Future work will extend domain-aware privacy mechanisms to more complex distribution estimators, broadening the scope of \textsc{PubMix}'s applications.


\section*{Acknowledgements}

This work was supported by the National Science Foundation under Grant No: CIF 2231707, CIF 2312667, 2506573, and a grant from the Center for Wireless Intelligence at Georgia Tech. We also acknowledge support from Coefficient Giving and JPMorgan Chase. F.P. Calmon is also affiliated with Google Research as a Visiting Faculty Researcher.  

\section*{Impact Statement}
This work advances DP synthetic generation by proposing public-data-aware privacy mechanisms. This method incorporates auxiliary public data directly into the privacy-critical sampling step, improving utility while maintaining DP guarantees. By enabling the generation of higher-fidelity synthetic data under a given privacy budget, the proposed approach increases the practicality of privacy-preserving data sharing in domains where data access is limited (e.g., healthcare, finance and user-generated content). At the same time, synthetic data can be misused or misinterpreted. First, since DP protections depend on correct choice and reporting of privacy parameters, loose privacy budgets can undermine protections. Second, synthetic data may still encode and amplify social biases present in the private dataset. Finally, because our method leverages public data, careful governance is needed to ensure the public source is legitimately shareable and properly licensed. 

\bibliography{refs}
\bibliographystyle{icml2026}

\newpage
\appendix
\onecolumn

\section{Additional Details on Related Work}\label{related-work}

\textbf{On methods using linear mixing:} Certain private prediction variants \cite{pp1,pp2} are alternatives to DP training methods such as DP stochastic gradient descent (DP-SGD), where instead of privatizing the model itself, privacy is enforced at inference time by perturbing the model's output. In the LLM setting, this typically involves mixing or projecting the private model's next-token distribution toward a public distribution to limit leakage from memorized private data. While both PubMix and private prediction variants involve linear mixing and public distributions, they differ fundamentally in goal and technical design.
\begin{itemize}
    \item \textbf{Goal:} Private prediction protects the privacy of an LLM’s fine-tuning data at inference time, one token prediction at a time. In contrast, PUBMIX introduces domain-aware, sample-aware synthetic data generation from histograms, with the goal of producing m privacy-protected samples whose distribution is as close as possible to the private data distribution.
    \item \textbf{Optimization:} Private prediction methods \cite{pp1,pp2} linearly mix private token distributions with a fixed public distribution and optimize only the mixing coefficient to remain as close as possible to the private distribution while satisfying privacy. PUBMIX instead mixes a private normalized histogram with a noise distribution and jointly optimizes both the mixing weight $\beta$ and the noise distribution $P_{nse}$ to minimize the distance to the private distribution. In this sense, PubMix generalizes the optimization step of private prediction by introducing an additional optimization variable, which provides extra flexibility to improve the utility of the mixing step.
    \item \textbf{Role of public data:} In private prediction, the public distribution mainly serves as a source of noise. In PUBMIX, public data is used more structurally: it informs the design of a domain-aware noise distribution. In other words, using the public distribution as the noise distribution is only a special case of PUBMIX. Generally, the optimal noise distribution drifts toward a floor-raised version of the public distribution. This distinction matters especially when the public distribution is sparse (which is common in token distributions), since fixing it directly can force the mixed distribution to be much closer to the public distribution than to the private ones, which degrades utility. PUBMIX avoids this by jointly optimizing the noise distribution and mixing parameter.
\end{itemize}
While these differences exist between PUBMIX and private prediction methods, the extension of PUBMIX to general discrete distributions beyond histograms (briefly discussed in App.~\ref{general}) can be applied on token distributions. In that sense, PUBMIX can potentially serve as a drop-in privacy module for the linear mixing step in private prediction, provided it is paired with an appropriate privacy accounting across generated tokens.

\section{Proofs of Section~\ref{overview}}\label{pfoverview}

\begin{lemma}
Let $A = (A_1, \dots, A_d)$ be a probability mass function (PMF) on a finite set $\Omega = \{\omega_1, \dots, \omega_d\}$, and let $U = (U_1, \dots, U_d)$ denote the uniform distribution on $\Omega$, i.e., $U_i = 1/d$ for all $i$.  
For any threshold $t \in [0, 1/d]$, let $B = \mathrm{WF}(A, t)$ denote the $t$-water-filled version of $A$ as in Definition~\ref{def1}. Then, the total variation distance to the uniform distribution does not increase under water-filling, i.e.,
$
    \|B - U\|_{\mathrm{TV}} \le \|A - U\|_{\mathrm{TV}}.
$
\end{lemma}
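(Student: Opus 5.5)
Write $\|P-U\|_{\mathrm{TV}}=\sum_i [P_i-1/d]_+$, the total positive excess above $1/d$; this is valid for any PMF since the positive and negative parts of $P-U$ have equal mass. The plan is to compare the excess of $A$ and $B$ coordinate by coordinate. From Definition~\ref{def1}, $B_i=\max\{t,A_i-\tau\}$ with $\tau\ge 0$. To see $\tau\ge0$: the mass $\sum_{i\le k}(t-A_i)\ge 0$ that is added to the lower coordinates must be removed from the upper ones. Equivalently, $\tau$ is determined by the monotone equation $\sum_i\max\{t,A_i-\tau\}=1$. At $\tau=0$ the left side is $\sum_i\max\{t,A_i\}\ge 1$, and it is non-increasing in $\tau$, so the solution satisfies $\tau\ge 0$.

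First I would handle the coordinates with $A_i\le t$. There $B_i=t\le 1/d$, so $[B_i-1/d]_+=0\le[A_i-1/d]_+$. Next come the coordinates with $A_i>t$. There $B_i=\max\{t,A_i-\tau\}\le \max\{t,A_i\}=A_i$ because $\tau\ge 0$. Since $x\mapsto[x-1/d]_+$ is non-decreasing, $[B_i-1/d]_+\le[A_i-1/d]_+$ here as well. Summing over all $i$ gives $\|B-U\|_{\mathrm{TV}}\le\|A-U\|_{\mathrm{TV}}$.

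The key observation, which could be stated as a one-line claim, is that water-filling never raises any coordinate above $\max\{A_i,t\}$ and that $t\le 1/d$. Together these mean no coordinate's excess over $1/d$ can grow. The only step needing care is justifying $\tau\ge0$ from the definition, either through the explicit formula $\tau=\frac{t(k+\ell)-\sum_{i\le k+\ell}A_i}{d-k-\ell}$ or through the monotone normalization equation. I expect that to be the main, though mild, obstacle. For the explicit formula, the numerator is a sum of terms $t-A_i$. The terms with $i\le k$ are nonnegative. For $k<i\le k+\ell$ the defining inequality gives $A_i-t\le\tau$. The cleaner route is the normalization argument above, which avoids this case analysis.
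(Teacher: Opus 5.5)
Your proof is correct and follows essentially the paper's argument. Both write the TV distance as the total excess above $1/d$, show $\tau\ge 0$, and check coordinatewise that no excess grows; you split at $t$ rather than at $1/d$, and your bound $B_i\le\max\{t,A_i\}=A_i$ is slightly more careful than the paper's claim $B_i=A_i-\tau$. One small fix: \emph{non-increasing} alone does not exclude a negative root of $\sum_i\max\{t,A_i-\tau\}=1$, so you need strict decrease on $\tau\le 0$ (true since $A_d>t$), which is the paper's one-line contradiction that $\tau<0$ would force $B_i\ge A_i-\tau>A_i$ for every $i$.
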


\begin{proof}
Let $A = (A_1, \dots, A_d)$ be the original PMF, $B = (B_1, \dots, B_d)$ the water-filled PMF, and $U = (U_1, \dots, U_d)$ the uniform distribution with $U_i = 1/d$. We also denote by $Q = (Q_1, \dots, Q_d)$ an arbitrary PMF for intermediate arguments.
For any PMF $Q$ on $\Omega$, the total variation distance to the uniform distribution $U$ satisfies
\begin{align}
    \|Q-U\|_{\mathrm{TV}}
    = \frac12 \sum_{i=1}^d |Q_i-\frac{1}{d}|
    = \sum_{i:\,Q_i>\frac{1}{d}}\left(Q_i-\frac{1}{d}\right).
\end{align}
Therefore, it suffices to show that
\begin{align}
\sum_{i:B_i>\frac{1}{d}}\left(B_i-\frac{1}{d}\right) \le \sum_{i:A_i>\frac{1}{d}}\left(A_i-\frac{1}{d}\right).
\end{align}

Recall that $B_i=\max\{t,A_i-\tau\}$, where $\tau$ is chosen such that
$\sum_{i=1}^d B_i = 1$. We first show that $\tau\ge 0$.
Suppose, for the sake of contradiction, that $\tau < 0$. Then $A_i - \tau > A_i$ for all $i$, which implies $B_i\ge A_i-\tau>A_i$ for all $i$. Consequently, $\sum_{i=1}^d B_i > \sum_{i=1}^d A_i = 1$, which contradicts the normalization condition $\sum_{i=1}^d B_i > 1$. Hence, $\tau \ge 0$.

Consider any index $i$ such that $A_i \le \frac{1}{d}$.
Since $t\le\frac{1}{d}$ and $\tau\ge0$, we have
\begin{align}
    B_i = \max\{t,A_i-\tau\} \le \frac{1}{d} .
\end{align}
Hence, no coordinate whose original mass is at most $1/d$ can exceed
$1/d$ after water-filling.

Consider any index $i$ such that $A_i > \frac{1}{d}$.
As $A_i > t$ and $\tau\ge0$,
\begin{align}
    B_i = A_i - \tau \le A_i.
\end{align}
Hence, any excess mass above $1/d$ is weakly
reduced by the water-filling operation.

Combining the above observations, the set
$\{i:B_i>\frac{1}{d}\}$ is a subset of $\{i:A_i>\frac{1}{d}\}$, and for each such index, $B_i-\frac{1}{d} \le A_i-\frac{1}{d}$. Summing over all indices yields
\begin{align}
    \|B-U\|_{\mathrm{TV}} \le \|A-U\|_{\mathrm{TV}},
\end{align}
which completes the proof.
\end{proof}

\section{Proof of Theorem~\ref{lem1}}\label{pflem1}

In this section, we provide the proof of Theorem~\ref{lem1}. First, we restate Theorem~\ref{lem1}, and then provide the proof in the following order.
\begin{itemize}
    \item Rewrite the privacy constraint in \eqref{dp} in terms of the privacy parameters $\eps,\delta$, required samples $m$, noise distribution $\noise$, mixing coefficient $\beta$, and $a=P'_{syn}(i)$, $b=P'_{syn}(j)$, where $i$ and $j$ are the two distinct indices at which the normalized histograms, $\Ppriv$ and $\Ppriv'$, of neighboring datasets $D_{prv}$ and $D'_{prv}$ differ. Specifically, we show that the privacy constraint simplifies to:
    \begin{align}
    \sup_{a,b}E_{e^\eps}(P^{\otimes m}\|Q^{\otimes m})\leq\delta
\end{align}
where $P$ and $Q$ are distributions given by $P=\left[a+\frac{1-\beta}{n},\; b-\frac{1-\beta}{n},\; 1-a-b\right]$ and $Q=\left[a,\;b,\;1-a-b\right]$, and $E_{e^\eps}(\cdot\|\cdot)$ is the $E_\gamma$ divergence.
    \item We show that $\sup_{a,b}E_{e^\eps}(P^{\otimes m}\|Q^{\otimes m})$ is non-increasing in $a$ and $b$ for any given $\eps,\delta, m,n,\beta,\noise$, and characterize the exact $\sup_{a,b}E_{e^\eps}(P^{\otimes m}\|Q^{\otimes m})$, which leads to Theorem~\ref{lem1}.
\end{itemize}

\noindent\textbf{Theorem~\ref{lem1} restated:}[Feasible mixing weights $\beta$]
Fix $\eps>0$, $\delta\in(0,1)$, private dataset size $n$, and the required number of synthetic samples $m\geq1$. Let $P_{{nse}}\in \Delta_d$ be any noise distribution and define
$N_{min_1}$ and $N_{min_2}$ to be the two smallest elements of $\noise$ with $N_{min_1}\leq N_{min_2}$.
Then, for every $\beta\geq\beta_{\min}(N_{min_1},N_{min_2})$, the $m$ synthetic samples drawn from $P_{syn}$ in \eqref{structure1} satisfy the DP constraint in~\eqref{dp}, where $\beta_{min}(N_{min_1},N_{min_2})$ is given by the solution to $\beta$ in:
\begin{align}
    \max\{h_1(\beta,\eps,n,m,N_{min_1},N_{min_2}),h_2(\beta,\eps,n,m,N_{min_1},N_{min_2})\}=\delta
\end{align}
where
\begin{align}
h_1(\beta,\eps,n,m,N_{min_1},N_{min_2})&=\sum_{\substack{\lambda,\mu\geq0\\\lambda+\mu\leq m}}\frac{m!}{\lambda!\mu!(m-\lambda-\mu)!}\left(1-\beta (N_{min_1}+N_{min_2})-\frac{1-\beta}{n}\right)^{m-\lambda-\mu}\nonumber\\
&\qquad\left[\left(\beta N_{min_1}+\frac{1-\beta}{n}\right)^{\lambda}(\beta N_{min_2})^{\mu}-e^\eps \left(\beta N_{min_2}+\frac{1-\beta}{n}\!\right)^{\mu}(\beta N_{min_1})^{\lambda}\right]_+
\end{align}
where $[x]_+=\max\{x,0\}$, and $h_2(\beta,\eps,n,m,N_{min_1},N_{min_2})$ is the same as $h_1(\beta,\eps,n,m,N_{min_1},N_{min_2})$ with $N_{min_1}$ and $N_{min_2}$ swapped. Moreover, $\beta_{min}(N_{min_1},N_{min_2})$ is decreasing in $N_{min_1}$ and $N_{min_2}$.

\begin{lemma}\label{priv-sim}
     For any fixed privacy parameters $\eps>0$ and $\delta\in(0,1)$, required number of synthetic samples $m$, and any fixed noise distribution $\noise\in\Delta_d$, the privacy constraint in \eqref{dp} is simplified to:
    \begin{align}\label{lem1-state}
        \sup_{(a,b)\in\mathcal{F}(\beta,\noise)}E_{e^\eps}(P^{\otimes m}\|Q^{\otimes m})=\sup_{(a,b)\in\mathcal{F}(\beta,\noise)}f(a,b,\eps,\beta)\leq\delta
    \end{align}
    where $E_{e^\eps}(P^{\otimes m}\|Q^{\otimes m})$ denotes the hockey-stick divergence ($E_\gamma$ divergence) between the product distributions $P^{\otimes m}$ and $Q^{\otimes m}$ with $P$ and $Q$ denoting the three-point distributions $P=\left[a+\frac{1-\beta}{n},\; b-\frac{1-\beta}{n},\; 1-a-b\right]$ and $Q=\left[a,\;b,\;1-a-b\right]$. $\mathcal{F}(\beta,\noise)$ is the feasible set of $(a,b)$, and,
    \begin{align}
    f(a,b,\eps,\beta)=\sum_{\substack{\lambda,\mu\geq0\\\lambda+\mu\leq m}}\frac{m!}{\lambda!\mu!(m-\lambda-\mu)!}(1-a-b)^{m-\lambda-\mu}\left[\left(a+\frac{1-\beta}{n}\right)^\lambda\left(b-\frac{1-\beta}{n}\right)^\mu-e^\eps a^\lambda b^\mu\right]_+\leq\delta
\end{align}
\end{lemma}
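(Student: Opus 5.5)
The plan is to reduce the $m$-sample DP constraint to a hockey-stick divergence between product distributions, and then use the two-point sparsity of neighboring histograms to collapse the domain to three symbols. First, since $D_{syn}$ consists of $m$ i.i.d.\ draws, its law is $P_{syn}^{\otimes m}$. The standard characterization of $(\eps,\delta)$-DP then says that \eqref{dp} holds iff $\sup_{S}\big(P_{syn}^{\otimes m}(S)-e^\eps P_{syn}'^{\otimes m}(S)\big)\le\delta$ for all neighboring pairs. This supremum equals $E_{e^\eps}(P_{syn}^{\otimes m}\|P_{syn}'^{\otimes m})$, and it is attained by $S=\{y: P_{syn}^{\otimes m}(y)>e^\eps P_{syn}'^{\otimes m}(y)\}$.

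Next, fix neighbors $D_{prv},D'_{prv}$. Replacing one record moves mass $1/n$ from some bin $j$ to some bin $i\neq j$, or changes nothing if the record's value is unchanged, in which case the divergence is $0$. Hence $P_{syn}-P'_{syn}=\frac{1-\beta}{n}(e_i-e_j)$. Set $a=P'_{syn}(i)$ and $b=P'_{syn}(j)$.

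The map $\phi$ that sends each symbol $\notin\{i,j\}$ to a single merged symbol is sufficient for the pair $(P_{syn},P'_{syn})$, because the likelihood ratio is identically $1$ off $\{i,j\}$. Applying $\phi$ coordinatewise to the product therefore preserves $E_{e^\eps}$: the ratio $P_{syn}^{\otimes m}(y)/P_{syn}'^{\otimes m}(y)$ depends on $y$ only through the counts of $i$ and $j$. After this reduction the two distributions become the three-point $P=[a+\frac{1-\beta}{n},\,b-\frac{1-\beta}{n},\,1-a-b]$ and $Q=[a,b,1-a-b]$.

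To evaluate $E_{e^\eps}(P^{\otimes m}\|Q^{\otimes m})$ explicitly, group sequences by the counts $(\lambda,\mu)$ of symbols $1$ and $2$. There are $\frac{m!}{\lambda!\mu!(m-\lambda-\mu)!}$ sequences in each type class, and each contributes $(1-a-b)^{m-\lambda-\mu}\big[(a+\tfrac{1-\beta}{n})^\lambda(b-\tfrac{1-\beta}{n})^\mu-e^\eps a^\lambda b^\mu\big]$. Since the sign of this bracket is constant on a type class, the positive part of the pointwise difference sums to exactly $f(a,b,\eps,\beta)$.

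It remains to define $\mathcal{F}(\beta,\noise)$ as the set of $(a,b)$ realizable by some neighboring pair and some ordered choice of $(i,j)$. Explicitly, $a=(1-\beta)\bar P'^{[i]}_{prv}+\beta\noise^{[i]}$ and $b=(1-\beta)\bar P'^{[j]}_{prv}+\beta\noise^{[j]}$, where $\bar P'_{prv}$ ranges over normalized histograms with $\bar P'^{[j]}_{prv}\ge 1/n$. Taking the supremum over all neighboring pairs is then the same as taking it over $\mathcal{F}$, which gives \eqref{lem1-state}.

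The main obstacle is making the sufficiency/merging step rigorous for the hockey-stick divergence, rather than only invoking data processing. Data processing gives only "$\ge$" for the merged pair, so I would verify equality directly through the optimal-set form: $\phi^{-1}$ of the merged optimal set is exactly the original optimal set, because the ratio factorizes through the type counts. A secondary care point is the orientation of the pair. Both $(i,j)$ orderings and both directions $D\leftrightarrow D'$ are covered by letting $(a,b)$ range over $\mathcal{F}$ with $i,j$ arbitrary, and this symmetry is what later produces the functions $h_1$ and $h_2$.
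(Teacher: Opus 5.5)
Your proposal is correct and follows the same route as the paper: you rewrite \eqref{dp} as $E_{e^\eps}(P_{syn}^{\otimes m}\|P_{syn}'^{\otimes m})\le\delta$, use the fact that neighboring histograms differ by $\pm\frac{1}{n}$ in two bins so the likelihood ratio depends only on the counts of $i$ and $j$, collapse to the three-point pair $(P,Q)$, and evaluate by multinomial type classes to obtain $f(a,b,\eps,\beta)$. You also make precise two points the paper only asserts, namely that merging the symbols outside $\{i,j\}$ preserves $E_{e^\eps}$ with equality, and an explicit description of $\mathcal{F}(\beta,\noise)$ that is consistent with the later extremal choices $a_{min}=\beta N_{min_1}$ and $b_{min}=\frac{1-\beta}{n}+\beta N_{min_2}$.
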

\begin{Proof}
Consider the privacy constraint in \eqref{dp}: $\mathbb{P}(D_{syn}\in S)\leq e^\eps \mathbb{P}(D'_{syn}\in S)+\delta$, $ \forall S\subseteq\mathcal{X}^m$, $\forall D_{prv},D'_{prv}$, which is equivalent to:
\begin{align}\label{version1}
    \max_{S\subseteq\mathcal{X}^m}\mathbb{P}(D_{syn}\in S)-e^\eps \mathbb{P}(D'_{syn}\in S)\leq \delta,\quad \forall D_{prv}, D'_{prv} 
\end{align}
Let $\mathbf{Y}=\{y_1,\dotsc,y_m\}\in\mathcal{X}^m$ denote any generic candidate set of synthetic samples. Then, we can write the privacy constraint in \eqref{version1} as:
\begin{align}
    \sum_{\mathbf{Y}\in\mathcal{X}^m}\left[\prod_{k=1}^mP_{syn}(y_k)-e^\eps \prod_{k=1}^mP'_{syn}(y_k)\right]_+\leq\delta,\quad\forall D_{prv}, D'_{prv}\label{hockey1}
\end{align}
where $[x]_+=\max\{x,0\}$. From the construction of $P_{syn}$ in \eqref{structure1}, we have,
\begin{align}
    P_{syn}(y)&=(1-\beta)\Ppriv(y)+\beta \noise(y)\\
    P_{syn}'(y)&=(1-\beta)\Ppriv'(y)+\beta \noise(y)
\end{align}
for any $y\in\mathcal{X}$ where,
\begin{align}
    \Ppriv(y)&=\begin{cases}
        \Ppriv'(y)+\frac{1}{n}, & y=i\\
        \Ppriv'(y)-\frac{1}{n}, & y=j\\
        \Ppriv'(y), & y\neq i,j
    \end{cases}
\end{align}
for some $i,j\in\mathcal{X}$. 
Since each $y_k$ is sampled i.i.d.\ from $P_{syn}$ (or $P'_{syn}$), \eqref{hockey1} can be written in terms of the $E_\gamma$ divergence between the product distributions $P_{syn}^{\otimes m}$ and $P_{syn}'^{\otimes m}$. 
\begin{align}\label{hockey2}
    E_{e^\eps}(P_{syn}^{\otimes m}\|P_{syn}'^{\otimes m})\leq\delta,\quad\forall D_{prv},D'_{prv}
\end{align}
For product distributions, the likelihood ratio of any realization $\mathbf{Y}=(y_1,\ldots,y_m)$ is $\prod_{k=1}^m \frac{P_{syn}(y_k)}{P'_{syn}(y_k)}$, which depends only on the number of times $i$ and $j$ appear among the $m$ samples, since $\frac{P_{syn}(y)}{P'_{syn}(y)}=1$ for all $y\notin\{i,j\}$. Therefore, $E_{e^\eps}(P_{syn}^{\otimes m}\|P_{syn}'^{\otimes m})$ depends on $P_{syn}$ and $P'_{syn}$ only through the probabilities assigned to $i$, $j$, and the total remaining mass. With the notation $P'_{syn}(i)=a$, $P'_{syn}(j)=b$, $P_{syn}(i)=a+\frac{1-\beta}{n}$ and $P_{syn}(j)=b-\frac{1-\beta}{n}$, we can equivalently write \eqref{hockey2} as:
\begin{align}\label{hockey3}
    E_{e^\eps}(P^{\otimes m}\|Q^{\otimes m})\leq\delta,\quad \forall D_{prv},D'_{prv}
\end{align}
where $P$ and $Q$ are three-point distributions given by $P=\left[a+\frac{1-\beta}{n},\; b-\frac{1-\beta}{n},\; 1-a-b\right]$ and $Q=\left[a,\;b,\;1-a-b\right]$, corresponding to the probabilities of outcomes $i$, $j$, and the aggregate remainder in $P_{syn}$ and $P'_{syn}$, respectively. Moreover, we can replace the $\forall D_{prv},D'_{prv}$ term in \eqref{hockey3} by the supremum over $(a,b)$ for any given $\beta$ and $\noise$. This gives the following equivalent form of the privacy constraint in \eqref{hockey3}:
\begin{align}\label{hockey4}
    \sup_{(a,b)\in\mathcal{F}(\beta,\noise)}E_{e^\eps}(P^{\otimes m}\|Q^{\otimes m})\leq\delta
\end{align}

The numbers of samples ``$i$", ``$j$", or ``anything other than $i,j$" (sampled i.i.d.) follow a multinomial distribution with probabilities $(a+\frac{1-\beta}{n}, b-\frac{1-\beta}{n}, 1-a-b$) and $(a,b,1-a-b)$, respectively, for $P^{\otimes m}$ and $Q^{\otimes m}$. This simplifies the privacy constraint in \eqref{hockey4} to:
\begin{align}\label{counts3}
    \sup_{(a,b)\in\mathcal{F}(\beta,\noise)}\quad\sum_{\substack{\lambda,\mu\geq0\\\lambda+\mu\leq m}}\frac{m!}{\lambda!\mu!(m-\lambda-\mu)!}(1-a-b)^{m-\lambda-\mu}\left[\left(a+\frac{1-\beta}{n}\right)^\lambda\left(b-\frac{1-\beta}{n}\right)^\mu-e^\eps a^\lambda b^\mu\right]_+\leq\delta
\end{align}
This completes the proof of Lemma~\ref{priv-sim}.
\end{Proof}

\begin{lemma}\label{E-gamma-decreasing}
    For any fixed $\noise\in\Delta_d$ and $\beta\in[0,1]$, $E_{e^\eps}(P^{\otimes m}\|Q^{\otimes m})$ is non-increasing in $a$ and $b$.
\end{lemma}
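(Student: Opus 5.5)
The plan is to prove monotonicity with a data-processing argument rather than by differentiating the multinomial sum $f(a,b,\eps,\beta)$ from Lemma~\ref{priv-sim}. Write $c=\frac{1-\beta}{n}$, $P_{a,b}=[a+c,\ b-c,\ 1-a-b]$ and $Q_{a,b}=[a,\ b,\ 1-a-b]$. Fix $(a,b)$ with $1-a-b>0$ and some $\Delta\in(0,1-a-b]$. I will show that
\begin{align*}
E_{e^\eps}(P_{a+\Delta,b}^{\otimes m}\|Q_{a+\Delta,b}^{\otimes m})\leq E_{e^\eps}(P_{a,b}^{\otimes m}\|Q_{a,b}^{\otimes m}).
\end{align*}
The same argument with the roles of the first two coordinates exchanged gives the statement for $b$.

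The key step is to build one Markov kernel $K$ on the three-point alphabet $\{1,2,3\}$ that maps both $P_{a,b}$ to $P_{a+\Delta,b}$ and $Q_{a,b}$ to $Q_{a+\Delta,b}$. Here symbol $3$ stands for the aggregate remainder. I take $K$ to fix symbols $1$ and $2$. It sends symbol $3$ to symbol $1$ with probability $p=\Delta/(1-a-b)$ and leaves it at $3$ with probability $1-p$. Then
\begin{align*}
P_{a,b}K=[a+c+\Delta,\ b-c,\ 1-a-b-\Delta]=P_{a+\Delta,b},\qquad Q_{a,b}K=Q_{a+\Delta,b}.
\end{align*}
The crucial point is that the third coordinate is identical under $P$ and $Q$. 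Because of this, moving mass out of it into coordinate $1$ adds the same amount $\Delta$ to both distributions. The sensitivity gap $c$ is therefore preserved, while the baseline mass $a$ grows.

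Next I lift this to $m$ samples. The product kernel $K^{\otimes m}$ acts independently on each coordinate, so $P_{a,b}^{\otimes m}K^{\otimes m}=P_{a+\Delta,b}^{\otimes m}$, and the same holds for $Q$. The hockey-stick divergence $E_\gamma$ is an $f$-divergence with $f(x)=[x-\gamma]_+$, which is convex. It therefore satisfies the data-processing inequality under any common Markov kernel. Equivalently, in the form \eqref{hockey1}: for any set $S$ at the output, its preimage under the randomization is a randomized test at the input, and randomized tests cannot beat the optimal deterministic one. Applying this inequality gives the claimed inequality, so $E_{e^\eps}(P^{\otimes m}\|Q^{\otimes m})$ is non-increasing in $a$. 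The argument for $b$ uses the kernel that moves mass from $3$ to $2$, which maps $[a+c,b-c,\cdot]$ to $[a+c,b+\Delta-c,\cdot]$.

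The main obstacle I expect is bookkeeping rather than the core idea. First, the statement should be read on the feasible set $\mathcal{F}(\beta,\noise)$. There we need $b\geq c$ and $1-a-b\geq0$ so that all entries are valid probabilities, and an increase in $a$ or $b$ is only possible while $1-a-b>0$. At the boundary $1-a-b=0$ the claim is vacuous in that direction. Second, one should check that the sign convention matches, namely that $P$ carries the $+c$ on coordinate $i$. I would also note that the divergence is monotone along any path inside the simplex that transfers remainder mass, which is exactly what Theorem~\ref{lem1} needs. There the supremum over $(a,b)$ is attained at the smallest feasible values $a=\beta N_{min}$ and $b-c\geq\beta N_{min}$, the minimal noise masses.
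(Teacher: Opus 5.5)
Your proposal is correct and matches the paper's proof essentially step for step: the paper builds the same channel that fixes $i$ and $j$ and sends the aggregate ``other'' symbol to $i$ (or to $j$ for the $b$ direction) with probability $\Delta/(1-a-b)$, applies it independently to all $m$ samples, and invokes the data-processing inequality for the hockey-stick divergence. Your extra bookkeeping on the feasible range of $(a,b)$ (requiring $b\geq c$, and $1-a-b>0$ for the increase to be possible) is a harmless refinement that the paper leaves implicit.
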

\begin{Proof}
First, we show that $E_{e^\eps}(P^{\otimes m}\|Q^{\otimes m})$ is non-increasing in $a$ for any fixed $b$. For this, define 
$P_a=\left[a+\frac{1-\beta}{n},\; b-\frac{1-\beta}{n},\; 1-a-b\right]$ and $Q_a=\left[a,\;b,\;1-a-b\right]$, 
and similarly $P_{a+\Delta}=\left[a+\Delta+\frac{1-\beta}{n},\; b-\frac{1-\beta}{n},\; 1-a-\Delta-b\right]$ and $Q_{a+\Delta}=\left[a+\Delta,\;b,\;1-a-\Delta-b\right]$ for some small $\Delta>0$. Consider the following stochastic channel $T$, applied independently to each sample $y$, sampled from $P_a$ and $Q_a$:
\begin{align}
    T(y' \mid y) = \begin{cases}
        1, & y' = y, \quad y \in \{i,j\}, \\
        \dfrac{\Delta}{1-a-b}, & y' = i, \quad y =\text{``other"}, \\[4pt]
        1 - \dfrac{\Delta}{1-a-b}, & y'=\text{``other"}, \quad y=\text{``other"}.
    \end{cases}
\end{align}
where ``other" denotes the combined event of all $y\neq i,j$. The output distributions $TP_a$ and $TQ_a$ are given by:
\begin{align}
    (TP_a)(i)&=a+\frac{1-\beta}{n}+\Delta,\\
    (TP_a)(j)&=b-\frac{1-\beta}{n},\\
    (TP_a)(\text{``other"})&=1-a-b-\Delta,
\end{align}
and
\begin{align}
    (TQ_a)(i)&=a+\Delta,\\
    (TQ_a)(j)&=b,\\
    (TQ_a)(\text{``other"})&=1-a-b-\Delta.
\end{align}
These match $P_{a+\Delta}$ and $Q_{a+\Delta}$, respectively:
\begin{align}
    TP_a = P_{a+\Delta}\quad\text{and}\quad TQ_a = Q_{a+\Delta}.
\end{align}
Since the channel $T$ is applied to all $m$ samples independently and identically, we have
\begin{align}
    T^{\otimes m}: P_a^{\otimes m}\mapsto P_{a+\Delta}^{\otimes m}\quad\text{and}\quad T^{\otimes m}: Q_a^{\otimes m}\mapsto Q_{a+\Delta}^{\otimes m}.
\end{align}
By the data processing inequality for the hockey-stick divergence~\cite{sason2016f}, 
which states that $E_\gamma(TP\|TQ)\leq E_\gamma(P\|Q)$ for any stochastic channel $T$, we conclude that
\begin{align}\label{monotone}
    E_{e^\varepsilon}(P_{a+\Delta}^{\otimes m}\|Q_{a+\Delta}^{\otimes m}) \leq E_{e^\varepsilon}(P_a^{\otimes m}\|Q_a^{\otimes m}).
\end{align}
This establishes that $E_{e^\varepsilon}(P^{\otimes m}\|Q^{\otimes m})$ is non-increasing in $a$. An analogous argument, with the channel $T$ modified to move mass from outcome ``other" to outcome $j$ instead of outcome $i$, shows that $E_{e^\varepsilon}(P^{\otimes m}\|Q^{\otimes m})$ is also non-increasing in $b$. 

\end{Proof}

From Lemma~\ref{E-gamma-decreasing}, for fixed $\eps$, $\noise$, and $\beta$, the maximum $E_{e^\varepsilon}(P^{\otimes m}\|Q^{\otimes m})$ is achieved at the smallest values of $a$ and $b$, denoted by $a_{min}$ and $b_{min}$. Recall that $a=P'_{syn}(i)$ and $b=P'_{syn}(j)$ where,
\begin{align}
    P'_{syn}(y)=(1-\beta)\Ppriv'(y)+\beta\noise(y),\quad\forall y\in\mathcal{X}
\end{align}
Thus, the smallest $a$ and $b$ for fixed $\beta$ and $\noise$ are given by,
\begin{align}
    a_{min}&=\beta N_{min_1}\\
    b_{min}&=\frac{1-\beta}{n}+\beta N_{min_2}
\end{align}
or in the reverse order:
\begin{align}
    a'_{min}&=\beta N_{min_2}\\
    b'_{min}&=\frac{1-\beta}{n}+\beta N_{min_1}
\end{align}
where $N_{min_1}$ and $N_{min_2}$ are the two smallest elements of $\noise$. This simplifies the privacy constraint in Lemma~\ref{priv-sim} to:
\begin{align}
    \max\{f(a_{min},b_{min},\eps,\beta),f(a'_{min},b'_{min},\eps,\beta)\}\leq\delta
\end{align}
Thus, for given $\eps,\delta,m$ and $\noise$, the optimum mixing parameter is given by,
\begin{align}\label{beta-min-simpl}
    \beta_{min}(\noise)=\inf\left\{\beta\in[0,1]:\max\{f(a_{min},b_{min},\eps,\beta),f(a'_{min},b'_{min},\eps,\beta)\}\leq\delta\right\}
\end{align}
where we have ignored the dependency of $\beta$ on $\eps,\delta$ and $m$ since they are fixed for a given setting, and $\noise$ is the only variable that is yet to be optimized. Note that $\beta_{min}(\noise)$ depends on $\noise$ only though its two smallest elements. Thus, for a given $\noise$, the minimum mixing coefficient is given by: 
\begin{align}\label{beta-min-simpl}
    \beta_{min}(N_{min_1},N_{min_2})=\inf\left\{\beta\in[0,1]:\max\{f(a_{min},b_{min},\eps,\beta),f(a'_{min},b'_{min},\eps,\beta)\}\leq\delta\right\}
\end{align}
where
\begin{align}
    f(a_{min},b_{min},\eps,\beta)&=\sum_{\substack{\lambda,\mu\geq0\\\lambda+\mu\leq m}}\frac{m!}{\lambda!\mu!(m-\lambda-\mu)!}\left(1-\beta (N_{min_1}+N_{min_2})-\frac{1-\beta}{n}\right)^{m-\lambda-\mu}\nonumber\\
    &\qquad\left[\left(\beta N_{min_1}+\frac{1-\beta}{n}\right)^{\lambda}(\beta N_{min_2})^{\mu}-e^\eps\left(\beta N_{min_2}+\frac{1-\beta}{n}\right)^{\mu}(\beta N_{min_1})^{\lambda}\right]_+
\end{align}
and 
\begin{align}
    f(a'_{min},b'_{min},\eps,\beta)&=\sum_{\substack{\lambda,\mu\geq0\\\lambda+\mu\leq m}}\frac{m!}{\lambda!\mu!(m-\lambda-\mu)!}\left(1-\beta (N_{min_1}+N_{min_2})-\frac{1-\beta}{n}\right)^{m-\lambda-\mu}\nonumber\\
    &\qquad\left[\left(\beta N_{min_2}+\frac{1-\beta}{n}\right)^{\!\lambda}(\beta N_{min_1})^{\mu}-e^\eps \left(\beta N_{min_1}+\frac{1-\beta}{n}\right)^{\mu}(\beta N_{min_2})^{\lambda}\right]_+
\end{align}
Next, we show that $f(a_{min},b_{min},\eps,\beta)$ and $f(a'_{min},b'_{min},\eps,\beta)$ are decreasing in $\beta$. Thus, for any $\noise$ with a fixed $N_{min_1}$ and $N_{min_2}$, any $\beta\geq\beta(N_{min_1},N_{min_2})$ satisfies the privacy constraint in \eqref{dp}. 
\begin{lemma}\label{f_decrease}
    For any fixed $N_{min_1},N_{min_2}$ and $\eps>0$, $\max\{f(a_{min},b_{min},\eps,\beta),f(a'_{min},b'_{min},\eps,\beta)\}$ is decreasing in $\beta$.
\end{lemma}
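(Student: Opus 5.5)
The plan is to avoid differentiating the multinomial sum $f$ directly and instead use a post-processing (data processing) argument. By Lemma~\ref{priv-sim} and Lemma~\ref{E-gamma-decreasing}, the quantity $\max\{f(a_{min},b_{min},\eps,\beta),f(a'_{min},b'_{min},\eps,\beta)\}$ equals the worst-case hockey-stick divergence
\begin{align*}
\Phi(\beta):=\sup_{D_{prv},D'_{prv}}E_{e^\eps}\big(P_{syn,\beta}^{\otimes m}\,\|\,P_{syn,\beta}'^{\otimes m}\big),
\end{align*}
taken over neighboring datasets, where $P_{syn,\beta}=(1-\beta)\Ppriv+\beta\noise$. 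Here I would first confirm that $(a_{min},b_{min})$ and $(a'_{min},b'_{min})$ are actually attained by some neighboring pair: take $\Ppriv'(i)=0$ and $\Ppriv'(j)=1/n$, with $i,j$ placed at the two smallest coordinates of $\noise$. If attainment fails in some degenerate case, Lemma~\ref{E-gamma-decreasing} still identifies the supremum as the limit at the boundary. So it suffices to show that $\Phi$ is non-increasing in $\beta$.

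Fix $0\le\beta<\beta'\le1$ and set $\alpha=(\beta'-\beta)/(1-\beta)\in(0,1]$. Then
\begin{align*}
P_{syn,\beta'}=(1-\alpha)P_{syn,\beta}+\alpha\noise,
\end{align*}
and the same identity holds for the primed distributions with the \emph{same} $\alpha$ and the same $\noise$. Define the channel $T(y'\mid y)=(1-\alpha)\mathbb{1}\{y'=y\}+\alpha\noise(y')$, i.e., keep the sample with probability $1-\alpha$, otherwise replace it by a fresh draw from $\noise$. This channel does not depend on the private data, and it maps $P_{syn,\beta}\mapsto P_{syn,\beta'}$ and $P'_{syn,\beta}\mapsto P'_{syn,\beta'}$. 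Applying $T^{\otimes m}$ coordinatewise and invoking the data processing inequality for $E_\gamma$ (as already used in the proof of Lemma~\ref{E-gamma-decreasing}) gives, for every neighboring pair,
\begin{align*}
E_{e^\eps}\big(P_{syn,\beta'}^{\otimes m}\|P_{syn,\beta'}'^{\otimes m}\big)\le E_{e^\eps}\big(P_{syn,\beta}^{\otimes m}\|P_{syn,\beta}'^{\otimes m}\big).
\end{align*}
Taking the supremum over neighbors yields $\Phi(\beta')\le\Phi(\beta)$, which is the claimed monotonicity.

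The main obstacle is the identification step: showing that the two closed-form expressions $f(a_{min},b_{min},\cdot)$ and $f(a'_{min},b'_{min},\cdot)$ are exactly $\Phi(\beta)$ for every $\beta$, not just for a fixed one. The worst-case pair must be chosen uniformly in $\beta$, which works because the minimizers $\Ppriv'(i)=0$, $\Ppriv'(j)=1/n$ do not depend on $\beta$. Care is also needed near $\beta$ where $b-\frac{1-\beta}{n}$ could be negative; this cannot happen, since $b_{min}-\frac{1-\beta}{n}=\beta N_{min_2}\ge0$.

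If strict decrease is required rather than non-increase, I would add a separate argument. At $\beta=1$ the two distributions coincide, so $\Phi(1)=0$. On the region where $\Phi>0$, the mixture channel strictly contracts $E_\gamma$, because it maps any positive-divergence pair to a strict convex combination of it with the identical pair $(\noise,\noise)$. For the threshold characterization in Theorem~\ref{lem1}, non-increase is enough: every $\beta\ge\beta_{min}$ then remains feasible.
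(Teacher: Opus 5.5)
Your proof is correct and essentially the paper's argument. The paper also writes $P_{syn,\beta}$ as a mixture of $P_{syn,\beta_0}$ with $\noise$, applies the same keep-or-resample channel to each of the $m$ samples, and invokes data processing for $E_{e^\eps}$; it then takes the supremum over neighbors and identifies it with the maximum of the two $f$'s via Lemma~\ref{E-gamma-decreasing}. Your explicit attainment check is valid whenever $d\ge3$ (put the other $n-1$ records in a third category), so the vague ``limit at the boundary'' fallback is unnecessary. Your strict-decrease remark matches what the paper makes precise in the proof of Claim~\ref{etas}, where the $m$-fold contraction factor is $1-\rho^m$ with $\rho$ the resampling probability.
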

\begin{Proof}
From the monotonicity of $E_{e^\eps}(P^{\otimes m}\|Q^{\otimes m})$ in $a$ and $b$ established in the proof of Lemma~\ref{E-gamma-decreasing}, for any fixed $\beta,\eps$, $N_{min_1}$ and $N_{min_2}$, we have the privacy constraint in \eqref{dp} simplified to:
\begin{align}\label{summary}
    \sup_{D_{prv},D'_{prv}}\!\!\!E_{e^\eps}(P_{syn}^{\otimes m} \| P_{syn}'^{\otimes m})=\!\!\sup_{(a,b)\in\mathcal{F}(\beta,N_{min_1},N_{min_2})}\!\!\!E_{e^\varepsilon}(P^{\otimes m}\|Q^{\otimes m})=\max\{E_{e^\eps}(P_*^{\otimes m}\|Q_*^{\otimes m}),E_{e^\eps}(P_{\#}^{\otimes m}\|Q_{\#}^{\otimes m})\} \leq \delta
\end{align}
where
\begin{align}
    P_*&=\left[a_{min}+\frac{1-\beta}{n},\; b_{min}-\frac{1-\beta}{n},\; 1-a_{min}-b_{min}\right]\\
    Q_*&=\left[a_{min},\;b_{min},\;1-a_{min}-b_{min}\right]
\end{align}
and 
\begin{align}
    P_{\#}&=\left[a'_{min}+\frac{1-\beta}{n},\; b'_{min}-\frac{1-\beta}{n},\; 1-a'_{min}-b'_{min}\right]\\
    Q_{\#}&=\left[a'_{min},\;b'_{min},\;1-a'_{min}-b'_{min}\right]
\end{align}
with $a_{min}=\beta N_{min_1}$, $b_{min}=\frac{1-\beta}{n}+\beta N_{min_2}$ and $a'_{min}=\beta N_{min_2}$ and $b'_{min}=\frac{1-\beta}{n}+\beta N_{min_1}$. Note that $f(a_{min},b_{min},\eps,\beta)=E_{e^\eps}(P_*^{\otimes m}\|Q_*^{\otimes m})$ and $f(a'_{min},b'_{min},\eps,\beta)=E_{e^\eps}(P_{\#}^{\otimes m}\|Q_{\#}^{\otimes m})$. Moreover, $a_{min},a'_{min},b_{min},b'_{min}$ are functions of $\beta$. Suppose for a given $N_{min_1}, N_{min_2}$, $\beta_0$ satisfies:\footnote{Since $a_{min}$, $a'_{min}$, $b_{min}$, $b'_{min}$ are functions of $\beta$, we make it explicit by writing $f(a_{min},b_{min},\eps,\beta)=f(a_{min}(\beta),b_{min}(\beta),\eps,\beta)$ and $f(a'_{min},b'_{min},\eps,\beta)=f(a'_{min}(\beta),b'_{min}(\beta),\eps,\beta)$.}
\begin{align}
\max\{f(a_{min}(\beta_0),b_{min}(\beta_0),\eps,\beta_0),f(a'_{min}(\beta_0),b'_{min}(\beta_0),\eps,\beta_0)\} \leq \delta    
\end{align}
In other words, the privacy constraint holds for all neighboring datasets at $\beta = \beta_0$. For any $\beta > \beta_0$, we can write:
\begin{align}\label{data_p}
    P_{syn,\beta} &= (1-\beta)\bar{P}_{prv} + \beta P_{nse} \nonumber\\
    &= \frac{1-\beta}{1-\beta_0}\left[(1-\beta_0)\bar{P}_{prv} + \beta_0 P_{nse}\right] + \left(1 - \frac{1-\beta}{1-\beta_0}\right)P_{nse} \nonumber\\
    &= \alpha \, P_{syn,\beta_0} + (1-\alpha) \, P_{nse}
\end{align}
where $\alpha = \frac{1-\beta}{1-\beta_0} \in [0,1]$ and $P_{syn,\beta_0}=(1-\beta_0)\bar{P}_{prv} + \beta_0 P_{nse}$. Similarly, for any neighboring dataset:
\begin{align}
    P'_{syn,\beta} = \alpha \, P'_{syn,\beta_0} + (1-\alpha) \, P_{nse}.
\end{align}
That is, both $P_{syn,\beta}$ and $P'_{syn,\beta}$ are obtained by applying the same stochastic channel $T$ to $P_{syn,\beta_0}$ and $P'_{syn,\beta_0}$, respectively, 
where $T$ takes the synthetic distribution as input and outputs a sample from the synthetic distribution with mixing parameter $\beta_0$ with probability $\alpha$ and a sample from $P_{nse}$ with probability $1-\alpha$.
Since $T$ is applied independently to each of the $m$ samples, by the data processing inequality for the hockey-stick divergence:
\begin{align}
        E_{e^\eps}(P_{syn,\beta}^{\otimes m} \| P_{syn,\beta}'^{\otimes m}) \leq E_{e^\eps}(P_{syn,\beta_0}^{\otimes m} \| P_{syn,\beta_0}'^{\otimes m}) \label{stage0}
\end{align}
Since \eqref{stage0} holds for every neighboring pair $(D_{prv}, D'_{prv})$, and the channel $T$ does not depend on the choice of neighboring pair, we can take the supremum over all neighboring pairs on both sides:
\begin{align}
    \sup_{D_{prv}, D'_{prv}} E_{e^\eps}(P_{syn,\beta}^{\otimes m} \| P_{syn,\beta}'^{\otimes m}) \leq \sup_{D_{prv}, D'_{prv}} E_{e^\eps}(P_{syn,\beta_0}^{\otimes m} \| P_{syn,\beta_0}'^{\otimes m}) \leq \delta.
\end{align}
where the last inequality follows from the fact that privacy is guaranteed at $\beta=\beta_0$. From the monotonicity of $E_{e^\eps}(P^{\otimes m}\|Q^{\otimes m})$ in $a$ and $b$ established in \eqref{monotone}, the supremum over all neighboring pairs at any fixed $\beta$ is achieved at $(a_{min},b_{min})$ or $(a'_{min},b'_{min})$ for a given $\noise$. Therefore,
\begin{align}
    \max\{f(a_{min}(\beta), b_{min}(\beta), \eps, \beta), f(a'_{min}(\beta), b'_{min}(\beta), \eps, \beta)\} &\leq \max\{f(a_{min}(\beta_0), b_{min}(\beta_0), \eps, \beta_0),f(a'_{min}(\beta_0), b'_{min}(\beta_0), \eps, \beta_0)\} \nonumber\\
    &\leq \delta,
\end{align}
establishing that $\max\{f(a_{min},b_{min},\eps,\beta),f(a'_{min},b'_{min},\eps,\beta)\}$  is decreasing in $\beta$.
\end{Proof}

With Lemma~\ref{f_decrease} and the characterization of $\beta_{min}(N_{min_1},N_{min_2})$ in \eqref{beta-min-simpl}, we claim that for any fixed $N_{min_1},N_{min_2}$ and $\eps$, any $\beta\geq\beta(N_{min_1},N_{min_2})$ satisfies the privacy constraint in \eqref{dp}. To simplify the process of computing $\beta_{min}(N_{min_1},N_{min_2})$ for given $N_{min_1},N_{min_2}$ from \eqref{beta-min-simpl}, we next show that $\max\{f(a_{min}(\beta), b_{min}(\beta), \eps, \beta),f(a'_{min}(\beta), b'_{min}(\beta), \eps, \beta)\}$ is continuous in $\beta$ (see Lemma~\ref{f_continuous}). Then, together with the fact that $\max\{f(a_{min}(\beta), b_{min}(\beta), \eps, \beta),f(a'_{min}(\beta), b'_{min}(\beta), \eps, \beta)\}$ is decreasing in $\beta$ (Lemma~\ref{f_decrease}), we find $\beta_{min}(N_{min_1},N_{min_2})$ for any given $N_{min_1},N_{min_2}$ by solving:
\begin{align}
    \max\{f(a_{min}(\beta), b_{min}(\beta), \eps, \beta),f(a'_{min}(\beta), b'_{min}(\beta), \eps, \beta)\}=\delta
\end{align}

\begin{lemma}\label{f_continuous}
    For any fixed $N_{min_1},N_{min_2}$ and $\eps$, $\max\{f(a_{min}(\beta), b_{min}(\beta), \eps, \beta),f(a'_{min}(\beta), b'_{min}(\beta), \eps, \beta)\}$ is continuous in $\beta$.
\end{lemma}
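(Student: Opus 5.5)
The plan is to write $f(a_{min}(\beta),b_{min}(\beta),\eps,\beta)$ as a finite sum of continuous functions of $\beta$. Continuity of the maximum of the two terms then follows, because the pointwise maximum of two continuous functions is continuous. By symmetry (swap $N_{min_1}$ and $N_{min_2}$) it suffices to treat $f(a_{min}(\beta),b_{min}(\beta),\eps,\beta)$.

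First I record that $a_{min}(\beta)=\beta N_{min_1}$ and $b_{min}(\beta)=\frac{1-\beta}{n}+\beta N_{min_2}$ are affine in $\beta$. Substituting into the expression of Lemma~\ref{priv-sim} gives
\begin{align*}
f=\sum_{\substack{\lambda,\mu\geq0\\\lambda+\mu\leq m}}\frac{m!}{\lambda!\mu!(m-\lambda-\mu)!}\,c(\beta)^{m-\lambda-\mu}\bigl[p_{\lambda,\mu}(\beta)-e^\eps q_{\lambda,\mu}(\beta)\bigr]_+,
\end{align*}
where $c(\beta)=1-\beta(N_{min_1}+N_{min_2})-\frac{1-\beta}{n}$, $p_{\lambda,\mu}(\beta)=\bigl(\beta N_{min_1}+\frac{1-\beta}{n}\bigr)^{\lambda}(\beta N_{min_2})^{\mu}$, and $q_{\lambda,\mu}(\beta)=\bigl(\beta N_{min_2}+\frac{1-\beta}{n}\bigr)^{\mu}(\beta N_{min_1})^{\lambda}$.

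Each of $c$, $p_{\lambda,\mu}$ and $q_{\lambda,\mu}$ is a polynomial in $\beta$, so each is continuous on $[0,1]$. The map $x\mapsto[x]_+=\max\{x,0\}$ is $1$-Lipschitz, so $[p_{\lambda,\mu}-e^\eps q_{\lambda,\mu}]_+$ is a composition of continuous maps and hence continuous. Multiplying by the continuous factor $c(\beta)^{m-\lambda-\mu}$ and the constant multinomial coefficient keeps continuity. Since the index set $\{(\lambda,\mu):\lambda+\mu\le m\}$ is finite, the sum is continuous.

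The only delicate point, and I expect it to be minor, is the convention $0^0=1$ at the endpoint $\beta=0$ (and possibly when $N_{min_1}=0$), where powers like $(\beta N_{min_1})^{\lambda}$ with $\lambda=0$ must be read as $1$. With this standard convention, $x\mapsto x^k$ is a polynomial for every integer $k\ge0$, including $k=0$, so no discontinuity arises.

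I would also note that every base above stays in $[0,1]$ for $\beta\in[0,1]$, which confirms that the formula of Lemma~\ref{priv-sim} is a valid expression (a genuine hockey-stick divergence of multinomials) throughout the interval. This is not needed for continuity itself. Finally, taking the maximum of the two continuous functions completes the proof.
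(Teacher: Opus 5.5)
Your proposal is correct and takes essentially the same route as the paper. Both write $f(a_{min}(\beta),b_{min}(\beta),\eps,\beta)$ as a finite multinomial sum of polynomial factors in $\beta$ composed with the continuous map $[\cdot]_+$, then conclude by closure of continuity under finite sums and pointwise maxima; your remark on the $0^0=1$ convention is a harmless clarification the paper leaves implicit.
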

\begin{Proof}
Consider:
\begin{align}
    f(a_{min}(\beta),b_{min}(\beta),\eps,\beta)&=\sum_{\substack{\lambda,\mu\geq0\\\lambda+\mu\leq m}}\!\frac{m!}{\lambda!\mu!(m-\lambda-\mu)!}\left(1-\beta (N_{min_1}+N_{min_2})-\frac{1-\beta}{n}\right)^{m-\lambda-\mu}\nonumber\\
    &\qquad\left[\left(\beta N_{min_1}+\frac{1-\beta}{n}\right)^{\lambda}(\beta N_{min_2})^{\mu}-e^\eps \left(\beta N_{min_2}+\frac{1-\beta}{n}\right)^{\mu}(\beta N_{min_1})^{\lambda}\right]_+
\end{align}
Each term in the summation is indexed by a fixed $(\lambda,\mu)$ with $\lambda,\mu\geq 0$ and $\lambda+\mu\leq m$, and takes the form
\begin{align}
    g_{\lambda,\mu}(\beta) = \frac{m!}{\lambda!\mu!(m-\lambda-\mu)!}\,c(\beta)^{m-\lambda-\mu}\,\,\left[r_1(\beta)^\lambda q_2(\beta)^{\mu} - e^\eps\, r_2(\beta)^\mu q_1(\beta)^{\lambda}\right]_+
\end{align}
where $c(\beta)=1-\beta (N_{min_1}+N_{min_2})-\frac{1-\beta}{n}$, $q_i(\beta)=\beta N_{min_i}$, and $r_i(\beta)=\beta N_{min_i}+\frac{1-\beta}{n}$. Each of $c(\beta)$, $q_i(\beta)$, and $r_i(\beta)$ is continuous in $\beta$ on $[0,1]$, and therefore $h(\beta) \triangleq r_1(\beta)^\lambda q_2(\beta)^{\mu} - e^\eps\, r_2(\beta)^\mu q_1(\beta)^{\lambda}$ is also continuous in $\beta$. Since the function $x\mapsto [x]_+ = \max\{0,x\}$ is continuous, the composition $[h(\beta)]_+$ is continuous. It follows that each $g_{\lambda,\mu}(\beta)$ is a product of continuous functions and hence continuous. Since the summation is over a finite set of $(\lambda,\mu)$ pairs that does not depend on $\beta$, the function
\begin{align}
    f(a_{min}(\beta),b_{min}(\beta),\eps,\beta) = \sum_{\substack{\lambda,\mu\geq0\\\lambda+\mu\leq m}} g_{\lambda,\mu}(\beta)
\end{align}
is a finite sum of continuous functions, and is therefore continuous in $\beta$. Similarly, $f(a'_{min}(\beta),b'_{min}(\beta),\eps,\beta)$ is also continuous in $\beta$. Thus, the maximum of the two functions is also continuous in $\beta$.
\end{Proof}

\begin{lemma}\label{lem:beta-decreasing}
$\beta_{min}(N_{min_1}, N_{min_2})$ as defined in \eqref{beta-min-simpl} is 
non-increasing in both $N_{min_1}$ and $N_{min_2}$.
\end{lemma}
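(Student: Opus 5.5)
We reduce the claim to a monotonicity statement about the privacy loss at a fixed $\beta$. For fixed $\beta$, write
\[
F(\beta;N_1,N_2)=\max\{f(a_{min},b_{min},\eps,\beta),\,f(a'_{min},b'_{min},\eps,\beta)\},
\]
with $a_{min}=\beta N_1$, $b_{min}=\frac{1-\beta}{n}+\beta N_2$, $a'_{min}=\beta N_2$ and $b'_{min}=\frac{1-\beta}{n}+\beta N_1$. By \eqref{beta-min-simpl}, $\beta_{min}(N_1,N_2)=\inf\{\beta\in[0,1]:F(\beta;N_1,N_2)\le\delta\}$. So it suffices to show that, for every fixed $\beta$, $F(\beta;N_1,N_2)$ is non-increasing in $N_1$ and in $N_2$. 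Indeed, take $N_1\le\tilde N_1$ (or $N_2\le\tilde N_2$). Then $F(\beta;\tilde N_1,N_2)\le F(\beta;N_1,N_2)$ for every $\beta$, so the feasible set for $(N_1,N_2)$ is contained in the feasible set for $(\tilde N_1,N_2)$. The infimum over the larger set can only be smaller, which gives $\beta_{min}(\tilde N_1,N_2)\le\beta_{min}(N_1,N_2)$. This set-inclusion argument works directly from the definition as an infimum, so it does not need Lemma~\ref{f_continuous} or Lemma~\ref{f_decrease}.

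\textbf{Monotonicity at fixed $\beta$.} The key observation is that $f(a,b,\eps,\beta)=E_{e^\eps}(P^{\otimes m}\|Q^{\otimes m})$, where $P$ and $Q$ are the three-point distributions of Lemma~\ref{priv-sim}. By Lemma~\ref{E-gamma-decreasing}, this quantity is non-increasing in $a$ and in $b$ separately, for fixed $\beta$. The shift $\frac{1-\beta}{n}$ in $P$ depends only on $\beta$, and not on $a$ or $b$, so the lemma applies with $\beta$ held fixed.
\begin{itemize}
\item Increasing $N_1$ to $\tilde N_1$ raises $a_{min}$ from $\beta N_1$ to $\beta\tilde N_1$ and leaves $b_{min}$ unchanged. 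Hence $f(a_{min},b_{min},\eps,\beta)$ does not increase.
\item In the swapped configuration, increasing $N_1$ raises $b'_{min}$ and leaves $a'_{min}$ unchanged. Hence $f(a'_{min},b'_{min},\eps,\beta)$ does not increase either.
\item The argument for $N_2$ is symmetric.
\end{itemize}
Each of the two terms is non-increasing, so their maximum is non-increasing as well.

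\textbf{Main obstacle: feasibility of the channel.} The point that needs care is that the channel in Lemma~\ref{E-gamma-decreasing} is well defined, i.e.\ that the ``other'' mass stays nonnegative: $1-a-b-\Delta\ge0$. Here
\[
1-a-b=1-\beta(N_1+N_2)-\frac{1-\beta}{n}.
\]
Since $N_1+N_2\le 1$ whenever $d\ge2$, this is nonnegative as long as the admissible range (for instance $N_2\le (1-N_1)/(d-1)$) is respected, and for $n\ge2$. If this quantity were exactly zero, the monotonicity would not be needed at that point. For $\beta=0$ the quantities $a_{min}$ and $a'_{min}$ do not depend on $N_1,N_2$ at all, so that case is immediate.

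We also need $b-\frac{1-\beta}{n}\ge 0$ so that $P$ is a valid distribution. This holds automatically by construction, since $b_{min}-\frac{1-\beta}{n}=\beta N_2\ge 0$.

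These checks only require the perturbed pair of $N$ values to stay inside the domain on which $\beta_{min}$ is defined. With them in place, the comparison above goes through and proves the lemma.
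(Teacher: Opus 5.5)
Your proof is correct and follows the paper's argument: you establish monotonicity of both worst-case terms at fixed $\beta$ via Lemma~\ref{E-gamma-decreasing}, using that $a_{min},b_{min},a'_{min},b'_{min}$ are non-decreasing in $N_{min_1},N_{min_2}$, and then transfer this to the infimum. Your set-inclusion transfer is slightly cleaner than the paper's step, which evaluates at $\beta^*=\beta_{min}(N_{min_1},N_{min_2})$ and asserts $g(\beta^*,N_{min_1},N_{min_2})\le\delta$ ``by definition'' (tacitly using that the infimum is attained, via Lemma~\ref{f_continuous}); also, the condition $n\ge 2$ in your feasibility check is unnecessary, since $1-\beta(N_{min_1}+N_{min_2})-\frac{1-\beta}{n}\ge(1-\beta)(1-1/n)\ge 0$ already holds for all $n\ge 1$.
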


\begin{Proof}
Recall the two worst-case pairs:
\begin{align}
    (a_{min},\ b_{min}) &= \left(\beta N_{min_1},\ \tfrac{1-\beta}{n} + \beta N_{min_2}\right), \\
    (a'_{min},\ b'_{min}) &= \left(\beta N_{min_2},\ \tfrac{1-\beta}{n} + \beta N_{min_1}\right).
\end{align}
All $a_{min}, b_{min}, a'_{min}, b'_{min}$ are linear and non-decreasing in $N_{min_1}$ and $N_{min_2}$ as $\beta\in[0,1]$. Since $f(a,b,\varepsilon,\beta)$ is non-increasing 
in $a$ and $b$ (Lemma~\ref{E-gamma-decreasing}), it follows that for any fixed $\beta$, $f(a_{min}, b_{min}, \varepsilon, \beta)$ and $f(a'_{min}, b'_{min}, \varepsilon, \beta)$ are each non-increasing in $N_{min_1}$ and $N_{min_2}$. Therefore their maximum,
\begin{align}
    g(\beta, N_{min_1}, N_{min_2}) 
    := \max\left\{f(a_{min}, b_{min}, \varepsilon, \beta),\ 
                  f(a'_{min}, b'_{min}, \varepsilon, \beta)\right\},
\end{align}
is also non-increasing in $N_{min_1}$ and $N_{min_2}$ for any fixed $\beta$.

Now fix any $N_{min_1} \leq N'_{min_1}$ (the argument for $N_{min_2}$ is identical). 
Let $\beta^* = \beta_{min}(N_{min_1}, N_{min_2})$ in \eqref{beta-min-simpl}, so that by definition:
\begin{align}
    g(\beta^*, N_{min_1}, N_{min_2}) \leq \delta.
\end{align}
Since $g$ is non-increasing in $N_{min_1}$:
\begin{align}
    g(\beta^*, N'_{min_1}, N_{min_2}) 
    \leq g(\beta^*, N_{min_1}, N_{min_2}) \leq \delta,
\end{align}
Therefore, $\beta^*$ is in the feasible range in \eqref{beta-min-simpl}:
\begin{align}
    \beta^*\in\{\beta\in[0,1]:g(\beta,N'_{min_1},N_{min_2})\leq\delta\}
\end{align}
Since 
$\beta_{min}(N'_{min_1}, N_{min_2})$ is the infimum over all feasible $\beta$:
\begin{align}
    \beta_{min}(N'_{min_1}, N_{min_2}) \leq \beta^* = \beta_{min}(N_{min_1}, N_{min_2}).
\end{align}
Hence $\beta_{min}(N_{min_1},N_{min_2})$ is non-increasing in $N_{min_1}$, and by the same argument, 
in $N_{min_2}$.
\end{Proof}

This completes the proof of Theorem~\ref{lem1}.

\section{Details from Section~\ref{technical_details}}\label{full}

\subsection{Proof of Asymptotic Optimality}\label{asymptotics}

The optimization in \eqref{ubmain} replaces 
\(\beta_{\min}(N_{\min_1},N_{\min_2})\) by the more conservative quantity 
\(\beta_{\min}(N_{\min_1},N_{\min_1})\). Since \(N_{\min_2}\ge N_{\min_1}\) and \(\beta_{\min}(N_{\min_1}, N_{\min_2})\) is non-increasing in its arguments (Lemma~\ref{lem:beta-decreasing}), this yields an upper bound on the original objective. Moreover, the simplex constraint $N_{min_1}+(d-1)N_{min_2}\leq1$ implies:
\begin{align}
0\le N_{\min_2}-N_{\min_1}
\le \frac{1}{d-1}.
\end{align}
\begin{lemma}\label{asymptotic-opt}
For any fixed $N_{min_1}\in[0,1/d]$, $\beta_{\min}(N_{min_1},N_{min_2})$ is continuous in $N_{min_2}$. Consequently,
\begin{align}
\sup_{P_{\mathrm{nse}}\in\Delta_d}
\left|
\beta_{\min}(N_{\min_1},N_{\min_2})
-
\beta_{\min}(N_{\min_1},N_{\min_1})
\right|
\to 0
\qquad\text{as }d\to\infty.
\end{align}
Moreover, the optimal value of the upper-bound problem in \eqref{ubmain} converges to the optimal value of the original problem. Any optimizer of the upper-bound problem is therefore asymptotically optimal for the original problem. 
\end{lemma}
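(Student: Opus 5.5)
The plan has three steps: continuity of $\beta_{\min}$ in its second argument, a uniform bound as $d\to\infty$, and transfer of the bound to the optimal values.

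\textbf{Continuity.} Write $g(\beta,N_1,N_2)=\max\{f(a_{min},b_{min},\eps,\beta),f(a'_{min},b'_{min},\eps,\beta)\}$. This is a finite sum of terms $c^{m-\lambda-\mu}[\,\cdot\,]_+$, and each term is a polynomial in $(\beta,N_1,N_2)$ composed with $[\cdot]_+$. Hence $g$ is jointly continuous on the compact set $[0,1]\times[0,1/d]\times[0,1]$, and therefore uniformly continuous there; this is the argument of Lemma~\ref{f_continuous} with $N_{min_2}$ also treated as a variable.

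By Lemma~\ref{f_decrease}, $g$ is non-increasing in $\beta$. By Lemma~\ref{lem:beta-decreasing}, $g$ is also non-increasing in $N_2$. Set $\beta_0=\beta_{\min}(N_1,N_2)$. Continuity from the right in $N_2$ follows from monotonicity together with $g(\beta_0,N_2',\cdot)\le g(\beta_0,N_2,\cdot)\le\delta$. For continuity from the left, and for the uniform statement, I need a strict-decrease or crossing property: for each $\eta>0$, $g(\beta_0-\eta,N_1,N_2)>\delta$ by the definition of the infimum.

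Suppose $N_2'\to N_2$. Uniform continuity gives $g(\beta_0-\eta,N_1,N_2')>\delta$ once $|N_2'-N_2|$ is small enough. Monotonicity in $\beta$ then forces $\beta_{\min}(N_1,N_2')\ge\beta_0-\eta$. Combined with the upper bound from monotonicity in $N_2$, this gives continuity.

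\textbf{Uniform bound as $d\to\infty$.} Since $0\le N_{min_2}-N_{min_1}\le 1/(d-1)$, it suffices to show
\[
\omega(h):=\sup_{N_1\in[0,1]}\ \sup_{0\le N_2-N_1\le h}\bigl|\beta_{\min}(N_1,N_2)-\beta_{\min}(N_1,N_1)\bigr|\to 0 \quad\text{as } h\to0.
\]
Working on the fixed compact domain $N_1,N_2\in[0,1]$ rather than $[0,1/d]$ makes the estimate independent of $d$, because $g$ as a formula does not depend on $d$ once $m$ and $n$ are fixed.

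The main obstacle is this step. Pointwise continuity of $\beta_{\min}$ is not enough; I need uniformity of the crossing margin. I would argue by contradiction using compactness. Suppose there are sequences $(N_1^k,N_2^k)$ with $N_2^k-N_1^k\to0$ but $\beta_{\min}(N_1^k,N_1^k)-\beta_{\min}(N_1^k,N_2^k)\ge\eta$. Pass to a subsequence with $N_1^k\to N^\ast$ and $\beta_{\min}(N_1^k,N_2^k)\to b$. Then:
\begin{itemize}
\item Joint continuity of $g$ gives $g(b,N^\ast,N^\ast)\le\delta$.
\item Along the sequence, $g(\beta_{\min}(N_1^k,N_1^k)-\eta/2,\,N_1^k,N_1^k)>\delta$. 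In the limit this yields $g(b+\eta/2,N^\ast,N^\ast)\ge\delta$.
\end{itemize}
If $g$ is strictly decreasing where it is positive, the two statements contradict each other. I would establish that strictness by noting that the $(\lambda,\mu)=(1,0)$ term of $f$ is strictly decreasing in $\beta$ whenever it is active. Otherwise, I would fall back on working with $\delta$-level sets and upper semicontinuity of $\beta_{\min}$.

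\textbf{Convergence of optimal values.} Let $V(\noise,\beta)=\max_{\Ppriv\in\mathcal{B}_{\gamma,\Ppub}}\mathrm{TV}(\Ppriv,P_{syn})$. Since $\mathrm{TV}(\Ppriv,P_{syn})=\beta\,\mathrm{TV}(\Ppriv,\noise)$, the inner minimizer is $\beta=\beta_{\min}$. The two objectives therefore differ by at most $|\beta_{\min}(N_{min_1},N_{min_2})-\beta_{\min}(N_{min_1},N_{min_1})|\cdot 1\le\omega(1/(d-1))$, uniformly in $\noise$.

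Taking minima over $\noise$ preserves this bound, so the gap between the upper-bound problem in \eqref{ubmain} and \eqref{opt1} tends to $0$. Any optimizer of the upper-bound problem is then within $2\omega(1/(d-1))$ of the true optimum, which is the claimed asymptotic optimality.
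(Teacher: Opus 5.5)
Your architecture (joint continuity of $g$, monotonicity in $\beta$ and $N_2$, a $d$-independent compact domain, and compactness for uniformity) is sound. However, the whole argument rests on a strict level-crossing property of $g$ in $\beta$ that you never establish.

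In the continuity step, the inequality you invoke, $g(\beta_0-\eta,N_1,N_2)>\delta$, is the trivial half, since it follows from the definition of the infimum. Together with joint continuity it only gives lower semicontinuity of $\beta_{\min}$ in $N_2$, and hence, with monotonicity, continuity as $N_2'\downarrow N_2$. For $N_2'\uparrow N_2$, monotonicity gives $\beta_{\min}(N_1,N_2')\ge\beta_0$, which is again a lower bound. The needed upper bound $\beta_{\min}(N_1,N_2')\le\beta_0+\eta$ requires $g(\beta_0+\eta,N_1,N_2)<\delta$. If $g(\cdot,N_1,N_2)$ were flat at level $\delta$ on an interval, $\beta_{\min}$ would jump there. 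Your compactness argument ends at exactly the same obstruction, $g(b,N^\ast,N^\ast)=g(b+\eta/2,N^\ast,N^\ast)=\delta$.

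The strictness argument you sketch does not close this gap, for three reasons:
\begin{itemize}
\item Monotonicity of $f$ in $\beta$ is a data-processing property of the whole sum, not of its summands. Strict decrease of one summand therefore implies nothing about the sum, let alone about the maximum of $h_1$ and $h_2$.
\item The $(\lambda,\mu)=(1,0)$ term is not monotone in general. At $N_1=N_2=0$ it equals $mx(1-x)^{m-1}$ with $x=(1-\beta)/n$, which increases in $\beta$ whenever $x>1/m$ (possible when $m>n$). Meanwhile the full sum, $1-(1-x)^m$, is strictly decreasing.
\item Your fallback through upper semicontinuity of $\beta_{\min}$ is circular, since that upper semicontinuity is precisely what the crossing property delivers.
\end{itemize}

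The missing idea is the contraction behind the paper's Claim~\ref{etas}. For $\beta_0<\beta_1\le 1$, put $\alpha=(1-\beta_1)/(1-\beta_0)<1$. Then $P_{syn,\beta_1}=\alpha P_{syn,\beta_0}+(1-\alpha)\noise$, and likewise for the neighboring dataset. The same identity holds for the three-point distributions behind $h_1,h_2$, with $\noise$ replaced by $[N_1,N_2,1-N_1-N_2]$. Expand the $m$-fold product according to which coordinates are drawn from the noise component. Then use joint convexity of $E_{e^\eps}$, cancellation of the common noise factors, and marginalization: $E_{e^\eps}(P_0^{\otimes s}\|Q_0^{\otimes s})\le E_{e^\eps}(P_0^{\otimes m}\|Q_0^{\otimes m})$ for $s\le m$, where $P_0,Q_0$ are the two synthetic distributions at $\beta_0$. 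This gives
\[
g(\beta_1,N_1,N_2)\le\bigl(1-(1-\alpha)^m\bigr)\,g(\beta_0,N_1,N_2).
\]
So $g$ is strictly decreasing wherever it is positive, and $g(\beta_0+\eta,N_1,N_2)<\delta$ whenever $g(\beta_0,N_1,N_2)=\delta>0$.

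With this lemma inserted, your proof goes through. Your compactness step then proves more than the paper does. The paper shows continuity in $N_{min_2}$ only for each fixed $N_{min_1}$ and then simply assumes uniform continuity over $\mathcal{A}_d$. You instead obtain a modulus $\omega(h)$ that is uniform in $N_1$ and independent of $d$; take the domain $N_1\le N_2$, $N_1+N_2\le 1$ so the three-point distributions stay valid. Your final bound of $2\omega(1/(d-1))$ for optimizers is correct. Using that the original objective is pointwise dominated by the upper-bound objective improves it to $\omega(1/(d-1))$.
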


\begin{proof}
Assume that \(\beta_{\min}(N_{min_1},N_{min_2})\) is continuous in $N_{min_2}$. The proof is provided at the end.

Let $N_1(P_{\mathrm{nse}}) := N_{\min_1}$ and $N_2(P_{\mathrm{nse}}) := N_{\min_2}$,
where \(N_1(P_{\mathrm{nse}})\le N_2(P_{\mathrm{nse}})\) are the two smallest coordinates of \(P_{\mathrm{nse}}\). Define
\begin{align}
\Psi(P_{\mathrm{nse}})
:=
\max_{\bar P_{\mathrm{prv}}\in B_{\gamma,\bar P_{\mathrm{pub}}}}
TV(\bar P_{\mathrm{prv}},P_{\mathrm{nse}}).
\end{align}
After optimizing over \(\beta\), the objective in the left-hand side of \eqref{ubmain} can be written as
\begin{align}
F_d(P_{\mathrm{nse}})
=
\beta_{\min}(N_1(P_{\mathrm{nse}}),N_2(P_{\mathrm{nse}}))
\Psi(P_{\mathrm{nse}}),
\end{align}
while the upper-bound objective in the right-hand side of \eqref{ubmain} is
\begin{align}
\widetilde F_d(P_{\mathrm{nse}})
=
\beta_{\min}(N_1(P_{\mathrm{nse}}),N_1(P_{\mathrm{nse}}))
\Psi(P_{\mathrm{nse}}).
\end{align}

Since \(N_2(P_{\mathrm{nse}})\ge N_1(P_{\mathrm{nse}})\), and since
\(\beta_{\min}\) is non-increasing in each of its arguments, we have
\begin{align}
\beta_{\min}(N_1(P_{\mathrm{nse}}),N_2(P_{\mathrm{nse}}))
\le
\beta_{\min}(N_1(P_{\mathrm{nse}}),N_1(P_{\mathrm{nse}})).
\end{align}
Therefore,
\begin{align}
F_d(P_{\mathrm{nse}})
\le
\widetilde F_d(P_{\mathrm{nse}})
\qquad
\forall P_{\mathrm{nse}}\in\Delta_d.
\end{align}
Thus,
\begin{align}
V_d
:=
\inf_{P_{\mathrm{nse}}\in\Delta_d}
F_d(P_{\mathrm{nse}})
\le
\inf_{P_{\mathrm{nse}}\in\Delta_d}
\widetilde F_d(P_{\mathrm{nse}})
=:
\widetilde V_d,
\end{align}
so the right-hand side of \eqref{ubmain} is indeed an upper bound on the original objective.

Next, we bound the gap between the two objectives. For any
\(P_{\mathrm{nse}}\in\Delta_d\),
\begin{align}
0
\le
\widetilde F_d(P_{\mathrm{nse}})-F_d(P_{\mathrm{nse}})
=
\left[
\beta_{\min}(N_1(P_{\mathrm{nse}}),N_1(P_{\mathrm{nse}}))
-
\beta_{\min}(N_1(P_{\mathrm{nse}}),N_2(P_{\mathrm{nse}}))
\right]
\Psi(P_{\mathrm{nse}}).
\end{align}
Since total variation distance is at most \(1\), we have
\begin{align}
0\le \Psi(P_{\mathrm{nse}})\le 1.
\end{align}
Hence,
\begin{align}
0
\le
\widetilde F_d(P_{\mathrm{nse}})-F_d(P_{\mathrm{nse}})
\le
\beta_{\min}(N_1(P_{\mathrm{nse}}),N_1(P_{\mathrm{nse}}))
-
\beta_{\min}(N_1(P_{\mathrm{nse}}),N_2(P_{\mathrm{nse}})).
\end{align}

We now use the simplex constraint. Since \(P_{\mathrm{nse}}\in\Delta_d\), its coordinates sum to one. Because \(N_1\) and \(N_2\) are the smallest and second-smallest coordinates, all coordinates except the smallest one are at least \(N_2\). Therefore, for any $\noise\in\Delta_d$,
\begin{align}
1
=
\sum_{i=1}^d P_{\mathrm{nse}}(i)
\ge
N_1+(d-1)N_2.
\end{align}
Thus,
\begin{align}
N_2
\le
\frac{1-N_1}{d-1}.
\end{align}
Since \(N_2\ge N_1\), we obtain
\begin{align}
0
\le
N_2-N_1
\le
\frac{1-N_1}{d-1}-N_1
=
\frac{1-dN_1}{d-1}
\le
\frac{1}{d-1}.
\end{align}
Therefore,
\begin{align}
0\le N_2-N_1\le \frac{1}{d-1}
\end{align}

Assume that \(\beta_{\min}(z_1,z_2)\) is uniformly continuous in its second argument over the feasible region
\begin{align}
\mathcal A_d
=
\left\{
(z_1,z_2):
0\le z_1\le z_2,\;
z_1+(d-1)z_2\le 1
\right\}.
\end{align}
Then, since
\begin{align}
|N_2(P_{\mathrm{nse}})-N_1(P_{\mathrm{nse}})|
\le
\frac{1}{d-1},\quad \quad\forall\noise\in\Delta_d
\end{align}
we have
\begin{align}
\sup_{P_{\mathrm{nse}}\in\Delta_d}
\left|
\beta_{\min}(N_1(P_{\mathrm{nse}}),N_2(P_{\mathrm{nse}}))
-
\beta_{\min}(N_1(P_{\mathrm{nse}}),N_1(P_{\mathrm{nse}}))
\right|
\to 0
\end{align}
as \(d\to\infty\). Consequently,
\begin{align}
\sup_{P_{\mathrm{nse}}\in\Delta_d}
\left|
\widetilde F_d(P_{\mathrm{nse}})
-
F_d(P_{\mathrm{nse}})
\right|
\to 0.
\end{align}

It remains to show convergence of the optimal values. Since
\(F_d(P_{\mathrm{nse}})\le \widetilde F_d(P_{\mathrm{nse}})\) pointwise, we have
\begin{align}
V_d\le \widetilde V_d.
\end{align}
Let \(P_d^\star\in\arg\min_{P_{\mathrm{nse}}\in\Delta_d}F_d(P_{\mathrm{nse}})\). Then
\begin{align}
\widetilde V_d
=
\inf_{P_{\mathrm{nse}}\in\Delta_d}
\widetilde F_d(P_{\mathrm{nse}})
\le
\widetilde F_d(P_d^\star).
\end{align}
Therefore,
\begin{align}
\widetilde V_d
\le
F_d(P_d^\star)
+
\sup_{P_{\mathrm{nse}}\in\Delta_d}
\left|
\widetilde F_d(P_{\mathrm{nse}})
-
F_d(P_{\mathrm{nse}})
\right|.
\end{align}
Since \(F_d(P_d^\star)=V_d\), we get
\begin{align}
0
\le
\widetilde V_d-V_d
\le
\sup_{P_{\mathrm{nse}}\in\Delta_d}
\left|
\widetilde F_d(P_{\mathrm{nse}})
-
F_d(P_{\mathrm{nse}})
\right|.
\end{align}
Taking \(d\to\infty\), the right-hand side converges to zero. Hence,
\begin{align}
\widetilde V_d-V_d\to 0.
\end{align}

Thus, the upper-bound problem in \eqref{ubmain} has the same asymptotic optimal value as the original problem. Moreover, if
\begin{align}
\widetilde P_d^\star
\in
\arg\min_{P_{\mathrm{nse}}\in\Delta_d}
\widetilde F_d(P_{\mathrm{nse}}),
\end{align}
then
\begin{align}
F_d(\widetilde P_d^\star)
\le
\widetilde F_d(\widetilde P_d^\star)
=
\widetilde V_d.
\end{align}
Therefore,
\begin{align}
0
\le
F_d(\widetilde P_d^\star)-V_d
\le
\widetilde V_d-V_d
\to 0.
\end{align}
Hence, any optimizer of the upper-bound problem is asymptotically optimal for the original problem. 

Now, the only remaining component of the proof is the continuity of $\beta_{min}(N_{min_1},N_{min_2})$ in $N_{min_2}$.

\paragraph{Continuity of \(\beta_{\min}(N_{\min_1},N_{\min_2})\) in \(N_{\min_2}\).}

Fix \(\varepsilon>0\), \(\delta\in(0,1)\), the private dataset size \(n\), the number of released samples \(m\), and \(N_{\min_1}\). For notational simplicity, write
\begin{align}
z_1 := N_{\min_1},
\qquad
z_2 := N_{\min_2}.
\end{align}

Recall from Theorem~\ref{lem1} that \(\beta_{\min}(z_1,z_2)\) is defined as
\begin{align}
\beta_{\min}(z_1,z_2)
=
\inf
\left\{
\beta\in[0,1]:
G(\beta,z_1,z_2)=\delta
\right\},
\end{align}
where
\begin{align}
G(\beta,z_1,z_2)
:=
\max
\left\{
h_1(\beta,\varepsilon,n,m,z_1,z_2),
h_2(\beta,\varepsilon,n,m,z_1,z_2)
\right\}.
\end{align}
Here \(h_1\) and \(h_2\) are the expressions from Theorem~\ref{lem1}.

We first observe that \(G\) is continuous in \((\beta,z_1,z_2)\). Indeed, \(h_1\) is a finite sum of terms of the form:
\begin{align}
&h_1(\beta,\eps,n,m,z_1,z_2)\nonumber\\
&=\!\!\!\!\!\!\sum_{\substack{\lambda,\mu\geq0\\\lambda+\mu\leq m}}\!\!\!\frac{m!}{\lambda!\mu!(m-\lambda-\mu)!}
\!\!\left(\!\!
1\!-\!\beta(z_1+z_2)-\frac{1-\beta}{n}
\right)^{m-\lambda-\mu}
\left[
\left(\beta z_1+\frac{1-\beta}{n}\right)^\lambda
(\beta z_2)^\mu
-
e^\varepsilon
\left(\beta z_2+\frac{1-\beta}{n}\right)^\mu
(\beta z_1)^\lambda
\right]_+
\end{align}
Each factor inside the summation is continuous in \((\beta,z_1,z_2)\), and the map $x\mapsto [x]_+ = \max\{x,0\}$ is continuous. Therefore \(h_1\) is continuous. The same argument applies to \(h_2\), and since the maximum of two continuous functions is continuous, \(G\) is continuous.

Now fix \(z_1\), and consider \(z_2\mapsto \beta_{\min}(z_1,z_2)\). Let
\begin{align}
z_2^{(k)}\to z_2.
\end{align}
We show that
\begin{align}
\beta_{\min}(z_1,z_2^{(k)})
\to
\beta_{\min}(z_1,z_2).
\end{align}

Let
\begin{align}
\beta^\star
:=
\beta_{\min}(z_1,z_2).
\end{align}
Then, we have,
\begin{align}
G(\beta^\star,z_1,z_2)=\delta,
\end{align}
\begin{claim}\label{etas}
    For every \(\eta>0\),
\begin{align}\label{G-cond}
G(\beta^\star-\eta,z_1,z_2)>\delta,
\qquad
G(\beta^\star+\eta,z_1,z_2)<\delta.
\end{align}
Equivalently, the privacy curve crosses the level \(\delta\) at \(\beta^\star\), rather than remaining flat at \(\delta\).
\end{claim}
The proof of Claim~\ref{etas} is given at the end. Using Claim~\ref{etas}, we prove upper and lower semicontinuity.

First, fix any \(\eta>0\). By Claim~\ref{etas},
\begin{align}
G(\beta^\star+\eta,z_1,z_2)<\delta.
\end{align}
Since \(G\) is continuous in \(z_2\), for all sufficiently large \(k\),
\begin{align}
G(\beta^\star+\eta,z_1,z_2^{(k)})<\delta.
\end{align}
Therefore,
\begin{align}
\beta_{\min}(z_1,z_2^{(k)})
\le
\beta^\star+\eta
\end{align}
for all sufficiently large \(k\). Taking the limsup gives
\begin{align}
\limsup_{k\to\infty}
\beta_{\min}(z_1,z_2^{(k)})
\le
\beta^\star+\eta.
\end{align}
Since \(\eta>0\) was arbitrary,
\begin{align}
\limsup_{k\to\infty}
\beta_{\min}(z_1,z_2^{(k)})
\le
\beta^\star.
\end{align}

Next, again fix any \(\eta>0\). By Claim~\ref{etas},
\begin{align}
G(\beta^\star-\eta,z_1,z_2)>\delta.
\end{align}
By continuity of \(G\) in \(z_2\), for all sufficiently large \(k\),
\begin{align}
G(\beta^\star-\eta,z_1,z_2^{(k)})>\delta.
\end{align}
Thus \(\beta^\star-\eta\) is not feasible for the problem defining
\(\beta_{\min}(z_1,z_2^{(k)})\). Since \(G\) is non-increasing in \(\beta\), no value
\[
\beta\le \beta^\star-\eta
\]
is feasible either. Therefore,
\[
\beta_{\min}(z_1,z_2^{(k)})
\ge
\beta^\star-\eta
\]
for all sufficiently large \(k\). Taking the liminf gives
\[
\liminf_{k\to\infty}
\beta_{\min}(z_1,z_2^{(k)})
\ge
\beta^\star-\eta.
\]
Since \(\eta>0\) was arbitrary,
\[
\liminf_{k\to\infty}
\beta_{\min}(z_1,z_2^{(k)})
\ge
\beta^\star.
\]

Combining the two inequalities,
\[
\limsup_{k\to\infty}
\beta_{\min}(z_1,z_2^{(k)})
\le
\beta^\star
\le
\liminf_{k\to\infty}
\beta_{\min}(z_1,z_2^{(k)}).
\]
Hence,
\[
\beta_{\min}(z_1,z_2^{(k)})
\to
\beta^\star
=
\beta_{\min}(z_1,z_2).
\]

Therefore, $\beta_{\min}(N_{\min_1},N_{\min_2})$ is continuous in \(N_{\min_2}\).
\end{proof}

\paragraph{Proof of Claim~\ref{etas}}
\begin{proof}
The first inequality in \eqref{G-cond} is straightforward from the definition of $\beta^*$. For the second inequality, we show that for every \(\eta>0\) such that \(\beta^\star+\eta\le 1\),
\[
G(\beta^\star+\eta,z_1,z_2)<\delta.
\]
Let $\beta_0:=\beta^\star$ and $\beta_1:=\beta^\star+\eta$. Since \(\eta>0\), we have $\beta_1>\beta_0$.
Define $\alpha:=\frac{1-\beta_1}{1-\beta_0}$. Then $0\le \alpha<1$. Now consider any neighboring pair of private histograms $\bar P_{\mathrm{prv}}$ and $\bar P'_{\mathrm{prv}}$. For a fixed noise distribution \(P_{\mathrm{nse}}\), define
\[
P_{\mathrm{syn},\beta}
=
(1-\beta)\bar P_{\mathrm{prv}}
+
\beta P_{\mathrm{nse}},
\]
and
\[
P'_{\mathrm{syn},\beta}
=
(1-\beta)\bar P'_{\mathrm{prv}}
+
\beta P_{\mathrm{nse}}.
\]

Then
\[
P_{\mathrm{syn},\beta_1}
=
\alpha P_{\mathrm{syn},\beta_0}
+
(1-\alpha)P_{\mathrm{nse}},
\]
and similarly
\[
P'_{\mathrm{syn},\beta_1}
=
\alpha P'_{\mathrm{syn},\beta_0}
+
(1-\alpha)P_{\mathrm{nse}}.
\]

For compactness, write
\[
P_0:=P_{\mathrm{syn},\beta_0},
\qquad
Q_0:=P'_{\mathrm{syn},\beta_0},
\qquad
R:=P_{\mathrm{nse}}.
\]
Then
\[
P_1:=P_{\mathrm{syn},\beta_1}
=
\alpha P_0+(1-\alpha)R,
\]
and
\[
Q_1:=P'_{\mathrm{syn},\beta_1}
=
\alpha Q_0+(1-\alpha)R.
\]

We now compare the \(m\)-fold product distributions. Since
\[
P_1^{\otimes m}
=
\left(\alpha P_0+(1-\alpha)R\right)^{\otimes m},
\]
we can expand\footnote{Sampling one coordinate from \(P_1\) is equivalent to first drawing a Bernoulli variable with success probability \(\alpha\). With probability \(\alpha\), the coordinate is sampled from \(P_0\), and with probability \(1-\alpha\), it is sampled from \(R\). For \(m\) independent samples, let $S\subseteq[m]$ denote the set of coordinates sampled from \(P_0\). Then the coordinates in \(S^c\) are sampled from \(R\). The probability of this choice of \(S\) is $\alpha^{|S|}(1-\alpha)^{m-|S|}$, and the corresponding product distribution is $P_0^{\otimes |S|}\otimes R^{\otimes |S^c|}$. Summing over all subsets \(S\subseteq[m]\) gives the stated expansion.}
\[
P_1^{\otimes m}
=
\sum_{S\subseteq[m]}
\alpha^{|S|}
(1-\alpha)^{m-|S|}
P_0^{\otimes |S|}
\otimes
R^{\otimes (m-|S|)}.
\]
Similarly,
\[
Q_1^{\otimes m}
=
\sum_{S\subseteq[m]}
\alpha^{|S|}
(1-\alpha)^{m-|S|}
Q_0^{\otimes |S|}
\otimes
R^{\otimes (m-|S|)}.
\]

Using convexity of the hockey-stick divergence in the pair of distributions, we obtain
\[
E_\gamma(P_1^{\otimes m}\|Q_1^{\otimes m})
\le
\sum_{S\subseteq[m]}
\alpha^{|S|}
(1-\alpha)^{m-|S|}
E_\gamma
\left(
P_0^{\otimes |S|}\otimes R^{\otimes (m-|S|)}
\middle\|
Q_0^{\otimes |S|}\otimes R^{\otimes (m-|S|)}
\right).
\]

Since the same distribution \(R^{\otimes (m-|S|)}\) appears under both hypotheses on the coordinates in \(S^c\), those coordinates contain no information for distinguishing the two distributions. By data processing, or equivalently by marginalizing out the coordinates in \(S^c\),
\[
E_\gamma
\left(
P_0^{\otimes S}\otimes R^{\otimes S^c}
\middle\|
Q_0^{\otimes S}\otimes R^{\otimes S^c}
\right)
=
E_\gamma
\left(
P_0^{\otimes |S|}
\middle\|
Q_0^{\otimes |S|}
\right).
\]

Therefore,
\[
E_\gamma(P_1^{\otimes m}\|Q_1^{\otimes m})
\le
\sum_{S\subseteq[m]}
\alpha^{|S|}
(1-\alpha)^{m-|S|}
E_\gamma
\left(
P_0^{\otimes |S|}
\middle\|
Q_0^{\otimes |S|}
\right).
\]

Grouping subsets by cardinality \(s=|S|\), this becomes
\[
E_\gamma(P_1^{\otimes m}\|Q_1^{\otimes m})
\le
\sum_{s=0}^m
\binom{m}{s}
\alpha^s(1-\alpha)^{m-s}
E_\gamma
\left(
P_0^{\otimes s}
\middle\|
Q_0^{\otimes s}
\right).
\]

For \(s=0\), the two distributions are identical, so
\[
E_\gamma(P_0^{\otimes 0}\|Q_0^{\otimes 0})=0.
\]
For every \(1\le s\le m\), marginalizing an \(m\)-sample observation down to its first \(s\) coordinates is a stochastic channel. Hence, by data processing,
\[
E_\gamma(P_0^{\otimes s}\|Q_0^{\otimes s})
\le
E_\gamma(P_0^{\otimes m}\|Q_0^{\otimes m}).
\]
Thus,
\[
E_\gamma(P_1^{\otimes m}\|Q_1^{\otimes m})
\le
\sum_{s=1}^m
\binom{m}{s}
\alpha^s(1-\alpha)^{m-s}
E_\gamma(P_0^{\otimes m}\|Q_0^{\otimes m}).
\]

Since
\[
\sum_{s=1}^m
\binom{m}{s}
\alpha^s(1-\alpha)^{m-s}
=
1-(1-\alpha)^m,
\]
we get
\[
E_\gamma(P_1^{\otimes m}\|Q_1^{\otimes m})
\le
\left(1-(1-\alpha)^m\right)
E_\gamma(P_0^{\otimes m}\|Q_0^{\otimes m}).
\]

Now \(\gamma=e^\varepsilon\). Therefore,
\[
E_{e^\varepsilon}
\left(
P_{\mathrm{syn},\beta_1}^{\otimes m}
\middle\|
P_{\mathrm{syn},\beta_1}^{\prime\otimes m}
\right)
\le
\left(1-(1-\alpha)^m\right)
E_{e^\varepsilon}
\left(
P_{\mathrm{syn},\beta_0}^{\otimes m}
\middle\|
P_{\mathrm{syn},\beta_0}^{\prime\otimes m}
\right).
\]

This inequality holds for every neighboring pair
\((D_{\mathrm{prv}},D'_{\mathrm{prv}})\). Taking the supremum over neighboring pairs gives
\[
G(\beta_1,z_1,z_2)
\le
\left(1-(1-\alpha)^m\right)
G(\beta_0,z_1,z_2).
\]

Using $\beta_0=\beta^\star$ and $G(\beta^\star,z_1,z_2)=\delta$ we obtain,
\[
G(\beta^\star+\eta,z_1,z_2)
\le
\left(1-(1-\alpha)^m\right)\delta.
\]

Since $0\le \alpha<1$ we have $0<1-(1-\alpha)^m<1$ for \(m\ge 1\) and \(\alpha>0\). Hence,
\[
G(\beta^\star+\eta,z_1,z_2)<\delta.
\]

If \(\alpha=0\), then \(\beta_1=1\), and
\[
P_{\mathrm{syn},\beta_1}
=
P'_{\mathrm{syn},\beta_1}
=
P_{\mathrm{nse}}.
\]
Therefore,
\[
G(1,z_1,z_2)=0<\delta.
\]

Thus, in all cases,
\[
G(\beta^\star+\eta,z_1,z_2)<\delta
\]
\end{proof}

\subsection{Interpretation of Remark~\ref{remark}:}\label{interpret}

To analyze the minmax TV distance in the LHS of \eqref{thm1_tv}, we split it into two cases.
    \begin{align}
    &\min_{\noise\in\Delta_d}\underbrace{\min_{\substack{\beta\in[0,1]\\\beta\geq\beta_{min}(N_{min_1},N_{min_1})}}\max_{\Ppriv\in\mathcal{B}_{\gamma,\Ppub}} \mathrm{TV}(\Ppriv,P_{syn})}_{f(\noise)}=\min\left\{\min_{\noise\in\mathcal{S}}f(\noise),\min_{\noise\in\Delta_d\setminus\mathcal{S}}f(\noise)\right\}\label{thm1exp}
\end{align}
where $\mathcal{S}=\{P\in\Delta_d: \sum_{i:\Ppubi\leq P^{[i]}}\Ppubi\geq\gamma\}$, with $P^{[i]}$ denoting the $i$th element of $P$ in the same order as $\Ppub$. To interpret the two regimes for $\noise$ in \eqref{thm1exp}, consider the simplified version of the LHS of \eqref{thm1exp} in \eqref{explain}. For a fixed $\noise$, the inner $\max$ chooses the worst-case $\Ppriv\in\mathcal{B}_{\gamma,\Ppub}$ (i.e., farthest from $\noise$). If $\noise\in\mathcal{S}$, this worst-case $\Ppriv$ lies on the same line through $\Ppub$ and $\noise$, giving $\mathrm{TV}(\noise,\Ppub)+\gamma$ for the inner $\max$ (Fig.~\ref{2cases}: blue). If $\noise\in\Delta_d\setminus\mathcal{S}$, the worst-case $\Ppriv$ is off that line, so the inner maximum is strictly smaller than $\mathrm{TV}(\noise,\Ppub)+\gamma$ (Fig.~\ref{2cases}: red). These two cases require different analyses, which is provided in App.~\ref{rem-details}. This results in the two-case structure in Remark~\ref{remark}.

\begin{wrapfigure}{r}{0.45\textwidth}
    \centering
    \includegraphics[width=0.3\textwidth]{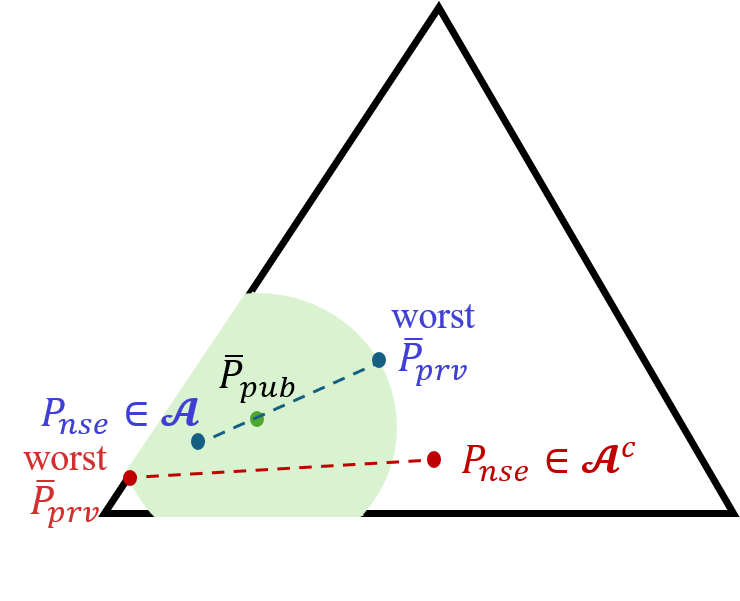}
     \vspace{-0.1cm}
    \caption{The two cases in Remark~\ref{remark} correspond to two sets of potential noise distributions: 1) a worst-case $\Ppriv$ in \eqref{opt1} that is on the same line as $\noise$ and $\Ppub$; 2) a worst-case $\Ppriv$ that is not on the same line.}
    \label{2cases}
   
\end{wrapfigure}

The \textbf{Key insight from Remark~\ref{remark}} is that the approximation to the optimal noise distribution $\hat{P}_{nse}^*$ is always a floor-raised version of $\Ppub$, where both $\mathrm{MIFR}(\Ppub,t_1)$ and $\mathrm{WF}(\Ppub,t_2)$ raise small elements to their respective thresholds and redistribute probability mass from larger elements. This design balances two objectives: since 
$\Ppriv$ and $\Ppub$ come from the same domain with typically small $\mathrm{TV}(\Ppub,\Ppriv)$, choosing $\noise$ close to $\Ppub$ is beneficial. However, many elements in $\Ppub$ can be small (even zero), which would add minimal noise to corresponding $\Pprivi$ elements and require a larger mixing weight $\beta$ to maintain privacy, pushing $P_{syn}$ too close to $\Ppub$ (not to $\Ppriv$). The optimal $\noise^*$ resolves this tradeoff by remaining close to $\Ppub$ while floor-raising smaller probabilities to guarantee a sufficient $ p=\min_i\noisei$, which allows a smaller mixing weight $\beta$ and yields a $P_{syn}$ that is closer to $\Ppriv$.

\section{Proof of Theorem~\ref{thm1}}\label{pfthm1}

\vspace{0.3cm}

\noindent\textbf{Theorem~\ref{thm1} restated:} Let $\gamma\in (0,1]$ be a constant and let $\Ppub\in\Delta_{d}$ be a public distribution. Fix the privacy parameters $\eps>0$, $\delta\in(0,1)$ and the required number of synthetic samples $m\geq1$. Then, 
    \begin{align}\label{thm1_tv2}
&\min_{\noise\in\Delta_{d}}\min_{\substack{\beta\in[0,1]\\\beta\geq\beta_{min}( N_{min_1},N_{min_1})}}\max_{\Ppriv\in\mathcal{B}_{\gamma,\Ppub}} \mathrm{TV}(\Ppriv,P_{syn})=\min_{0\leq t\leq\frac{1}{d}}\beta_{min}(t,t)f(t)
\end{align}
where $f(t)$ is the solution to the following linear program:
\begin{align}\label{LP2}
    f(t)=&\min_{\noise\in\Delta_d, \ r\in\mathbb{R}} r,\quad\text{s.t.}\quad r+\noise(S)\geq\min\{\Ppub(S)+\gamma,1\}, \ \forall S\subseteq[d], \quad  \mathbf{1}^T_d\noise=1, \quad  \noise^{[i]}\geq t, \forall i
\end{align}
The corresponding optimum $\noise^*$ is the minimizing $\noise$ of \eqref{LP2} with $t=t^*$ where $t^*=\arg\min_{0\leq t\leq\frac{1}{d}}\beta_{min}(t,t)f(t)$. The optimum  $\beta^*$ is given by $\beta^*=\beta_{min}(t^*,t^*)$.

\begin{Proof}
    We begin the proof by re-writing the LHS of \eqref{thm1_tv2} as follows:
  \begin{align}
&\min_{\noise\in\Delta_{d}}\min_{\substack{\beta\in[0,1]\\\beta\geq\beta_{min}( N_{min_1},N_{min_1})}}\max_{\Ppriv\in\mathcal{B}_{\gamma,\Ppub}} \mathrm{TV}(\Ppriv,P_{syn})\nonumber\\
&=\min_{0\leq t\leq\frac{1}{d}}\quad\min_{\substack{\noise\in\Delta_{d}\\\min_i\noisei=t}}\quad\min_{\substack{\beta\in[0,1]\\\beta\geq\beta_{min}(t,t)}}\quad\max_{\Ppriv\in\mathcal{B}_{\gamma,\Ppub}} \mathrm{TV}(\Ppriv,P_{syn})\\
&=\min_{0\leq t\leq\frac{1}{d}}\quad\min_{\substack{\noise\in\Delta_{d}\\\min_i\noisei=t}}\quad\min_{\substack{\beta\in[0,1]\\\beta\geq\beta_{min}(t,t)}}\quad\max_{\Ppriv\in\mathcal{B}_{\gamma,\Ppub}} \beta\mathrm{TV}(\Ppriv,P_{nse})\\
&=\min_{0\leq t\leq\frac{1}{d}}\quad\beta_{min}(t,t)\quad\min_{\substack{\noise\in\Delta_{d}\\\min_i\noisei=t}}\quad\max_{\Ppriv\in\mathcal{B}_{\gamma,\Ppub}} \mathrm{TV}(\Ppriv,P_{nse})\\
&=\min_{0\leq t\leq\frac{1}{d}}\quad\beta_{min}(t,t)\quad\min_{\substack{\noise\in\Delta_{d}\\\min_i\noisei\geq t}}\quad\max_{\Ppriv\in\mathcal{B}_{\gamma,\Ppub}} \mathrm{TV}(\Ppriv,P_{nse})\label{geqnew}
\end{align}
where \eqref{geqnew} comes from the fact that $\beta_{min}(t,t)$ is decreasing in $t$ (see Lemma~\ref{lem:beta-decreasing}). Next, we consider the inner maximization in \eqref{geqnew}.

\begin{lemma}\label{Slemma}
   For any fixed $\noise\in\Delta_d$,
   \begin{align}\label{slemmaeq}
       \max_{\Ppriv\in\mathcal{B}_{\gamma,\Ppub}} \mathrm{TV}(\Ppriv,P_{nse})=\max_{\substack{S\subset[d]\\1\leq|S|\leq d-1}}\left[\min\left\{\Ppub(S)+\gamma,1\right\}-\noise(S)\right]
   \end{align}
   where $\Ppub(S)=\sum_{i\in S}\Ppubi$ and $\noise(S)=\sum_{i\in S}\noisei$. The cardinality of set $S$ is denoted by $|S|$, and $[d]=\{1,\dotsc,d\}$.
\end{lemma}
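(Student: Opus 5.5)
We prove \eqref{slemmaeq} using the variational form of total variation, $\mathrm{TV}(P,Q)=\max_{S\subseteq[d]}\,[P(S)-Q(S)]$, and then exchange the two maximizations:
\begin{align*}
\max_{\Ppriv\in\mathcal{B}_{\gamma,\Ppub}}\mathrm{TV}(\Ppriv,\noise)=\max_{S\subseteq[d]}\Big[\max_{\Ppriv\in\mathcal{B}_{\gamma,\Ppub}}\Ppriv(S)-\noise(S)\Big].
\end{align*}
This reduces the lemma to computing, for each fixed $S$, the quantity $\max_{P\in\mathcal{B}_{\gamma,\Ppub}}P(S)$. The claim to establish is that this equals $\min\{\Ppub(S)+\gamma,1\}$.

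For the upper bound, any $P$ with $\mathrm{TV}(P,\Ppub)\le\gamma$ satisfies $P(S)-\Ppub(S)\le\gamma$, and trivially $P(S)\le 1$. For achievability, let $\theta=\min\{\gamma,1-\Ppub(S)\}=\min\{\gamma,\Ppub(S^c)\}$. Construct $P$ by scaling $\Ppub$ on $S^c$ by the factor $(\Ppub(S^c)-\theta)/\Ppub(S^c)$, which removes total mass $\theta$ from $S^c$. Add that mass $\theta$ to $S$; if $S=\emptyset$ this step does not arise, and otherwise place it on any single index of $S$. The result is a valid PMF with $\mathrm{TV}(P,\Ppub)=\theta\le\gamma$ and $P(S)=\Ppub(S)+\theta=\min\{\Ppub(S)+\gamma,1\}$. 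If $\Ppub(S^c)=0$, then $\theta=0$ and we simply take $P=\Ppub$.

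It remains to reduce the range of $S$ to $1\le|S|\le d-1$. For $S=\emptyset$ the bracket equals $\min\{\gamma,1\}-0=\gamma$, since $\gamma\le1$, and not $0$. This is the main subtlety. For $S=[d]$ the bracket equals $1-1=0$, and since every bracket is nonnegative (it is a TV value), dropping $[d]$ is harmless.

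To absorb the $S=\emptyset$ case, we show that some nonempty proper $S$ already attains at least $\gamma$. Pick an index $i$ with $\Ppubi\ge\noisei$; one exists because both vectors sum to $1$. Take $S=[d]\setminus\{i\}$. Then $\Ppub(S)=1-\Ppubi$ and $\noise(S)=1-\noisei$, so
\begin{align*}
\min\{\Ppub(S)+\gamma,1\}-\noise(S)=\min\{\gamma-\Ppubi+\noisei,\;\noisei\}.
\end{align*}
This need not be at least $\gamma$, so the complement-of-a-singleton choice is not enough on its own. The correct treatment of $S=\emptyset$ is to work with $\mathrm{TV}(P,Q)=\max_S\,[P(S)-Q(S)]$ and recompute the inner maximum directly. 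For $S=\emptyset$ we have $P(\emptyset)=0$, so the inner quantity is $0$, not $\min\{\gamma,1\}$. The expression $\min\{\Ppub(S)+\gamma,1\}$ is the inner maximum only when $S\neq\emptyset$.

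With this correction, the outer maximum over all $S$ consists of the following terms. The set $S=\emptyset$ gives $0$ and $S=[d]$ gives $0$. Every other $S$ gives $\min\{\Ppub(S)+\gamma,1\}-\noise(S)$, which is at least $0$ whenever $d\ge2$, so the maximum is attained over $1\le|S|\le d-1$. The one step needing care is stating the exchange of maxima cleanly and restricting the formula for the inner maximum to nonempty $S$.
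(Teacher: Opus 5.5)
Your argument follows the paper's proof. Both use the variational form of TV, exchange the two maximizations, bound $P(S)\le\min\{\Ppub(S)+\gamma,1\}$, and achieve the bound by moving $\min\{\gamma,1-\Ppub(S)\}$ mass from $S^c$ into $S$. Your proportional-scaling, single-index construction is a concrete instance of the paper's $a_i,b_i$. The paper restricts to $1\le|S|\le d-1$ before exchanging the maxima, while you exchange first and then deal with $S=\emptyset$ and $S=[d]$. Your observation that $\min\{\Ppub(S)+\gamma,1\}$ is not the inner maximum at $S=\emptyset$ is correct: it would give $\gamma$, while the true value is $0$. That is exactly why the paper's achievability step requires $|S|\ge1$.

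There is one false step. You twice claim that every bracket is nonnegative: once as ``it is a TV value,'' and once as ``every other $S$ gives \dots\ at least $0$ whenever $d\ge2$.'' For a fixed $S$, the quantity $\max_{P}P(S)-\noise(S)$ is not a TV value and can be negative. Take $d=2$, $\Ppub=(0,1)$, $\noise=(0.9,0.1)$ and $\gamma=0.1$; then $S=\{1\}$ gives $0.1-0.9=-0.8$.

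What you actually need is weaker: the maximum over $1\le|S|\le d-1$ must be nonnegative, so that discarding the two zero terms from $S=\emptyset$ and $S=[d]$ does not change the overall maximum. This follows from the index you already picked. If $\Ppubi\ge\noisei$, take the singleton $S=\{i\}$, not its complement. It is a proper subset when $d\ge2$, and its bracket is $\min\{\Ppubi+\gamma,1\}-\noisei\ge\min\{\gamma,1-\noisei\}\ge0$.

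With that replacement the proof is complete, and you can delete the detour through $[d]\setminus\{i\}$ that aimed for a value of at least $\gamma$. The paper's one-line remark that $\emptyset$ and $[d]$ ``do not contribute to the maximum'' tacitly relies on the same singleton fact, applied for each fixed $P$.
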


The proof of Lemma~\ref{Slemma} is given in Sec.~\ref{pfslem}. Using Lemma~\ref{Slemma} in \eqref{geqnew}, we have,
  \begin{align}
&\min_{\noise\in\Delta_{d}}\min_{\substack{\beta\in[0,1]\\\beta\geq\beta_{min}( N_{min_1},N_{min_1})}}\max_{\Ppriv\in\mathcal{B}_{\gamma,\Ppub}} \mathrm{TV}(\Ppriv,P_{syn})\nonumber\\
&=\min_{0\leq t\leq\frac{1}{d}}\quad\beta_{min}(t,t)\quad\min_{\substack{\noise\in\Delta_{d}\\\min_i\noisei\geq t}}\quad\max_{\substack{S\subset[d]\\1\leq|S|\leq d-1}}\left[\min\left\{\Ppub(S)+\gamma,1\right\}-\noise(S)\right]\label{justb4}
\end{align}
Now, fix $t\in[0,1/d]$, and consider the inner minmax optimization in \eqref{justb4}. Let:
\begin{align}
    f(t)=\min_{\substack{\noise\in\Delta_{d}\\\min_i\noisei\geq t}}\quad\max_{\substack{S\subset[d]\\1\leq|S|\leq d-1}}\left[\min\left\{\Ppub(S)+\gamma,1\right\}-\noise(S)\right]
\end{align}
Since $\Ppub$ and $\gamma$ are given constants, for a given subset $S\subset[d]$ satisfying $1\leq|S|\leq d-1$, define:
\begin{align}
    C_S=\min\left\{\Ppub(S)+\gamma,1\right\}
\end{align}
Then, $f(t)=\min_{\substack{\noise\in\Delta_{d}\\\min_i\noisei\geq t}}\quad\max_{\substack{S\subset[d]\\1\leq|S|\leq d-1}}\left[C_S-\noise(S)\right]$. Introduce an auxiliary scalar $r$ representing an upper bound on the inner maximum. For any fixed $\noise\in\Delta_d$,
\begin{align}
    r\geq\max_{\substack{S\subset[d]\\1\leq|S|\leq d-1}}\left[C_S-\noise(S)\right]\quad\iff r\geq C_S-\noise(S),\quad \forall S\subset[d], \quad 1\leq|S|\leq d-1 
\end{align}
Therefore,
\begin{align}
    f(t)&=\min_{\noise\in\Delta_d, \ r\in\mathbb{R}}r\nonumber\\
    &\qquad\quad\text{s.t.}\quad \noisei\geq t,\quad\forall i\nonumber\\
    &\qquad\qquad\quad\sum_{i=1}^d\noisei=1\nonumber\\
    &\qquad\qquad\quad\noise(S)+r\geq C_S,\quad \forall S\subset[d], \quad 1\leq|S|\leq d-1\label{ft}
\end{align}
Plugging \eqref{ft} back in \eqref{justb4} completes the proof of Theorem~\ref{thm1}.
\end{Proof}
\subsection{Proof of Lemma~\ref{Slemma}}\label{pfslem}

\begin{Proof}
We first prove that the RHS of \eqref{slemmaeq} is an upper bound for the LHS of \eqref{slemmaeq}. Then we show equality via an achievability proof. By the definition of TV distance, we have,
\begin{align}
    \max_{\Ppriv\in\mathcal{B}_{\gamma,\Ppub}} \mathrm{TV}(\Ppriv,P_{nse})&=\max_{\Ppriv\in\mathcal{B}_{\gamma,\Ppub}} \max_{S\subseteq[d]}\quad(\Ppriv(S)-P_{nse}(S))\\
    &=\max_{\Ppriv\in\mathcal{B}_{\gamma,\Ppub}} \max_{\substack{S\subset[d]\\1\leq|S|\leq d-1}}\quad(\Ppriv(S)-P_{nse}(S))\label{setop}\\
    &=\max_{\substack{S\subset[d]\\1\leq|S|\leq d-1}}\quad \left(\max_{\Ppriv\in\mathcal{B}_{\gamma,\Ppub}}\Ppriv(S)-P_{nse}(S) \right)\label{setlast}
\end{align}
The set $\{S\subset[d]: 1\leq|S|\leq d-1\}$ in \eqref{setop} is due to the fact that $S=\phi$ and $S=[d]$ both result in $\Ppriv(S)-\noise(S)=0$, which do not contribute to the maximum. The two maxima in \eqref{setop} can be switched as the set of subsets in $\{S\subset[d]: 1\leq|S|\leq d-1\}$ is finite. Next, consider $\max_{\Ppriv\in\mathcal{B}_{\gamma,\Ppub}}\Ppriv(S)$. For any $\Ppriv\in\mathcal{B}_{\gamma,\Ppub}$, and any fixed $S\subset[d]$ satisfying $1\leq|S|\leq d-1$,
\begin{align}
    \Ppriv(S)-\Ppub(S)\leq\mathrm{TV}(\Ppriv,\Ppub)\leq\gamma\quad\implies\quad\Ppriv(S)\leq\Ppub(S)+\gamma
\end{align}
Moreover, since $\Ppriv(S)\leq1$, we have,
\begin{align}
    \Ppriv(S)\leq\min\left\{\Ppub(S)+\gamma,1\right\}
\end{align}
Since this applies to any $\Ppriv\in\mathcal{B}_{\gamma,\Ppub}$, we have,
\begin{align}\label{ub4eq}
    \max_{\Ppriv\in\mathcal{B}_{\gamma,\Ppub}}\Ppriv(S)\leq\min\left\{\Ppub(S)+\gamma,1\right\}
\end{align}
for any fixed $S\subset[d]$ satisfying $1\leq|S|\leq d-1$. Next, we show this upper bound is achievable. Define:
\begin{align}
    \alpha=\min\{\gamma,1-\Ppub(S)\}
\end{align}
We will move $\alpha$ probability mass from $S^c$ to $S$ where $S^c=[d]\setminus S$ denotes the complement of $S$. This is feasible because:
\begin{align}
\alpha\leq1-\Ppub(S)=\Ppub(S^c)   
\end{align}
Thus, there's sufficient mass to be removed from $S^c$. At the same time, there's sufficient capacity in $S$ to receive $\alpha$ probability mass as $\alpha\leq1-\Ppub(S)$. Now, construct a $\Ppriv$ as follows (denote it by $\hat{P}_{prv}$ to avoid confusion):
\begin{align}\label{construct}
    \hat{P}_{prv}^{[i]}&=\begin{cases}
        \Ppubi+a_i,& i\in S\\
        \Ppubi-b_i,& i\in S^c
    \end{cases}
\end{align}
where $a_i$ are chosen such that $0\leq a_i\leq1-\Ppubi$ among $i\in S$ while satisfying $\sum_{i\in S}a_i=\alpha$, and $b_i$ are chosen such that $0\leq b_i\leq\Ppubi$ among $i\in S^c$ while satisfying $\sum_{i\in S^c}b_i=\alpha$. Note that these selections are feasible since $\sum_{i\in S}a_i\leq1-\Ppub(S)\leq\sum_{i\in S}(1-\Ppubi)$ as $|S|\geq1$, and $\sum_{i\in S^c}b_i\leq1-\Ppub(S)=\Ppub(S^c)=\sum_{i\in S^c}\Ppubi$.

With the $\hat{P}_{prv}$ in \eqref{construct}, for any $S\subset[d]$ satisfying $1\leq|S|\leq d-1$, we have,
\begin{align}\label{eqhat}
    \hat{P}_{prv}(S)=\min\{\Ppub(S)+\gamma,1\}.
\end{align}
Since the upper bound in \eqref{ub4eq} is satisfied with equality in \eqref{eqhat},
\begin{align}\label{last222}
    \max_{\Ppriv\in\mathcal{B}_{\gamma,\Ppub}}\Ppriv(S)=\min\left\{\Ppub(S)+\gamma,1\right\},\quad \forall S\subset[d],\quad 1\leq|S|\leq d-1.
\end{align}
Plugging \eqref{last222} back in \eqref{setlast} completes the proof of Lemma~\ref{Slemma}.
\end{Proof}

\section{Details of Remark~\ref{remark}}\label{rem-details}

In this section, we provide the reasoning behind our approximation for the optimal noise distribution in Remark~\ref{remark}. The main optimization in Theorem~\ref{thm1} can be written as,
\begin{align}
&\min_{\noise\in\Delta_{d}}\min_{\substack{\beta\in[0,1]\\\beta\geq\beta_{min}( N_{min_1},N_{min_1})}}\max_{\Ppriv\in\mathcal{B}_{\gamma,\Ppub}} \mathrm{TV}(\Ppriv,P_{syn})=\min\left\{\min_{\noise\in\mathcal{A}}g(\noise),\min_{\noise\in\Delta_d\setminus\mathcal{A}}g(\noise)\right\}\label{torem}
\end{align}
where $\mathcal{A}=\left\{\noise\in\Delta_d: \ \sum_{i:\Ppubi\leq\noisei}\Ppubi\geq\gamma\right\}$ and,
\begin{align}
   g(\noise)=\min_{\substack{\beta\in[0,1]\\\beta\geq\beta_{min}(N_{min_1},N_{min_1})}}\max_{\Ppriv\in\mathcal{B}_{\gamma,\Ppub}}\mathrm{TV}(\Ppriv,P_{syn}) 
\end{align}
for any given $\noise\in\Delta_d$ with $\min_i\noisei=N_{min_1}$. Next, we consider the two cases $\noise\in\mathcal{A}$ and $\noise\in\mathcal{A}^c$ in \eqref{torem} separately.

\begin{lemma}\label{mift-opt}
    For given $\Ppub\in\Delta_d$ and $\gamma\in(0,1)$,
    \begin{align}
        \arg\min_{\noise\in\mathcal{A}}g(\noise)=\mathrm{MIFR}(\Ppub,t_1)
    \end{align}
    where $t_1=\arg\min_{t\in\mathcal{F}}\beta_{min}(t,t)(\gamma+\sum_{i:\Ppubi\leq t}(t-\Ppubi))$ with $\mathcal{F}=\{t:\sum_{i:\Ppubi\leq\mathrm{MIFR}^{[i]}(\Ppub,t)}\Ppubi\geq\gamma\}$.
\end{lemma}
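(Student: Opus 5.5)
Throughout, write $\noise(T)=\sum_{i\in T}\noisei$, and for a fixed $\noise$ let $S_+:=\{i:\noisei<\Ppubi\}$.

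\textbf{Step 1: reduce $g$ to a product.} As in the proof of Theorem~\ref{thm1}, we have $\mathrm{TV}(\Ppriv,P_{syn})=\beta\,\mathrm{TV}(\Ppriv,\noise)$ and $\beta_{min}(t,t)$ is non-increasing in $t$ (Lemma~\ref{lem:beta-decreasing}). Hence
\begin{align*}
g(\noise)=\beta_{min}(N_{min_1},N_{min_1})\,\Psi(\noise),\qquad \Psi(\noise)=\max_{\Ppriv\in\mathcal{B}_{\gamma,\Ppub}}\mathrm{TV}(\Ppriv,\noise).
\end{align*}

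\textbf{Step 2: evaluate $\Psi$ on $\mathcal{A}$.} By Lemma~\ref{Slemma}, $\Psi(\noise)=\max_S[\min\{\Ppub(S)+\gamma,1\}-\noise(S)]$. Any $S$ gives at most $\mathrm{TV}(\Ppub,\noise)+\gamma$, since $\Ppub(S)-\noise(S)\le \Ppub(S_+)-\noise(S_+)=\mathrm{TV}(\Ppub,\noise)$. This bound is attained by $S=S_+$ exactly when $\Ppub(S_+)+\gamma\le 1$. Now $\Ppub(S_+^c)=\sum_{i:\Ppubi\le\noisei}\Ppubi$, so the attainment condition is precisely the definition of $\mathcal{A}$. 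Boundary cases where $S_+$ is empty or everything need a brief separate check. We conclude
\begin{align*}
\Psi(\noise)=\gamma+\mathrm{TV}(\Ppub,\noise)\quad\text{for all }\noise\in\mathcal{A}.
\end{align*}

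\textbf{Step 3: fix the floor $t$ and optimize.} On $\mathcal{A}$ we minimize $\beta_{min}(t,t)\,(\gamma+\mathrm{TV}(\Ppub,\noise))$. Fix $t=\min_i\noisei$; as in \eqref{geqnew}, we may relax this to $\noisei\ge t$ for all $i$. The remaining problem is to minimize $\mathrm{TV}(\Ppub,\noise)$ subject to $\noise\ge t$ and $\noise\in\mathcal{A}$.

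Ignoring $\mathcal{A}$ for the moment, the minimum TV over $\{\noise\ge t\}$ is
\begin{align*}
M_{add}(t)=\sum_{i:\Ppubi\le t}(t-\Ppubi).
\end{align*}
It is a lower bound because each coordinate below $t$ must be raised. It is attained by every floor raising that lowers only coordinates above $t$ and keeps them $\ge t$. WF and MIFR are both such floor raisings, so both achieve TV $=M_{add}(t)$. Among all TV-minimizers, membership in $\mathcal{A}$ requires $\sum_{i:\Ppubi\le\noisei}\Ppubi\ge\gamma$.

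\textbf{Step 4: MIFR is the best minimizer for $\mathcal{A}$.} For a TV-minimizer, the indices with $\Ppubi\le\noisei$ are $L$ together with the untouched coordinates of $L^c$. The quantity $\sum_{i:\Ppubi\le\noisei}\Ppubi$ is therefore the invariant mass plus $\Ppub(L)$. MIFR chooses the donor set $U^\star$ of minimum $\Ppub$-mass, which maximizes this quantity. So if some TV-minimizer at level $t$ lies in $\mathcal{A}$, then MIFR$(\Ppub,t)$ lies in $\mathcal{A}$, i.e.\ $t\in\mathcal{F}$. Then
\begin{align*}
\min_{\noise\in\mathcal{A}} g(\noise)=\min_{t\in\mathcal{F}}\beta_{min}(t,t)\,\bigl(\gamma+M_{add}(t)\bigr),
\end{align*}
with the minimizer $\mathrm{MIFR}(\Ppub,t_1)$.

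\textbf{Main obstacle.} The difficulty is a level $t\notin\mathcal{F}$ for which some non-TV-minimal $\noise\in\mathcal{A}$ with floor $t$ beats every $\noise$ built from $t'\in\mathcal{F}$. Such a $\noise$ has $\mathrm{TV}>M_{add}(t)$. I would handle this by an exchange argument, taking the donor mass to be removed as a fixed total $D$:
\begin{itemize}
\item Any $\noise\in\mathcal{A}$ with floor $t$ can be modified into the MIFR-type point for a possibly larger effective floor without increasing the objective.
\item Concretely, take the extra TV above $M_{add}(t)$ and reallocate it to raising the floor. Raising the floor increases $t$, which does not increase $\beta_{min}$, and costs at most the same TV.
\end{itemize}
If the exchange cannot be made fully rigorous, I would instead argue via the LP of Theorem~\ref{thm1} restricted to $\mathcal{A}$.
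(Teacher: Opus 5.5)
\textbf{There is a genuine gap in Step 4, and it is the one you label the main obstacle.} Your Steps 1--3 follow the paper's proof: reduce to $\beta_{min}(t,t)(\gamma+\mathrm{TV}(\Ppub,\noise))$ via Lemma~\ref{Slemma}, then fix the floor. Step 4, however, only shows that $\mathrm{MIFR}(\Ppub,t)$ maximizes $\sum_{i:\Ppubi\le\noisei}\Ppubi$ \emph{among TV-minimizers} at level $t$. Nothing then rules out a non-TV-minimal $\noise\in\mathcal{A}$ whose floor lies outside $\mathcal{F}$. Without that, neither the value identity nor the argmin follows.

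Your sketched exchange argument cannot close this, because it points the wrong way. An optimal donor set at level $t'>t$ is also feasible at level $t$. Hence $\Ppub(U^\star(t))$ is non-decreasing in $t$, and $\mathcal{F}$ is an initial segment of $[0,1/d]$. Raising the floor from some $t\notin\mathcal{F}$ can therefore never reach an MIFR point in $\mathcal{A}$.

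\textbf{The missing idea.} This is the Claim inside the paper's proof: MIFR's maximality holds over \emph{all} $\noise$ with $\noisei\ge t$ for every $i$, so the obstacle is vacuous. Let $L=\{i:\Ppubi\le t\}$ and $D=\{i:\noisei<\Ppubi\}$. Every $i\in D$ has $\Ppubi>\noisei\ge t$, so $D\subseteq L^c$ and $L\subseteq D^c$. Every term $\noisei-\Ppubi$ with $i\notin D$ is nonnegative, so mass conservation gives
\[
\sum_{i\in D}(\Ppubi-t)\;\ge\;\sum_{i\in D}(\Ppubi-\noisei)\;=\;\sum_{i\notin D}(\noisei-\Ppubi)\;\ge\;\sum_{i\in L}(t-\Ppubi).
\]
So $D$ is a feasible donor set in Def.~\ref{def2}. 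Hence $\Ppub(D)\ge\Ppub(U^\star)$ and $\sum_{i:\Ppubi\le\noisei}\Ppubi=1-\Ppub(D)\le 1-\Ppub(U^\star)$. The right side is exactly the value for $\mathrm{MIFR}(\Ppub,t)$, since minimality of $U^\star$ forces $b_i<\Ppubi$ on $U^\star$.

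Consequently every $\noise\in\mathcal{A}$ has $t_0:=\min_i\noisei\in\mathcal{F}$ and $\mathrm{TV}(\Ppub,\noise)\ge M_{add}(t_0)$. This gives $g(\noise)\ge\beta_{min}(t_0,t_0)(\gamma+M_{add}(t_0))\ge g(\mathrm{MIFR}(\Ppub,t_1))$, with no exchange step needed.

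\textbf{A smaller point on your deferred boundary check.} It is not vacuous. At $\noise=\Ppub$ one gets $\max_{\Ppriv\in\mathcal{B}_{\gamma,\Ppub}}\mathrm{TV}(\Ppriv,\Ppub)=\min\{\gamma,1-\min_i\Ppubi\}$, which is strictly below $\gamma$ when $\min_i\Ppubi>1-\gamma$. The paper's choice $S=B^c=\emptyset$ silently fails there as well. The argmin still survives: in that regime $\mathcal{F}\subseteq[0,\min_i\Ppubi]$ and $\mathrm{MIFR}(\Ppub,t)=\Ppub$ for all $t\in\mathcal{F}$. You still need to say this explicitly.
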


The proof of Lemma~\ref{mift-opt} is given in Sec.~\ref{pfmifr-opt}. Lemma~\ref{mift-opt} characterizes the optimal solution to the first component in the minimization of \eqref{torem}. Next, we analyze the second component. Consider the set of distributions $\mathcal{A}^c=\Delta_d\setminus\mathcal{A}$.
\begin{align}
    \mathcal{A}^c=\{\noise\in\Delta_d:\sum_{i:\Ppubi\leq\noisei}\Ppubi<\gamma\}
\end{align}

\begin{lemma}\label{wf-subopt}
    For given $\Ppub\in\Delta_d$ and $\gamma\in(0,1)$, $\mathrm{WF}(\Ppub,t_2)\in\mathcal{A}^c$, and
    \begin{align}
        \min_{\noise\in\mathcal{A}^c}g(\noise)\leq g\left(\mathrm{WF}(\Ppub,t_2)\right)
    \end{align}
    where $t_2=\arg\min_{t\in\mathcal{G}}\beta_{min}(t,t)\Phi(t)$ with
    \begin{align}
        \Phi(t)=\max_{k_t\leq u\leq d-1}\min\left\{\gamma-\sum_{i=1}^u(\Ppubi-\mathrm{WF}^{[i]}(\Ppub,t)),\sum_{i=1}^u\mathrm{WF}^{[i]}(\Ppub,t)\right\}
    \end{align}
    $k_t=\sum_{i=1}^d\mathbf{1}_{\{\Ppubi\leq t\}}$, and  $\mathcal{G}=\{t:\sum_{i:\Ppubi\leq t}\Ppubi<\gamma\}$.
\end{lemma}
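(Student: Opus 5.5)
The inequality $\min_{\noise\in\mathcal{A}^c}g(\noise)\leq g(\mathrm{WF}(\Ppub,t_2))$ follows immediately once we know that $\mathrm{WF}(\Ppub,t_2)$ is a feasible point of the left-hand minimization. So the plan is: (i) show $t_2$ is well defined, (ii) show $\mathrm{WF}(\Ppub,t)\in\mathcal{A}^c$ for every $t\in\mathcal{G}$, and (iii) identify $g(\mathrm{WF}(\Ppub,t))$ with $\beta_{min}(t,t)\Phi(t)$, so that $t_2$ is the best threshold within the water-filling family.

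\textbf{Step (i): $t_2$ is well defined.} Note that $t=0$ lies in $\mathcal{G}$, since $\sum_{i:\Ppubi\le 0}\Ppubi=0<\gamma$. Hence $\mathcal{G}\neq\emptyset$. The objective $\beta_{min}(t,t)\Phi(t)$ is continuous in $t$ by Lemma~\ref{asymptotic-opt} and by continuity of $\mathrm{WF}$ in $t$, so a minimizer (or infimizing sequence) exists.

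\textbf{Step (ii): membership in $\mathcal{A}^c$.} Let $B=\mathrm{WF}(\Ppub,t)$ with $t\in\mathcal{G}$, and order the coordinates as in Definition~\ref{def1}. By the argument in the lemma of App.~\ref{pfoverview}, $\tau\ge0$.
\begin{itemize}
\item For $i\le k$ (i.e.\ $\Ppubi\le t$) we have $B_i=t\ge\Ppubi$, so these indices always belong to $\{i:\Ppubi\le B_i\}$.
\item For $i>k$ we have $\Ppubi>t$ and $B_i=\max\{t,\Ppubi-\tau\}$, which is strictly less than $\Ppubi$ whenever $\tau>0$.
\end{itemize}
Hence, when $\tau>0$,
\[
\{i:\Ppubi\le B_i\}=\{i:\Ppubi\le t\},
\qquad\text{so}\qquad
\sum_{i:\Ppubi\le B_i}\Ppubi=\sum_{i:\Ppubi\le t}\Ppubi<\gamma
\]
by the definition of $\mathcal{G}$.

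The degenerate case $\tau=0$ occurs exactly when no mass is added, i.e.\ every $\Ppubi\le t$ already equals $t$. Then $B=\Ppub\in\mathcal{A}$, and this case must be excluded. I would handle it by restricting $\mathcal{G}$ (equivalently the argmin defining $t_2$) to thresholds with $\sum_{i\le k}(t-\Ppubi)>0$. I would also check that excluding these thresholds loses nothing in the comparison of Remark~\ref{remark}: at such $t$, $\mathrm{WF}(\Ppub,t)=\Ppub$ already lies in $\mathcal{A}$, so it is covered by the MIFR branch of Lemma~\ref{mift-opt}. This edge case is the main technical point to get right.

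\textbf{Step (iii): evaluating $g$ at the water-filled point.} Since $\min_iB_i=t$, the proof of Theorem~\ref{thm1} gives
\[
g(B)=\beta_{min}(t,t)\max_{\Ppriv\in\mathcal{B}_{\gamma,\Ppub}}\mathrm{TV}(\Ppriv,B),
\]
and Lemma~\ref{Slemma} rewrites the inner maximum as
\[
\max_{S}\bigl[\min\{\Ppub(S)+\gamma,1\}-B(S)\bigr].
\]
I would rewrite this as a maximum over the complement $T=S^c$:
\[
\max_{T}\min\bigl\{\gamma-(\Ppub(T)-B(T)),\;B(T)\bigr\}.
\]
Then I would argue that an optimal $T$ consists of the $u$ smallest-indexed coordinates for some $u\ge k_t$. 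The reason is that water-filling makes $\Ppubi-B_i$ nonpositive on the raised coordinates and equal to $\min\{\tau,\Ppubi-t\}$ on the rest, which is monotone in the sorted order. A sorted-prefix exchange argument should then reduce the maximization over $T$ to the maximization over $u$ in $\Phi(t)$. This exchange argument is the step I expect to be the main obstacle.

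Even if that identification holds only as an upper bound $g(B)\le\beta_{min}(t,t)\Phi(t)$, this direction suffices for the stated inequality, together with the trivial feasibility bound from Step (ii).
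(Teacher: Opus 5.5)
Your argument for the inequality is correct and is the same as the paper's. The bound is nothing more than feasibility of $\mathrm{WF}(\Ppub,t_2)$ in $\mathcal{A}^c$. The paper gets that feasibility from the identity $\{i:\Ppubi\le \mathrm{WF}^{[i]}(\Ppub,t)\}=\{i:\Ppubi\le t\}$, which it asserts ``from the construction of WF'' without proof (Claim~\ref{wfalways} and the sentence following \eqref{ordered}). Your Step (ii) actually proves this identity via $\tau>0$.

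Your degenerate case is a genuine defect of the statement, not a technicality. If no mass is added, then $\mathrm{WF}(\Ppub,t)=\Ppub\in\mathcal{A}$. So the paper's claim that $\mathrm{WF}(\Ppub,t)\in\mathcal{A}^c$ for every $t\in\mathcal{G}$ already fails at $t=0$ for any strictly positive $\Ppub$. For uniform $\Ppub$, every $t\in[0,1/d)$ is degenerate, so the lemma's first assertion is simply false there; your restricted $\mathcal{G}$ is then empty, and the water-filling branch contributes nothing. Your restriction is therefore necessary, not optional.

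Step (iii) is unnecessary: $\Phi$ and the particular choice of $t_2$ play no role in the inequality. Your closing remark also has the direction reversed. Restricting to sorted prefixes, as the paper does in \eqref{ordered}, only shows that the maximum over $T$ is at least $\Phi(t)$, i.e.\ $g(\mathrm{WF}(\Ppub,t))\ge\beta_{min}(t,t)\Phi(t)$. The upper bound would need your exchange argument, and that fails in general. The inner problem equals $\max_T\bigl(B(T)-[\Ppub(T)-\gamma]_+\bigr)$, which is knapsack-like, so prefixes in order of $\Ppubi-B_i$ need not be optimal. For example, take $\Ppub=(0,0.06,0.24,0.7)$, $t=0.05$, $\gamma=0.3$. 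Then $B=(0.05,0.05,0.22,0.68)$ and $\Phi(t)=0.32$, while $T=\{1,4\}$ attains $0.33$.

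Step (i) is likewise not needed, since the statement presupposes $t_2$. Your justification would not establish its existence anyway, for three reasons. Lemma~\ref{asymptotic-opt} gives continuity only in the second argument, not along $t\mapsto\beta_{min}(t,t)$. $\Phi$ can jump when $k_t$ changes. And $\mathcal{G}$ may be a half-open interval.
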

The proof of Lemma~\ref{wf-subopt} is given in Sec.~\ref{pfwfsub}. Combining Lemmas~\ref{mift-opt} and \ref{wf-subopt} with \eqref{torem}, we have,
\begin{align}
&\min_{\noise\in\Delta_{d}}\min_{\substack{\beta\in[0,1]\\\beta\geq\beta_{min}( N_{min_1},N_{min_1})}}\max_{\Ppriv\in\mathcal{B}_{\gamma,\Ppub}} \mathrm{TV}(\Ppriv,P_{syn})\leq\min\left\{g\left(\mathrm{MIFR}(\Ppub,t_1)\right), g\left(\mathrm{WF}(\Ppub,t_2)\right)\right\}
\end{align}
which corresponds to the suboptimal approximation of $\noise^*$ in Remark~\ref{remark}. 

Fig.~\ref{diff} shows the histogram of the differences between the objective values (minimax TV distance  $g(\noise)=\min_{\beta\geq\beta_{min}}\max_{\Ppriv\in\mathcal{B}{\gamma,\Ppub}}\mathrm{TV}(\Ppriv,P_{syn})$) obtained using the optimal noise distribution $\noise^*$ from Theorem~\ref{thm1} and the approximation $\hat{P}_{nse}^*$ from Remark~\ref{remark}. The histograms are computed over a grid of public distributions $\Ppub$, with $\gamma=0.35$, $n=100$, $m=80$, $\eps=5$, and $\delta=1/n$.

\begin{figure}[t!]
    \centering
    \begin{subfigure}[b]{0.45\textwidth}
        \centering
        \includegraphics[width=\textwidth]{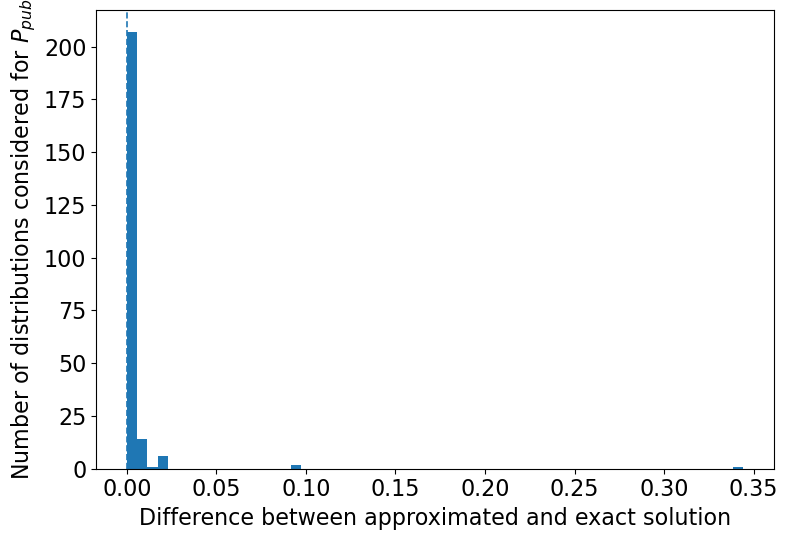} 
        \caption{$d=3$}
        \label{d3}
    \end{subfigure}
        \begin{subfigure}[b]{0.45\textwidth}
        \centering
        \includegraphics[width=\textwidth]{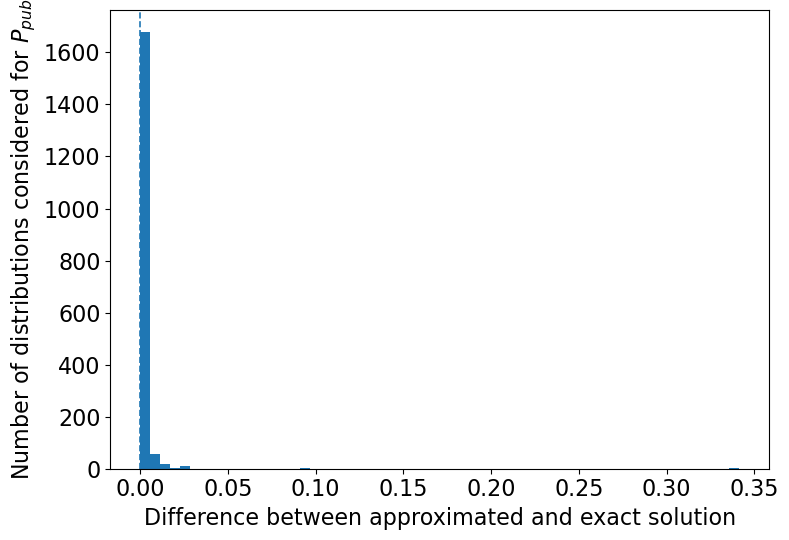} 
        \caption{$d=4$}
        \label{d4}
    \end{subfigure}
    \caption{Histograms over candidate $\Ppub$ distributions, showing the difference between the minimax TV distance  $g(\noise)=\min_{\beta\geq\beta_{min}}\max_{\Ppriv\in\mathcal{B}{\gamma,\Ppub}}\mathrm{TV}(\Ppriv,P_{syn})$ considering the optimal $\noise^*$ from Theorem~\ref{thm1} and the approximation $\hat{P}_{nse}^*$ in Remark~\ref{remark}.}
    \label{diff}
\end{figure}

\subsection{Proof of Lemma~\ref{mift-opt}}\label{pfmifr-opt}

\begin{Proof}
    For a fixed $\noise\in\mathcal{A}$ with $\min_i\noisei=N_{min_1}$,
    \begin{align}
        g(\noise)&=\min_{\substack{\beta\in[0,1]\\\beta\geq\beta_{min}(N_{min_1},N_{min_1})}}\max_{\Ppriv\in\mathcal{B}_{\gamma,\Ppub}}\mathrm{TV}(\Ppriv,P_{syn}) \\
        &=\beta_{min}(N_{min_1},N_{min_1})\max_{\Ppriv\in\mathcal{B}_{\gamma,\Ppub}}\mathrm{TV}(\Ppriv,P_{nse})\label{contg1}
    \end{align}
    Recall from Lemma~\ref{Slemma} that for any fixed $\noise$:
   \begin{align}\label{sc1}
       \max_{\Ppriv\in\mathcal{B}_{\gamma,\Ppub}} \mathrm{TV}(\Ppriv,P_{nse})=\max_{\substack{S\subset[d]\\1\leq|S|\leq d-1}}\left[\min\left\{\Ppub(S)+\gamma,1\right\}-\noise(S)\right]
   \end{align}
Note that for any fixed $\Ppub\in\Delta_d$ and $\noise\in\mathcal{A}$,
   \begin{align}
       \mathrm{TV}(\noise,\Ppub)=\noise(B)-\Ppub(B)
   \end{align}
where $B=\{i:\Ppubi\leq \noisei\}$. Since we are considering $\noise\in\mathcal{A}$, we have,
\begin{align}
    \Ppub(B)\geq\gamma \quad \implies\quad \Ppub(B^c)<1-\gamma
\end{align}
Thus, when $S=B^c$, we have,
\begin{align}\label{mineg}
    \min\left\{\Ppub(B^c)+\gamma,1\right\}-\noise(B^c)=\Ppub(B^c)+\gamma-\noise(B^c)=\mathrm{TV}(\Ppub,\noise)+\gamma
\end{align}
At the same time, we have:
\begin{align}\label{ubtv2}
    \mathrm{TV}(\Ppriv,\noise)\leq\mathrm{TV}(\Ppriv,\Ppub)+\mathrm{TV}(\Ppub,\noise)\leq\gamma+\mathrm{TV}(\Ppub,\noise)
\end{align}
Since the upper bound in \eqref{ubtv2} is satisfied with equality in \eqref{mineg} when $S=B^c$, combining \eqref{sc1}, \eqref{mineg}, and \eqref{ubtv2}, we get:
\begin{align}
    \max_{\Ppriv\in\mathcal{B}_{\gamma,\Ppub}} \mathrm{TV}(\Ppriv,P_{nse})=\gamma+\mathrm{TV}(\Ppub,\noise)
\end{align}
Continuing from \eqref{contg1}, for any $\noise\in\mathcal{A}$,
\begin{align}
    g(\noise)=\beta_{min}(N_{min_1},N_{min_1})(\gamma+\mathrm{TV}(\Ppub,\noise))
\end{align}
Consequently,
\begin{align}\label{gcase1}
    \min_{\noise\in\mathcal{A}}g(\noise)=\min_{0\leq t\leq \frac{1}{d}}\beta_{min}(t,t)\min_{\substack{\noise\in\mathcal{A}\\\min_i\noisei=t}}(\gamma+\mathrm{TV}(\Ppub,\noise))
\end{align}

\begin{claim}
    Consider a fixed $\Ppub\in\Delta_d$ and a fixed $t\in[0,1/d]$. If $\mathrm{MIFR}(\Ppub,t)\notin\mathcal{A}$, there exists no other $\noise$ with $\min_i\noisei=t$ satisfying $\noise\in\mathcal{A}$.
\end{claim}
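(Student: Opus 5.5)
We argue by contraposition: if some $\noise$ with $\min_i\noisei=t$ lies in $\mathcal{A}$, then $\mathrm{MIFR}(\Ppub,t)\in\mathcal{A}$. The idea is to show that membership in $\mathcal{A}$ is governed by the $\Ppub$-mass of the \emph{donor} coordinates, i.e.\ those where the noise falls strictly below $\Ppub$. We then show that MIFR makes this donor mass as small as possible. Throughout, write $A=\Ppub$ and $L=\{i:A_i\le t\}$ as in Def.~\ref{def2}. For any $P\in\Delta_d$ let $B_P=\{i:A_i\le P^{[i]}\}$, so that $P\in\mathcal{A}$ iff $A(B_P)\ge\gamma$, equivalently $A(B_P^c)\le 1-\gamma$.

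\textbf{Step 1 (donor set of an arbitrary feasible $\noise$).} Fix $\noise$ with $\noisei\ge t$ for all $i$; in particular this holds when $\min_i\noisei=t$. For $i\in L$ we have $\noisei\ge t\ge A_i$, so $L\subseteq B_{\noise}$ and hence $D:=B_{\noise}^c\subseteq L^c$. Mass conservation gives
\[
\sum_{i\in D}(A_i-\noisei)=\sum_{i\in B_{\noise}}(\noisei-A_i)\ \ge\ \sum_{i\in L}(\noisei-A_i)\ \ge\ \sum_{i\in L}(t-A_i).
\]
Since $\noisei\ge t$ on $D$, we also have $A_i-\noisei\le A_i-t$. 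Therefore $\sum_{i\in D}(A_i-t)\ge\sum_{i\in L}(t-A_i)$. So $D$ is a feasible set for the donor optimization defining $U^\star$ in Def.~\ref{def2}, and optimality of $U^\star$ yields $A(D)\ge A(U^\star)$.

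\textbf{Step 2 (donor set of MIFR).} Let $M=\mathrm{MIFR}(A,t)$. By construction $M^{[i]}=t\ge A_i$ on $L$, and $M^{[i]}=A_i$ off $L\cup U^\star$. Hence every index with $M^{[i]}<A_i$ lies in $U^\star$, i.e.\ $B_M^c\subseteq U^\star$. Some $b_i$ may equal $A_i$, which only shrinks $B_M^c$. This gives $A(B_M^c)\le A(U^\star)$.

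\textbf{Step 3 (conclusion).} Combining the two steps, for any such $\noise$,
\[
A(B_{\noise})=1-A(D)\le 1-A(U^\star)\le 1-A(B_M^c)=A(B_M).
\]
Thus if $M\notin\mathcal{A}$, i.e.\ $A(B_M)<\gamma$, then $A(B_{\noise})<\gamma$ and $\noise\notin\mathcal{A}$, which proves the claim. The argument in fact covers every $\noise$ with $\min_i\noisei\ge t$.

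\textbf{Anticipated difficulties.} The main subtlety is in Step 1: the mass added on $B_{\noise}$ may exceed the minimum $\sum_{L}(t-A_i)$, because $\noise$ can raise coordinates outside $L$ or raise $L$ above $t$. This is handled by the inequality chain, since only a lower bound on the donated mass is needed. Degenerate cases need a brief check. If $L=\emptyset$, then $U^\star=\emptyset$ and $M=A\in\mathcal{A}$ trivially for $\gamma\le1$. Ties with $A_i=\noisei$ are harmless, since such indices belong to $B$ by definition.
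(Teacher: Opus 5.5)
Your proof is correct and follows the same route as the paper: both rest on the fact that $\mathrm{MIFR}(\Ppub,t)$ maximizes $\sum_{i:\Ppubi\le\noisei}\Ppubi$ among all $\noise$ with $\min_i\noisei=t$. The paper only asserts this as ``the exact definition'' of MIFR; your Steps 1--2 prove it, by showing that the set $\{i:\noisei<\Ppubi\}$ of any floor-$t$ distribution is feasible for the $U^\star$ program and that MIFR's own such set lies inside $U^\star$, and your remark on $L=\emptyset$ covers the case where MIFR's minimum exceeds $t$.
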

\begin{Proof}
    To satisfy $\min_i\noisei=t$, we need $\noisei\geq t$, $\forall i$. Note that,
    \begin{align}
        \arg\max_{\substack{\noise\in\Delta_d\\\min_i\noisei=t}}\quad\sum_{i:\Ppubi\leq\noisei}\Ppubi=\mathrm{MIFR}(\Ppub,t)
    \end{align}
    as $\mathrm{MIFR}(\Ppub,t)$ increases the elements of $\Ppub$ below $t$ exactly up to $t$, and removes the added mass from larger elements of $\Ppub$ such that the unchanged probability mass is maximized (see Def.~\ref{def2}). This is the exact definition of maximizing $\sum_{i:\Ppubi\leq\noisei}\Ppubi$ while satisfying the floor constraint. Thus, for a given $t\in[0,1/d]$, if $\sum_{i:\Ppubi\leq\mathrm{MIFR}^{[i]}(\Ppub,t)}\Ppubi<\gamma$, we have,
    \begin{align}
        \sum_{i:\Ppubi\leq\noisei}\Ppubi\leq\sum_{i:\Ppubi\leq\mathrm{MIFR}^{[i]}(\Ppub,t)}\Ppubi<\gamma,\quad\forall\noise\in\Delta_d,\quad\min_i\noisei=t
    \end{align}
     Thus, no other $\noise$ with $\min_i\noisei=t$ satisfies $\noise\in\mathcal{A}$ for that specific $t$.
\end{Proof}
For the given $\Ppub$, let $\mathcal{F}\subseteq[0,1/d]$ be the feasible set of $t$, for which $\mathrm{MIFR}(\Ppub,t)\in\mathcal{A}$. In other words, $\mathcal{F}$ is the set of $t$ values for which the set $\mathcal{A}\cap\{\noise:\min_i\noisei=t\}$ is non-empty. Then, we can rewrite \eqref{gcase1} as,
\begin{align}
    \min_{\noise\in\mathcal{A}}g(\noise)&=\min_{0\leq t\leq \frac{1}{d}}\beta_{min}(t,t)\min_{\substack{\noise\in\mathcal{A}\\\min_i\noisei=t}}(\gamma+\mathrm{TV}(\Ppub,\noise))\\
    &=\min_{t\in\mathcal{F}}\beta_{min}(t,t)\min_{\substack{\noise\in\mathcal{A}\\\min_i\noisei=t}}(\gamma+\mathrm{TV}(\Ppub,\noise))\\
    &\geq\min_{t\in\mathcal{F}}\beta_{min}(t,t)(\gamma+\sum_{i:\Ppubi\leq t}(t-\Ppubi))\label{lbg}
\end{align}
as the minimum $\mathrm{TV}(\Ppub,\noise)$ with the $\min_i\noisei=t$ constraint is achieved when only the smaller elements of $\Ppub$ are increased to $t$ and the corresponding additional mass is removed from larger $\Ppubi$. At the same time, from the definition of MIFR in Def.~\ref{def2}, we have,
\begin{align}
    g(\mathrm{MIFR}(\Ppub,t^*))&=\beta_{min}(t^*,t^*)(\gamma+\sum_{i:\Ppubi\leq t^*}(t^*-\Ppubi))
\end{align}
where $t^*=\arg\min_{t\in\mathcal{F}}\beta_{min}(t,t)(\gamma+\sum_{i:\Ppubi\leq t}(t-\Ppubi))$. This completes the proof of Lemma~\ref{mift-opt} as $\mathrm{MIFR}(\Ppub,t^*)$ achieves the lower bound in \eqref{lbg} with equality.
\end{Proof}

\subsection{Proof of Lemma~\ref{wf-subopt}}\label{pfwfsub}
\begin{Proof}
Consider the second component in the minimization of \eqref{torem}:
\begin{align}
    \min_{\noise\in\mathcal{A}^c}g(\noise)&=\min_{\noise\in\mathcal{A}^c}\min_{\substack{\beta\in[0,1]\\\beta\geq\beta_{min}(N_{min_1},N_{min_1})}}\max_{\Ppriv\in\mathcal{B}_{\gamma,\Ppub}}\mathrm{TV}(\Ppriv,P_{syn}) \\
    &=\min_{\noise\in\mathcal{A}^c}\beta_{min}(N_{min_1},N_{min_1})\max_{\Ppriv\in\mathcal{B}_{\gamma,\Ppub}}\mathrm{TV}(\Ppriv,P_{nse})\\
    &=\min_{0\leq t\leq\frac{1}{d}}\beta_{min}(t,t)\min_{\substack{\noise\in\mathcal{A}^c\\\min_i\noisei=t}}\max_{\Ppriv\in\mathcal{B}_{\gamma,\Ppub}}\mathrm{TV}(\Ppriv,P_{nse})\label{continuewf}
\end{align}
Next, we characterize the set of $t$ values for which $\mathcal{A}^c\cap\{\noise:\min_i\noisei=t\}\neq\emptyset$.
\begin{claim}\label{wfalways}
    For any given $\Ppub\in\Delta_d$, fix $t\in[0,1/d]$. If $\mathrm{WF}(\Ppub,t)\notin\mathcal{A}^c$, then, $\mathcal{A}^c\cap\{\noise:\min_i\noisei=t\}=\emptyset$.
\end{claim}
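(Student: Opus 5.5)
The plan is to argue by contrapositive on the $\Ppub$-mass of the ``non-decreased'' set. For a noise distribution $N\in\Delta_d$, write $B(N):=\{i:\Ppubi\le N^{[i]}\}$, so that $N\in\mathcal{A}$ exactly when $\Ppub(B(N))\ge\gamma$. The claim follows once we show that, among all $N$ with $\min_i N^{[i]}=t$, the water-filled vector $\mathrm{WF}(\Ppub,t)$ \emph{minimizes} $\Ppub(B(N))$. Indeed, if $\mathrm{WF}(\Ppub,t)\notin\mathcal{A}^c$, i.e.\ $\Ppub(B(\mathrm{WF}(\Ppub,t)))\ge\gamma$, then every $N$ with $\min_iN^{[i]}=t$ has $\Ppub(B(N))\ge\gamma$. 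Hence $N\in\mathcal{A}$, and $\mathcal{A}^c\cap\{N:\min_iN^{[i]}=t\}=\emptyset$. This mirrors the role MIFR plays for $\mathcal{A}$ in the preceding claim, but with minimization in place of maximization.

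\textbf{Step 1: identify $B(\mathrm{WF}(\Ppub,t))$ from Def.~\ref{def1}.} Write $A=\Ppub$ and $W=\mathrm{WF}(A,t)$, with $W_i=\max\{t,A_i-\tau\}$. As in the lemma of App.~\ref{pfoverview}, $\tau\ge0$. Let $L=\{i:A_i\le t\}$. For $i\in L$ we have $W_i\ge t\ge A_i$, so $L\subseteq B(W)$. For $i\notin L$ we have $A_i>t$, and $W_i$ is either $t<A_i$ or $A_i-\tau$. So when $\tau>0$, no index outside $L$ lies in $B(W)$, giving $B(W)=L$ and $\Ppub(B(W))=\Ppub(L)=\sum_{i:\Ppubi\le t}\Ppubi$. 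This matches the set $\mathcal{G}$ used in Lemma~\ref{wf-subopt}.

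\textbf{Step 2: the lower bound for arbitrary $N$.} If $\min_iN^{[i]}=t$, then $N^{[i]}\ge t\ge\Ppubi$ for every $i\in L$. Hence $L\subseteq B(N)$ and $\Ppub(B(N))\ge\Ppub(L)=\Ppub(B(W))$. Combining with Step 1 gives the minimality of $W$ and therefore the claim.

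\textbf{Main obstacle: the degenerate case $\tau=0$.} When $\tau=0$, no mass needs to be moved: all entries of $\Ppub$ that are $\le t$ already equal $t$, or $L=\emptyset$. Then $W=\Ppub$ and $B(W)=[d]$, so $W\in\mathcal{A}$ trivially. An arbitrary $N$ with floor $t$, however, could still decrease heavy coordinates and land in $\mathcal{A}^c$, so the containment argument breaks. My first attempt would be to exclude this case as the paper implicitly does: Lemma~\ref{wf-subopt} only uses $t\in\mathcal{G}$, and there $\Ppub(L)<\gamma$ forces $\mathrm{WF}(\Ppub,t)\in\mathcal{A}^c$ directly by Step 1. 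I would state the claim for $t$ in the range where floor-raising is nontrivial, i.e.\ $\tau>0$, or equivalently adopt the tie convention that only coordinates in $L$ count as non-decreased. I would check carefully that this restriction loses nothing in \eqref{continuewf}. If one insists on the literal statement, the right fix is to replace $W$ by a vector that strictly decreases every coordinate outside $L$; its $B$-set is again $L$. This makes Step 2 an exact characterization: $\mathcal{A}^c\cap\{\min=t\}\neq\emptyset$ iff $\Ppub(L)<\gamma$ and $L\ne[d]$.
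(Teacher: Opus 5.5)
Your argument is the paper's argument. Its proof is exactly your Step 2 chain $\sum_{i:\Ppubi\le\noisei}\Ppubi\ge\sum_{i:\Ppubi\le t}\Ppubi$. It then asserts, ``from the construction of WF'', that the right-hand side equals $\sum_{i:\Ppubi\le\mathrm{WF}^{[i]}(\Ppub,t)}\Ppubi$, which is your Step 1.

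Your worry about $\tau=0$ is well founded, and the paper's proof does not address it. That equality needs every $i$ with $\Ppubi>t$ to be strictly decreased, i.e.\ $\tau>0$, and without it the statement is false. Take $d=3$, $\Ppub=(0.2,0.3,0.5)$, $t=0.2$, $\gamma=0.5$. Then $\tau=0$ and $\mathrm{WF}(\Ppub,t)=\Ppub\in\mathcal{A}$. Yet $\noise=(0.4,0.2,0.4)$ has $\min_i\noisei=t$ and $\sum_{i:\Ppubi\le\noisei}\Ppubi=0.2<\gamma$. Restricting to $\tau>0$, as you propose, therefore gives a correct proof.

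Two of your fallback remarks need correcting.

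First, restricting to $t\in\mathcal{G}$ is not the same as restricting to $\tau>0$. In the example above, $\sum_{i:\Ppubi\le t}\Ppubi=0.2<\gamma$, so $t\in\mathcal{G}$, but $\mathrm{WF}(\Ppub,t)\notin\mathcal{A}^c$. Your Step 1 does not apply there. The same example also breaks the paper's later assertion in Lemma~\ref{wf-subopt} that $\mathrm{WF}(\Ppub,t)\in\mathcal{A}^c$ for all $t\in\mathcal{G}$.

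Second, your ``exact characterization'' fails when $L=\emptyset$. Since $\sum_i(\noisei-\Ppubi)=0$, the set $\{i:\Ppubi\le\noisei\}$ is never empty. So a vector strictly decreasing every coordinate outside $L$ does not exist in that case. For instance, with $\Ppub=(0.3,0.3,0.4)$, $t=0.2$, $\gamma=0.2$, the intersection is empty even though $\Ppub(L)=0<\gamma$ and $L\ne[d]$. The correct condition when $L=\emptyset$ is $\min_i\Ppubi<\gamma$.

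The cleanest repair is what your Step 2 already gives, with no reference to WF: if $\sum_{i:\Ppubi\le t}\Ppubi\ge\gamma$, then $\mathcal{A}^c\cap\{\noise:\min_i\noisei=t\}=\emptyset$. This is precisely the form used downstream to restrict \eqref{continuewf} to $t\in\mathcal{G}$.
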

\begin{Proof}
    For any $\noise$ with $\min_i\noisei=t$, we have,
    \begin{align}
       \sum_{i:\Ppubi\leq\noisei}\Ppubi\geq\sum_{i:\Ppubi\leq t}\Ppubi=\sum_{i:\Ppubi\leq\mathrm{WF}(\Ppub,t)}\Ppubi 
    \end{align}
    from the construction of $\mathrm{WF}(\Ppub,t)$ in Def.~\ref{def1}. Now, consider the case where $\mathrm{WF}(\Ppub,t)\notin\mathcal{A}^c$, i.e., $\sum_{i:\Ppubi\leq\mathrm{WF}(\Ppub,t)}\Ppubi >\gamma$. Then, for any $\noise\in\Delta_d$ satisfying $\min_i\noisei=t$, we have,
    \begin{align}
       \gamma<\sum_{i:\Ppubi\leq\mathrm{WF}(\Ppub,t)}\Ppubi =\sum_{i:\Ppubi\leq t}\Ppubi\leq \sum_{i:\Ppubi\leq\noisei}\Ppubi\quad\implies\quad\mathcal{A}^c\cap\{\noise:\min_i\noisei=t\}=\emptyset
    \end{align}
    \end{Proof}
  Claim~\ref{wfalways} states that $\mathcal{A}^c\cap\{\noise:\min_i\noisei=t\}\neq\emptyset$ only for $t\in\mathcal{G}$ where $\mathcal{G}=\{t:\sum_{i:\Ppubi\leq t}\Ppubi<\gamma\}$. Now, we continue from \eqref{continuewf}:
\begin{align}
    \min_{\noise\in\mathcal{A}^c}g(\noise)&=\min_{t\in\mathcal{G}}\beta_{min}(t,t)\min_{\substack{\noise\in\mathcal{A}^c\\\min_i\noisei=t}}\max_{\Ppriv\in\mathcal{B}_{\gamma,\Ppub}}\mathrm{TV}(\Ppriv,P_{nse})\\
    &=\min_{t\in\mathcal{G}}\beta_{min}(t,t)\min_{\substack{\noise\in\mathcal{A}^c\\\min_i\noisei=t}}\quad\max_{\substack{S\subset[d]\\1\leq|S|\leq d-1}}\left[\min\left\{\Ppub(S)+\gamma,1\right\}-\noise(S)\right]\\
    &=\min_{t\in\mathcal{G}}\beta_{min}(t,t)\min_{\substack{\noise\in\mathcal{A}^c\\\min_i\noisei=t}}\quad\max_{\substack{S\subset[d]\\1\leq|S|\leq d-1}}\min\left\{\Ppub(S)-\noise(S)+\gamma,1-\noise(S)\right\}\label{b4thelast}\\
    &=\min_{t\in\mathcal{G}}\beta_{min}(t,t)\min_{\substack{\noise\in\mathcal{A}^c\\\min_i\noisei=t}}\quad\max_{\substack{S^c\subset[d]\\1\leq|S^c|\leq d-1}}\min\left\{\noise(S^c)-\Ppub(S^c)+\gamma,\noise(S^c)\right\}\\
     &=\min_{t\in\mathcal{G}}\beta_{min}(t,t)\min_{\substack{\noise\in\mathcal{A}^c\\\min_i\noisei=t}}\quad\max_{\substack{R\subset[d]\\1\leq|R|\leq d-1}}\min\left\{\noise(R)-\Ppub(R)+\gamma,\noise(R)\right\}
\end{align}
where \eqref{b4thelast} follows from Lemma~\ref{Slemma}, and $R=S^c$ is used for the convenience in notation. 

\begin{claim}
     For a given $\noise$, let $L=\{i:\noisei\geq\Ppubi\}$. Then, for any $R\subset[d]$, $1\leq|R|\leq d-1$,
\begin{align}\label{l<r}
    \min\left\{\noise(R)-\Ppub(R)+\gamma,\noise(R)\right\}\leq\min\left\{\noise(R\cup L)-\Ppub(R\cup L)+\gamma,\noise(R\cup L)\right\}
\end{align}
\end{claim}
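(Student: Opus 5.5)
The plan is to compare the two $\min$ expressions argument by argument. Since $\min\{x,y\}$ is non-decreasing in each of $x$ and $y$, it is enough to show two separate inequalities:
\begin{align*}
\noise(R)\le \noise(R\cup L)
\quad\text{and}\quad
\noise(R)-\Ppub(R)\le \noise(R\cup L)-\Ppub(R\cup L).
\end{align*}
Adding $\gamma$ to both sides of the second one then gives \eqref{l<r}.

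To prove these, I will split $R\cup L$ into the disjoint union $R\sqcup (L\setminus R)$. Additivity of $\noise(\cdot)$ and $\Ppub(\cdot)$ over disjoint sets gives
\begin{align*}
\noise(R\cup L)&=\noise(R)+\sum_{i\in L\setminus R}\noisei,\\
\noise(R\cup L)-\Ppub(R\cup L)&=\noise(R)-\Ppub(R)+\sum_{i\in L\setminus R}\left(\noisei-\Ppubi\right).
\end{align*}
In the first identity, the extra sum is nonnegative because $\noise$ is a PMF. In the second, every summand is nonnegative by the definition $L=\{i:\noisei\ge\Ppubi\}$. Both required inequalities follow, which proves the claim. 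If $L\subseteq R$, the two sides of \eqref{l<r} coincide, so the inequality holds trivially.

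The only delicate point is that $R\cup L$ may equal $[d]$, which falls outside the range $1\le|S|\le d-1$ used in Lemma~\ref{Slemma}. The inequality \eqref{l<r} itself is still valid in that case: the right-hand side becomes $\min\{\gamma,1\}=\gamma$, and the argument above does not depend on the cardinality of $R\cup L$. However, when the claim is later used to restrict the maximization over $R$ to sets containing $L$, this boundary case has to be kept in mind. At that stage I would treat it separately, noting that $R=[d]$ contributes exactly $\gamma$, or else restrict to sets $R$ with $R\cup L\neq[d]$. For the claim as stated, no further argument is needed.
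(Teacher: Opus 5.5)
Your proof is correct and is essentially the paper's argument: the paper also observes that each $i\in L$ contributes $\noisei-\Ppubi\geq 0$ and $\noisei\geq 0$, so enlarging $R$ to $R\cup L$ weakly increases both arguments of the $\min$. Your side remark is also valid and goes beyond the paper: $R\cup L$ may equal $[d]$, which lies outside the range $1\leq|S|\leq d-1$ used when the claim is later applied to restrict the maximization to sets containing $L$, and the paper does not address this case.
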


\begin{Proof}
    \eqref{l<r} holds since each $i\in L$ contributes to $\noisei-\Ppubi\geq0$ and $\noisei\geq0$. 
\end{Proof}

Assume that the indices are arranged such that $\Ppubi-\noisei\leq\Ppub^{[i+1]}-\noise^{[i+1]}$, $\forall i$, and let $k(\noise)=\sum_{i=1}^d\mathbf{1}_{\{\Ppubi\leq\noisei\}}$. Recall that $L=\{1,\dotsc,k(\noise)\}$ for a given $\noise$. Then,
    \begin{align}
        \max_{\substack{R\subset[d]\\1\leq|R|\leq d-1}}&\min\left\{\noise(R)-\Ppub(R)+\gamma,\noise(R)\right\}\nonumber\\
        &=\max_{\substack{C\subset\{k(\noisei)+1,\dotsc,d\}\\0\leq|C|\leq d-1-k(\noise)}}\min\left\{\noise(C\cup L)-\Ppub(C\cup L)+\gamma,\noise(C\cup L)\right\}\\
        &\geq\max_{k(\noise)\leq u\leq d-1}\min\left\{\gamma-\sum_{i=1}^u(\Ppubi-\noisei),\sum_{i=1}^u\noisei\right\}\label{ordered}
    \end{align}
since only the ordered subsets $C$ are considered in \eqref{ordered}. Note that $\mathrm{WF}(\Ppub,t)\in\mathcal{A}^c\cap\{\noise:\min_i\noisei=t\}$ for any feasible $t$. Then,
\begin{align}
    k(\mathrm{WF}(\Ppub,t))=\sum_{i=1}^d\mathbf{1}_{\{\Ppubi\leq\mathrm{WF}^{[i]}(\Ppub,t)\}}=\sum_{i=1}^d\mathbf{1}_{\{\Ppubi\leq t\}}=k_t
\end{align}
Define $\Phi(t)=\max_{k_t\leq u\leq d-1}\min\left\{\gamma-\sum_{i=1}^u(\Ppubi-\mathrm{WF}^{[i]}(\Ppub,t)),\sum_{i=1}^u\mathrm{WF}^{[i]}(\Ppub,t)\right\}$, and compute:
\begin{align}
    t_2=\arg\max_{t\in\mathcal{G}}\beta_{min}(t,t)\Phi(t)
\end{align}
Since $\mathrm{WF}(\Ppub,t)\in\mathcal{A}^c\cap\{\noise:\min_i\noisei=t\}$, $\forall t\in\mathcal{G}$ and $t_2\in\mathcal{G}$, we have

\begin{align}
    \min_{\noise\in\mathcal{A}^c}g(\noise)
     &\leq g\left(\mathrm{WF}(\Ppub,t_2)\right)
\end{align}
\end{Proof}

\section{Proof of Corollary~\ref{cor1}}\label{pfcor1}

\paragraph{Corollary~\ref{cor1} restated:}
    For any $D_{prv}$, $m\geq1$, $\eps>0$ and $\delta\in(0,1)$, the asymptotic solution to \eqref{opt1} with $B_{\gamma,S}=\Delta_d$ is given by,
\begin{align}
    \min_{\substack{\noise\in\Delta_{d}}} &\min_{\substack{\beta\in[0,1]\\\beta\geq\beta_{min}(N_{min_1},N_{min_1})}} \max_{\Ppriv\in\Delta_{d}} \mathrm{TV}(\Ppriv,P_{syn})=\left(1-\frac{1}{d}\right)\beta_{min}(1/d,1/d)\label{minTVpf}
\end{align}
and the corresponding optimum $\beta^*$ and $\noise^*$ are given by,
\begin{align}
    \noise^*&=\frac{1}{d}\mathbf{1}_{d}\label{r*}\\
    \beta^*&=\beta_{min}(1/d,1/d)\label{beta*}
\end{align}

\begin{proof}
Let $\Pprivi$ and $\noisei$ denote the $i$th elements of $\Ppriv$ and $\noise$, respectively. Consider, 
    \begin{align}
        \min_{\substack{\noise\in\Delta_{d}}} &\min_{\substack{\beta\in[0,1]\\\beta\geq\beta_{min}(N_{min_1},N_{min_1})}} \max_{\Ppriv\in\Delta_{d}} \mathrm{TV}(\Ppriv,(1-\beta)\Ppriv+\beta \noise)\nonumber\\
        &=\min_{\substack{\noise\in\Delta_{d}}} \min_{\substack{\beta\in[0,1]\\\beta\geq\beta_{min}(N_{min_1},N_{min_1})}} \max_{\Ppriv\in\Delta_{d}} \frac{\beta}{2}\sum_{i=1}^d|\Pprivi-\noisei|\\
        &=\min_{\substack{\noise\in\Delta_{d}}} \min_{\substack{\beta\in[0,1]\\\beta\geq\beta_{min}(N_{min_1},N_{min_1})}} \frac{\beta}{2}\sum_{i=1,i\neq \arg\min_j P_{nse}^{[j]}}^d\noisei+\frac{\beta}{2}(1- N_{min_1}),\quad (\text{since }\min_j P_{nse}^{[j]}=N_{min_1})\\
        &=\min_{\substack{\noise\in\Delta_{d}}} \min_{\substack{\beta\in[0,1]\\\beta\geq\beta_{min}(N_{min_1},N_{min_2})}} \beta(1- N_{min_1})\\
        &=\min_{\substack{\noise\in\Delta_{d}}} \beta_{min}(N_{min_1},N_{min_1})(1- N_{min_1})
    \end{align}
From Lemma~\ref{lem:beta-decreasing}, we know that $\beta_{min}(N_{min_1},N_{min_1})$ is decreasing in $N_{min_1}$. Then, as $\beta_{min}(N_{min_1},N_{min_1})$ and $1- N_{min}$ are both decreasing in $ N_{min_1}$, we have the optimal $ N_{min_1}=\frac{1}{d}$, and the corresponding optimal $\noise^*=\frac{1}{d}\mathbf{1}_d$. The resulting minimum TV is given in \eqref{minTVpf}.

\end{proof}

\section{Extension to General Discrete Distributions}\label{general}

In this section, we provide the extension of the privacy constraint when $\Ppriv$ and $\Ppub$ are arbitrary discrete distributions (not only normalized histograms). The privacy constraint imposes a lower bound on $\beta$ for a given $\noise$ (see Lemma~\ref{extee}). The same approach in Theorem~\ref{thm1} and Remark~\ref{remark} is then taken to obtain the optimum noise distributions.

\begin{lemma}\label{extee}
Fix $\eps>0$, $\delta\in(0,1)$, and the required number of synthetic samples $m\geq1$. Let $P_{\mathrm{nse}}\in\Delta_d$ be any noise distribution and define
$p:=\min_{y\in\{1,\dotsc,d\}} P_{\mathrm{nse}}(y)>0$. Define the sensitivity as $s=\max_i|\Pprivi-\bar{P'}_{prv}^{[i]}|$, considering all neighboring private distributions $\Ppriv,\Ppriv'$ corresponding to neighboring datasets $D_{prv},D'_{prv}$.
Then, for every $\beta\ge \beta_{\min}$, the $m$ synthetic samples drawn from $P_{syn}$ in \eqref{structure1} satisfy the DP constraint in~\eqref{dp}. 
\end{lemma}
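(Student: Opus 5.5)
The plan is to reduce the general case to a per-sample likelihood-ratio bound and then compose over the $m$ i.i.d.\ draws. We no longer have the two-point sparse structure of neighboring histograms, so this bound will be cruder than Theorem~\ref{lem1}. In the process we will specify $\beta_{\min}$ explicitly as a function of $(\eps,\delta,m,s,p)$.

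\textbf{Step 1: per-sample ratio.} Fix neighboring $D_{prv},D'_{prv}$ and write $P_{syn}=(1-\beta)\Ppriv+\beta\noise$ and $P'_{syn}=(1-\beta)\Ppriv'+\beta\noise$. For every $y$ we have $|P_{syn}(y)-P'_{syn}(y)|\le(1-\beta)s$ by the definition of $s$. We also have $P'_{syn}(y)\ge\beta p$, since $\Ppriv'(y)\ge0$, and likewise for $P_{syn}$. Hence
\begin{align*}
e^{-\eps_0(\beta)}\le\frac{P_{syn}(y)}{P'_{syn}(y)}\le e^{\eps_0(\beta)},\qquad \eps_0(\beta):=\log\Big(1+\frac{(1-\beta)s}{\beta p}\Big).
\end{align*}
The lower bound uses $P_{syn}(y)\ge P'_{syn}(y)-(1-\beta)s$ together with $P_{syn}(y)\ge\beta p$, so that $P'_{syn}(y)/P_{syn}(y)\le1+(1-\beta)s/(\beta p)$. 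Thus releasing a single sample is $\eps_0(\beta)$-DP.

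\textbf{Step 2: composition and definition of $\beta_{\min}$.} The release $D_{syn}$ is the $m$-fold product. Any pair of distributions with log-likelihood ratio bounded by $\eps_0$ is a post-processing of the binary randomized-response pair $\big(\tfrac{e^{\eps_0}}{1+e^{\eps_0}},\tfrac{1}{1+e^{\eps_0}}\big)$ versus its reverse. This is the standard dominating-pair fact; equivalently, one can invoke optimal composition (Kairouz–Oh–Viswanath). By data processing for the hockey-stick divergence and tensorization of channels, $E_{e^\eps}(P_{syn}^{\otimes m}\|P_{syn}'^{\otimes m})\le H(\beta)$, where
\begin{align*}
H(\beta):=\sum_{k=0}^m\binom{m}{k}\frac{\big[e^{k\eps_0(\beta)}-e^{\eps}e^{(m-k)\eps_0(\beta)}\big]_+}{(1+e^{\eps_0(\beta)})^m}.
\end{align*}
We then define $\beta_{\min}:=\inf\{\beta\in(0,1]:H(\beta)\le\delta\}$.

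\textbf{Step 3: monotonicity and continuity.} The quantity $\eps_0(\beta)$ is continuous and strictly decreasing in $\beta$, and $H$ is non-decreasing in $\eps_0$, because a randomized-response pair with smaller $\eps_0$ is a degradation of one with larger $\eps_0$. Hence $H$ is continuous and non-increasing in $\beta$, and $H(1)=0<\delta$, so $\beta_{\min}$ is well defined and $H(\beta)\le\delta$ for all $\beta\ge\beta_{\min}$. Alternatively, we can reuse directly the mixing-channel argument of Lemma~\ref{f_decrease}: writing $P_{syn,\beta}=\alpha P_{syn,\beta_0}+(1-\alpha)\noise$ shows that feasibility at $\beta_0$ implies feasibility at every $\beta>\beta_0$. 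This holds without reference to the specific form of $H$, and it establishes the ``for every $\beta\ge\beta_{\min}$'' claim.

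\textbf{Main obstacle.} The delicate step is Step 2: justifying the worst-case reduction to randomized response while keeping $\beta_{\min}$ computable. My first attempt would cite the dominating-pair characterization of pure $\eps_0$-DP mechanisms. If a tighter $\beta_{\min}$ is wanted, I would try to exploit $\sum_y(P_{syn}(y)-P'_{syn}(y))=0$, together with a bound on the total-variation sensitivity, to obtain a better dominating pair.
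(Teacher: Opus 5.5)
Your proof is correct, but after Step~1 it takes a genuinely different route from the paper's.

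\textbf{Shared step.} Step~1 is the paper's per-sample ratio bound. Its $L$ and $U$ are exactly $e^{\mp\eps_0(\beta)}$, with $\eta=e^{\eps_0(\beta)}-1$.

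\textbf{The paper's route for composition.} The paper does not use a dominating pair for the $m$-fold release. It proceeds as follows:
\begin{itemize}
\item It requires that the privacy-loss random variable exceed $\eps$ with probability at most $\delta$.
\item It bounds this tail by a Chernoff/Markov argument of order $\alpha$.
\item It bounds $\mathbb{E}_{P'_{syn}}[(P_{syn}/P'_{syn})^\alpha]$ using convexity of $t\mapsto t^\alpha$ on $[L,U]$ together with $\mathbb{E}_{P'_{syn}}[P_{syn}/P'_{syn}]=1$.
\item It optimizes $\alpha$ in closed form and sets $\beta_{\min}=\max\{\tilde\beta_1,\tilde\beta_2\}$ via two scalar inequalities. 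Their monotonicity in $\beta$ is asserted rather than proved.
\end{itemize}
That convexity bound equals $e^{(\alpha-1)D_\alpha}$ of your randomized-response pair, so both arguments reduce to the same binary worst case.

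\textbf{Where the routes differ.} You evaluate the hockey-stick divergence of $m$-fold randomized response exactly. The paper instead passes through the chain $E_{e^\eps}\le\mathbb{P}(\text{loss}>\eps)\le e^{(\alpha-1)(mD_\alpha-\eps)}$.

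\textbf{What your route buys.}
\begin{itemize}
\item Your $H(\beta)$ is pointwise no larger than the paper's bound for every $\alpha>1$, so your $\beta_{\min}$ is never larger than the paper's.
\item Your $\beta_{\min}$ is tight given only the per-sample ratio bound, since randomized response attains that bound.
\item It costs only an $(m+1)$-term sum to evaluate.
\item Your monotonicity argument is rigorous, whether via degradation of randomized response or via the mixing channel of Lemma~\ref{f_decrease}.
\end{itemize}
What the paper's route buys is an explicit, nearly closed-form condition once $\alpha$ is optimized.

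\textbf{One detail to state explicitly.} You should note that $\beta_{\min}>0$, because $H(\beta)\to1>\delta$ as $\beta\to0^+$. Continuity on $(0,1]$ then gives $H(\beta_{\min})\le\delta$, so the threshold itself is feasible.
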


\begin{proof}
    Consider the privacy loss random variable (PLRV) in \eqref{dp}:
    \begin{align}
        \log\frac{\mathbb{P}(Y_{P_{syn}}=\{y_1,\dotsc,y_m\})}{\mathbb{P}(Y_{P_{syn}'}=\{y_1,\dotsc,y_m\})}&=\sum_{i=1}^m\log\frac{{P_{syn}}(y_i)}{{P_{syn}'}(y_i)}
    \end{align}
Define $L(y_i)=\log\frac{{P_{syn}}(y_i)}{{P_{syn}'}(y_i)}$, which are i.i.d. Note that,
\begin{align}
    P_{syn}(y)&=(1-\beta)\Ppriv(y)+\beta \noise(y)\\
    P_{syn}'(y)&=(1-\beta)\Ppriv'(y)+\beta \noise(y)
\end{align}
where,
\begin{align}
  \Ppriv'(y)-\min\{s,\Ppriv'(y)\}\leq \Ppriv(y)\leq \Ppriv'(y)+\min\{s,1-\Ppriv'(y)\},\quad \forall y\in\{1,\dotsc,d\}
\end{align}
such that $\sum_{y=1}^d\Ppriv(y)=1$. Then,
\begin{align}\label{ratio_bounds}
   1-\frac{(1-\beta)\min\{s,\Ppriv'(y)\}}{P'_{syn}(y)}\leq \frac{P_{syn}(y)}{P_{syn}'(y)}\leq 1+\frac{(1-\beta)\min\{s,1-\Ppriv'(y)\}}{P'_{syn}(y)}
\end{align}
Note that if $s\leq\Ppriv'(y)$ the lower bound simplifies to:
\begin{align}
    1-\frac{(1-\beta)s}{P'_{syn}(y)}\geq1-\frac{(1-\beta)s}{(1-\beta)s+\beta p}=1-\frac{1}{1+\frac{\beta p}{(1-\beta)s}}
\end{align}
If $s>\Ppriv'(y)$ the lower bound simplifies to:
\begin{align}
    1-\frac{(1-\beta)\Ppriv'(y)}{(1-\beta)\Ppriv'(y)+\beta p}\geq1-\frac{1}{1+\frac{\beta p}{(1-\beta)s}}
\end{align}
Similarly, the upper bound in \eqref{ratio_bounds} is simplified to $1+\frac{(1-\beta)s}{\beta p}$. Thus, we have,
\begin{align}\label{ratio_bounds2}
   L=1-\frac{1}{1+\frac{\beta p}{(1-\beta)s}}\leq \frac{P_{syn}(y)}{P_{syn}'(y)}\leq 1+\frac{(1-\beta)s}{\beta p}=U
\end{align}
Therefore, any $t=\frac{P_{syn}(y)}{P_{syn}'(y)}$ can be written as $t=\theta L+(1-\theta)U$ for $\theta=\frac{U-t}{U-L}\in[0,1]$.

To satisfy \eqref{dp}, we need:
\begin{align}
    &\mathbb{P}\left(\sum_{i=1}^m\log\frac{P_{syn}(y_i)}{P_{syn}'(y_i)}>\eps\right)\leq\delta
    \implies\mathbb{P}\left(e^{\lambda\sum_{i=1}^m\log\frac{P_{syn}(y_i)}{P_{syn}'(y_i)}}>e^{\lambda\eps}\right)\leq e^{-\lambda\eps}\mathbb{E}\left[e^{\lambda\sum_{i=1}^m\log\frac{P_{syn}(y_i)}{P_{syn}'(y_i)}}\right]\leq\delta\label{needp2}
\end{align}
Now, let $\lambda=\alpha-1$ and consider,
\begin{align}
    \mathbb{E}\left[e^{\lambda\sum_{i=1}^m\log\frac{P_{syn}(y_i)}{P_{syn}'(y_i)}}\right]&=\sum_{y_1,\dotsc,y_m}\prod_{i=1}^mP_{syn}(y_i)e^{(\alpha-1)\sum_{i=1}^m\log\frac{P_{syn}(y_i)}{P_{syn}'(y_i)}}\\
    &=\sum_{y_1,\dotsc,y_m}\prod_{i=1}^mP_{syn}(y_i)\left(\prod_{i=1}^m\frac{P_{syn}(y_i)}{P_{syn}'(y_i)}\right)^{\alpha-1}\\
    &=\prod_{i=1}^m\sum_{y_i}P_{syn}(y_i)^\alpha P_{syn}'(y_i)^{1-\alpha}\\
     &=\prod_{i=1}^me^{\log\sum_{y_i}P_{syn}(y_i)^\alpha P_{syn}'(y_i)^{1-\alpha}}\\
     &=\prod_{i=1}^me^{(\alpha-1)D_\alpha(P_{syn}||P_{syn}')}\\
     &=e^{(\alpha-1)m D_\alpha(P_{syn}||P_{syn}')}\label{chernoff2}
\end{align}
From \eqref{needp2}, we have the privacy constraint simplified to,
\begin{align}\label{needdpnext}
    m\log\left(\sum_{y}P'_{syn}(y)\left(\frac{P_{syn}(y)}{P'_{syn}(y)}\right)^\alpha\right)-(\alpha-1)\eps\leq\log\delta
\end{align}
Note that $\mathbb{E}_{P'_{syn}}[t^\alpha]=\sum_{y}P'_{syn}(y)\left(\frac{P_{syn}(y)}{P'_{syn}(y)}\right)^\alpha$ and $h(t)=t^\alpha$ is convex. Therefore,
\begin{align}
    t^\alpha&=(\theta L+(1-\theta)U)^\alpha\leq\theta L^\alpha+(1-\theta)U^\alpha\\
    \implies\quad \mathbb{E}_{P'_{syn}}[t^\alpha]&\leq\mathbb{E}_{P'_{syn}}[\theta]L^\alpha+\mathbb{E}_{P'_{syn}}[1-\theta]U^\alpha\\
    &=\frac{U-1}{U-L}L^\alpha+\frac{1-L}{U-L}U^\alpha\\
    &=\frac{\eta+1}{\eta+2}\frac{1}{(\eta+1)^\alpha}+\frac{1}{\eta+2}(1+\eta)^\alpha
\end{align}
where $\eta=\frac{(1-\beta)s}{\beta p}$. Then, from \eqref{needdpnext}, the privacy constraint simplifies to:
\begin{align}\label{privacy_gen}
    m\log\left(\frac{\eta+1}{\eta+2}\frac{1}{(\eta+1)^\alpha}+\frac{1}{\eta+2}(1+\eta)^\alpha\right)-(\alpha-1)\eps\leq\log\delta
\end{align}
Next, we find the optimum $\alpha^*>1$ that minimizes the LHS of \eqref{privacy_gen} with other parameters fixed. Let $f(\alpha)=m\log\left(\frac{\eta+1}{\eta+2}\frac{1}{(\eta+1)^\alpha}+\frac{1}{\eta+2}(1+\eta)^\alpha\right)-(\alpha-1)\eps$ and $h(\alpha)=\frac{\eta+1}{\eta+2}\frac{1}{(\eta+1)^\alpha}+\frac{1}{\eta+2}(1+\eta)^\alpha$. Then,
\begin{align}
    f'(\alpha)&=m\frac{h'(\alpha)}{h(\alpha)}-\eps\\
    h'(\alpha)&=\frac{(\eta+1)^\alpha}{\eta+2}\log(\eta+1)-\frac{\eta+1}{\eta+2}(\eta+1)^{-\alpha}\log(\eta+1)=\frac{\log(\eta+1)}{\eta+2}((\eta+1)^\alpha-(\eta+1)^{1-\alpha})\\
    h''(\alpha)&=\frac{\log^2(\eta+1)}{\eta+2}((\eta+1)^\alpha+(\eta+1)^{1-\alpha})\\
    f''(\alpha)&=\frac{2m(h''(\alpha)h(\alpha)-h'(\alpha)^2)}{h^2(\alpha)}=\frac{m}{h^2(\alpha)}\frac{\log^2(\eta+1)}{(\eta+2)^2}(\eta+1)>0
\end{align}
Thus, $f(\alpha)$ has a unique minimum at $m\frac{h'(\alpha)}{h(\alpha)}-\eps=0$, which simplifies to:
\begin{align}\label{optalpha1}
    \alpha^*=\frac{\log\left(\frac{(\eta+1)(m\log(\eta+1)+\eps)}{m\log(\eta+1)-\eps}\right)}{2\log(\eta+1)}
\end{align}
which requires $\eps< m\log(\eta+1)$. For  $\eps\geq m\log(\eta+1)$, we can easily show that $f'(\alpha)<0$ and the optimum $\alpha$ is given by $\alpha^*\to\infty$, at which $f(\alpha)=-\infty$, satisfying \eqref{needdpnext} for any $\beta$.

Note that \eqref{optalpha1} needs to satisfy $\alpha^*>1$. For this, we need:
\begin{align}\label{optbeta1}
    \frac{(1-\beta)s}{\beta p}\left(\log\left(1+\frac{(1-\beta)s}{\beta p}\right)-\frac{\eps}{m}\right)\leq \frac{2\eps}{m}
\end{align}
Since the LHS of \eqref{optbeta1} is decreasing in $\beta$, any $\beta\geq\tilde{\beta}_1$ satisfies \eqref{optbeta1}, where $\tilde{\beta}_1$ is the solution to $\beta$ in \eqref{optbeta1} with equality.

Now, plugging $\alpha^*$ back in the privacy constraint in \eqref{privacy_gen} gives:
\begin{align}\label{optbeta2}
    m\log\left(\frac{2m\log(\eta+1)}{\eta+2}\sqrt{\frac{\eta+1}{m^2\log^2(\eta+1)-\eps^2}}\right)-\frac{\eps}{2\log(\eta+1)}\left(\log\left(\frac{m\log(\eta+1)+\eps}{m\log(\eta+1)-\eps}\right)-\log(\eta+1)\right)\leq\log\delta
\end{align}
with $\eta=\frac{(1-\beta)s}{\beta p}$. Since the LHS of \eqref{optbeta2} is decreasing in $\beta$, any $\beta\geq\tilde{\beta}_2$ satisfies \eqref{optbeta2}, where $\tilde{\beta}_2$ is the solution to $\beta$ in \eqref{optbeta2} with equality. Thus, we have,
\begin{align}
    \beta_{min}=\max\{\tilde{\beta}_1,\tilde{\beta}_2\}
\end{align}
where $\tilde{\beta}_1,\tilde{\beta}_2$ are functions of $\eps,\delta,m,n,s$, and $p=\min_{y\in\{1,\dotsc,d\}}\noise(y)$.

\end{proof}

\section{Implementation Details}
\label{sec:exp_details}

\subsection{Implementation Details for the Experiment on Person Activity}

We evaluate tabular DP synthetic data generation on the Person Activity dataset under a subject-based split to ensure there is no user overlap between private and public data. The dataset contains 164,860 total samples with 5 numerical features and 1 categorical feature, and the prediction target consists of 11 activity classes. We construct the private dataset from users with IDs $\{1,2,4\}$ (107,237 samples) and the public dataset from users with  IDs $\{3,5\}$ (57,623 samples). 

Across methods, we consider privacy budgets $\varepsilon \in \{1.0, 2.0, 4.0\}$. We set $\delta = 1/(n\log n)$, where $n$ is the private dataset size. For the PE-based synthesis pipeline, we use 15 epochs with 3 sampling epochs, generate 5000 samples per sampling stage, and do 3 variations for candidate generation. For the PE+PubMIX variant, we keep the same base configuration and additionally set $\gamma$ with 0.9. For graphical-model baselines AIM, GSD, and GEM, we use the default configuration with degree=2. For PrivSyn, we use threshold with 20000 and 50 iterations as the default setting.

\subsection{Implementation Details for the Experiment on Yelp}

For text synthesis, we use the Yelp reviews corpus as the private training set (approximately 1.9M samples) with the standard dev/test splits. As the generator LLM, we use Llama-2-7B \cite{touvron2023llama} and enable stochastic decoding. Using PE, we run the synthesis procedure for 10 epochs. At each epoch, we generate 5{,}000 synthetic samples and set the variation degree to 0.5. To support selection via semantic similarity, we embed sentences using Sentence-BERT \cite{reimers-2019-sentence-bert} and perform nearest-neighbor matching with L2 distance. For privacy accounting, we consider $\varepsilon \in \{1,2,4\}$ and use the corresponding noise multipliers $\{15.34, 8.03, 4.24\}$, respectively. In addition to PE-style synthesis, we include PE+PubMix configured with $\delta = 10^{-5}$ and $\gamma = 0.1$.

For downstream utility evaluation, we fine-tune a RoBERTa-based classifier/regressor \cite{DBLP:journals/corr/abs-1907-11692} on the generated synthetic reviews for two tasks: (i) business category prediction (10-class classification) and (ii) review star prediction (regression over 1--5). We train for 3 epochs with batch size 64, learning rate $3\times 10^{-5}$, and maximum sequence length 64. We select the best checkpoint based on validation performance. We report accuracy for the category task and MAE for the star regression task. 

The results for the same experiment as in Table~\ref{tab:yelp_text}, with a smaller public dataset (reviews from only MA as opposed to 10 states) is given in Table~\ref{tab:yelp_text_MA}.

\begin{table}[t]
\centering
\caption{Utility on Yelp Reviews under varying privacy budgets (public dataset = Google reviews from MA).
We report mean $\pm$ std over multiple runs.
Category is classification accuracy (\%, higher is better) and Rating is RMSE (lower is better).} 
\label{tab:yelp_text_MA}
\resizebox{0.6\linewidth}{!}{
\begin{tabular}{llccc}
\toprule
\textbf{Metric} & \textbf{Method} & $\boldsymbol{\varepsilon=1}$ & $\boldsymbol{\varepsilon=2}$ & $\boldsymbol{\varepsilon=4}$ \\
\midrule
\textbf{Category} ($\uparrow$) & PE \scriptsize{\cite{api_text}} & \mstd{\textbf{71.67}}{\textbf{0.21}} & \mstd{70.69}{0.41} & \mstd{69.93}{0.38} \\
                              & \textsc{PubMix} & \mstd{71.03}{0.57} & \mstd{\textbf{72.02}}{\textbf{0.42}} & \mstd{\textbf{71.63}}{\textbf{0.06}} \\
\midrule
\textbf{Rating} ($\downarrow$) & PE \scriptsize{\cite{api_text}}& \mstd{0.898}{0.025} & \mstd{0.905}{0.009} & \mstd{0.889}{0.021} \\
                                   & \textsc{PubMix} & \mstd{\textbf{0.874}}{\textbf{0.011}} & \mstd{\textbf{0.887}}{\textbf{0.027}} & \mstd{\textbf{0.869}}{\textbf{0.010}} \\
\bottomrule
\end{tabular}
}
\end{table}



\end{document}